\lstdefinelanguage{algo}{%
  morekeywords={function,push,pop,top,for,all,and,or,not,if,then,else,repeat,until,while,do,report,return,such,that,int,stack,end,delete,let,procedure,skip}
}
\newcommand{\downbox}[1]{\raisebox{-1ex}{\scriptsize $#1$}}
\newcommand{\norm}{\mathit{norm}}
\newcommand{\lleft}{\mathit{left}}
\newcommand{\rright}{\mathit{right}}
\newcommand{\tto}{\ensuremath{\Rightarrow}}
\newcommand{\abstr}{\ensuremath{\mathfrak{a}}}
\newcommand{\xra}[1]{\ensuremath{\xrightarrow{#1}}}
\newcommand{\extra}{\ensuremath{\mathsf{Extra}}}
\newcommand{\extraLUp}{\extra_{LU}^+}
\newcommand{\Rpos}{\mathbb{R}_{\ge 0}}
\newcommand{\val}{v}
\newcommand{\vali}{\mathbf{0}}
\newcommand{\CC}{\Phi}
\newcommand{\RR}{\mathbb{R}}
\renewcommand{\Acc}{F}
\renewcommand{\Nat}{\mathbb{N}}
\newcommand{\Integers}[0]{\ensuremath{\mathbb{Z}}}
\newcommand{\ZG}{\ensuremath{ZG}}
\newcommand{\abs}{\abstr}
\newcommand{\ggeq}{\succcurlyeq}
\newcommand{\lleq}{\preccurlyeq}
\title{Fast detection of cycles in timed automata}
\author{}
\institute{}
\begin{document}
\author{Aakash Deshpande\inst{1}, Fr\'ed\'eric Herbreteau\inst{2},  
  B. Srivathsan\inst{3}, Thanh-Tung Tran \inst{2} and Igor Walukiewicz\inst{2}}
\institute{
  Indian Institute of Technology Bombay, Mumbai, India
  \and
  Univ. Bordeaux, CNRS, LaBRI, UMR 5800, F-33400 Talence, France
  \and
  Chennai Mathematical Institute, Chennai, India
}

\maketitle

\begin{abstract}
  We propose a new efficient algorithm for detecting if a cycle in a
  timed automaton can be iterated infinitely often. Existing methods
  for this problem have a complexity which is exponential in the
  number of clocks. Our method is polynomial: it essentially does a
  logarithmic number of zone canonicalizations. This method can be
  incorporated in algorithms for verifying B\"uchi properties on timed
  automata. We report on some experiments that show a significant
  reduction in search space when our iteratability test is used.
\end{abstract}

\section{Introduction}

% Checking Buchi properties on timed automata models is important.

Timed automata~\cite{AD:TCS:1994} are one of the standard models of
timed systems. There has been an extensive body of work on the
verification of safety properties on timed automata. In contrast,
advances on verification of liveness properties are much less
spectacular due to fundamental challenges that need to be
addressed. For verification of liveness properties, expressed in a
logic like Linear Temporal Logic, it is best to consider a slightly
more general problem of verification of B\"uchi properties. This means
verifying if in a given timed automaton there is an infinite path
passing through an accepting state infinitely often.

Testing B\"uchi properties of timed systems can be surprisingly
useful.  We give an example in Section~\ref{sec:algorithm} where we
describe how with a simple liveness test one can discover a typo in
the benchmark CSMA-CD model. This typo removes practically all
interesting behaviours from the model.  Yet the CSMA-CD benchmark has
been extensively used for evaluating verification tools, and nothing
unusual has been observed. Therefore, even if one is interested solely
in verification of safety properties, it is important to ``test'' the
model under consideration, and for this B\"uchi properties are
indispensable.

% Algorithms for Buchi verification are more challenging than
% reachability.

Verification of timed automata is possible thanks to zones and their
abstractions
\cite{DT:TACAS:1998,Behrmann:STTT:2006,Herbreteau:CAV:2013}. Roughly,
the standard approach used nowadays for verification of safety
properties performs breadth first search (BFS) over the set of pairs
(state, zone) reachable in the automaton, storing only pairs with the
maximal abstracted zones (with respect to inclusion).

The fundamental problem in extending this approach to verification of
B\"uchi properties is that it is no longer sound to keep only maximal
zones with respect to inclusion. Laarman et
al.~\cite{Laarman:CAV:2013} recently studied in depth when it is sound
to use zone inclusion in nested depth first search (DFS). The
inability to make use of zone inclusions freely has a very important
impact on the search space: it can simply get orders of magnitude
larger, and this indeed happens on standard examples.

The second problem with verification of B\"uchi properties is that
fact that we need to use DFS instead of BFS to detect cycles. It has
been noted in numerous contexts that longer sequences of transitions
lead generally to smaller zones~\cite{Beh05}. This is why BFS would
often find the largest zones first, while DFS will get very deep into
the model and consider many small zones; these in turn will be made
irrelevant later with a discovery of a bigger zone closer to the
root. Once again, on standard examples, the differences in the size of
the search space between BFS and DFS exploration are often
significant~\cite{Beh05}, and it is rare to find instances where DFS
is better than BFS.

% We give a test for w-iteratability with an aim to ease this
% bottleneck.

In this paper we propose an efficient test for checking
$\w$-iterability of a path in a timed automaton. By this we mean
checking if a given sequence of timed transitions can be iterated
infinitely often. We then use this test to ease the bottleneck created
by the two above mentioned problems. While we will still use DFS, we
will invoke $\w$-iterability test to stop exploration as early as
possible. In the result, we will not gain anything if there is no
accepting loop, but if there is one we will often discover it much
quicker.

% Theoretical value of our test:

\begin{figure}[tbhp]\label{fig:example}
  \centering
  \includegraphics[scale=.6]{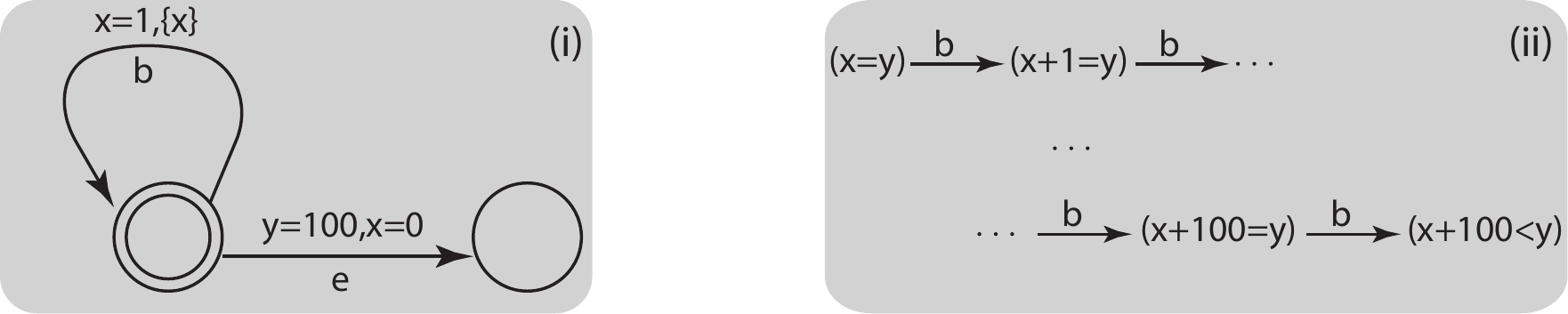}
  \caption{Iterability of $b$ transtion in timed automaton (i)
    requires a long exloration in the zone graph (ii).}
\end{figure}
An example of $\w$-iterability checking is presented in
Figure~\ref{fig:example}. The zone graph of the automaton has $100$
states due to the difference between $y$ and $x$ that gets bigger
after each iteration of $b$ transition. As the maximum constant for
$y$ is $100$, the zones obtained after $x + 100 = y$ will get
abstracted to $x + 100 < y$, giving a cycle on the zone graph. Without
iterability testing we need to take all these transitions to conclude
that $b$ can be iterated infinitely often. In contrast, our
iterability test applied to $b$ transition can tell us this
immediately.

A simple way to decide if a sequence of timed transitions $\s$ is
iterable is to keep computing successive regions along $\s$ till a
region repeats. However one might have to iterate $\s$ as many times
as the number of regions. Hence the bound on the number of iterations
given by this approach is $\Oo(|\s| \cdot M! \cdot 2^{n})$ where $n$
is the number of clocks and $M$ is the maximum constant occurring in
the automaton. Using zones instead of regions does not change much: we
still may need to iterate $\s$ an exponential number of times in $n$.

Our solution uses transformation matrices which are zone like
representations of the effect of a sequence of timed transitions. Such
matrices have already been used by Comon and Jurski in their work on
flattening of timed automata~\cite{ComJur99}. By analysing properties of
these matrices we show that $n^2$ iterations of $\s$ are sufficient
to determine $\w$-iterability. Moreover, instead of doing these
iterations one can simply do $\log(n^2)$ compositions of transformation
matrices. As a bonus we obtain a zone describing all the valuations
from which the given sequence of transitions is $\w$-iteratable.  The
complexity of the procedure is $\Oo((|\s| + \log n) \cdot n^3)$.

 One should bear in mind that $n$ is
usually quite small, as it refers only to active clocks in the
sequence. Recall for example that zone canonisation, that is
$\Oo(n^3)$ algorithm, is anyway evoked at each step of the exploration
algorithm. If we assume that arithmetic operations can be
done in unit time, the complexity of our algorithm does not depend on
$M$.

One should bear in mind that $n$ is usually quite small, as it refers
only to active clocks in the sequence. Recall for example that zone
canonicalisation, that is an $\Oo(n^3)$ algorithm, is anyway evoked at
each step of the exploration algorithm. If we assume that arithmetic
operations can be done in unit time, the complexity of our algorithm
does not depend on $M$.

% Practical value of our test

Apart from the theoretical gains of the $\w$-iteratability test
mentioned above, we have also observed substantial gains on standard
benchmarks. We have done a detailed examination of standard timed
models as used, among others, in~\cite{Laarman:CAV:2013}. We explain
in Section~\ref{sec:algorithm} why when checking for false B\"uchi
properties that do not refer to time on these models, the authors of
op. cit. have observed almost immediate response. Our experiments show
that even on these particular models, if we consider a property that
refers to time, the situation changes completely. We present examples
of timed B\"uchi properties where iteration check significantly
reduces the search space.

\paragraph{Related work}

Acceleration of cycles in timed automata is technically the closest
work to ours. The binary reachability relation of a timed automaton
can be expressed in Presburger arithmetic by combining cycle
acceleration~\cite{Comon:CAV:1998,Bozga:ICALP:2006,Boigelot:CAV:2006}
and flattening of timed automata~\cite{ComJur99}. Concentrating on
$\w$-iterability allows to assume that all clock variables are reset
on the path. This has important consequences as variables that are not
reset can act as counters ranging from $0$ to the maximal constant $M$
appearing in the guards. Indeed, since checking emptiness of flat
automata is in \PTIME, general purpose acceleration techniques need to
introduce a blowup. By concentrating on a less general problem of
$\w$-iterability, we are able to find simpler and more efficient
algorithm.

The $\w$-iterability question is related to proving termination of
programs. A closely related paper is~\cite{BozIosKon12} where the
authors study conditional termination problem for a
transition given by a difference bound relation. The semantics of the
relation is different though as it is considered to be over integers
and not over positive reals as we do here. So, for example, a cycle
that strictly decreases the difference between two clocks is iterable
in our sense but not in~\cite{BozIosKon12}. More precisely,
Lemma~\ref{lem:w-iteratability} of this paper is not true over integers. The
decision procedure in op.\ cit.\ uses policy iteration algorithm, and
is exponential in the size of the matrix.

Tiwari~\cite{Tiw04} shows decidability of termination of simple loops
in programs
where variables range over reals and where the body is a set of linear
assignments. This setting is different from ours since programs are
deterministic and hence every state has a unique trajectory. Proving
termination and nontermination for a much more general class of
programs is addressed in~\cite{CooGulLev08,GupHenMaj08}

The already mentioned work of Laarman et al.~\cite{Laarman:CAV:2013}
examines in depth the problem of verification of B\"uchi properties of
timed systems. It focuses on parallel implementation of a modification
of the nested DFS algorithm. In Section~\ref{sec:algorithm} we report
on the experiments of using $\w$-iterability checking in the (single
processor) implementation of this algorithm.

\paragraph{Organization of the paper} Section~\ref{sec:preliminaries}
starts with required preliminaries. Section~\ref{sec:w-iteratability}
studies $\w$-iteratability, and describes the transformation graphs
that we use to detect $\w$-iterability. Certain patterns in these
graphs would help us conclude iteratability as shown in
Section~\ref{sec:pattern-makes-w}. Finally,
Section~\ref{sec:algorithm} gives some experimental results.

%%% Local Variables: 
%%% mode: latex
%%% TeX-master: "m"
%%% End: 

\section{Preliminaries}
\label{sec:preliminaries}

We adopt standard definition of timed automaton without diagonal
constraints. Below we state formally the B\"uchi non-emptiness
problem, and outline  how it is solved using zones and abstractions of
zones.  All the material in this section is well-known.
% Timed automata

Let $\Rpos$ denote the set of non-negative reals. A \emph{clock} is a
variable that ranges over $\Rpos$. Let $X =\{x_0, x_1, \dots, x_n \}$
be a set of clocks. The clock $x_0$ will be special and will represent
the reference point to other clocks. A \emph{valuation} is a function
$\val\,:\,X\rightarrow\Rpos$ that is required to map $x_0$ to $0$. The
set of all clock valuations is denoted by $\Rpos^X$. We denote by
$\vali$ the valuation that associates $0$ to every clock in $X$.

A \emph{clock constraint} $\phi$ is a conjunction of constraints of
the form $x \sim c$ where $x\in X \setminus \{x_0\}$, $\sim
\in\{<,\leq,=,\geq,>\}$ and $c\in \Nat$. For example, $(x_1 \le 3
\wedge x_2 > 0)$ is a clock constraint. Let $\CC(X)$ denote the set of
clock constraints over the set of clocks $X$.  We write $\val\sat\phi$
when the valuation satisfies the constraint. We denote by $\val+\d$ the
valuation where $\d$ is added to every clock; and by $[R]\val$ we denote
the valuation where all clocks from $R$ are set to $0$.
% A valuation $\val$ is
% said to satisfy a constraint $\phi$, written as $\val \sat \phi$, when
% every constraint in $\phi$ holds after replacing every $x$ by
% $\val(x)$. For $\d\in\Rpos$, let $\val+\d$ be the valuation that
% associates $\val(x)+\d$ to every clock $x$. For $R\subseteq X$, let
% $[R]\val$ be the valuation that sets $x$ to $0$ if $x\in R$, and that
% sets $x$ to $\val(x)$ otherwise.

%\begin{definition}[Timed B\"uchi Automata]
  A \emph{Timed B\"uchi Automaton (TBA in short)}~\cite{AD:TCS:1994}
  is a tuple $\Aa=(Q,q_0,X,T,\Acc)$ in which $Q$ is a finite set of
  states, $q_0$ is the initial state, $X$ is a finite set of clocks,
  $\Acc \subseteq Q$ is a set of accepting states, and $T\,\subseteq\,
  Q\times\CC(X)\times 2^X \times Q$ is a finite set of transitions of
  the form $(q,g,R,q')$ where $g$ is a clock constraint called the
  \emph{guard}, and $R$ is a set of clocks that are \emph{reset} on
  the transition from $q$ to $q'$.
%\end{definition}

The semantics of a TBA $\Aa = (Q, q_0, X, T, \Acc)$ is given by a
transition system of its configurations. A \emph{configuration} of
$\Aa$ is a pair $(q,v) \in Q \times \Rpos^X$, with $(q_0, \vali)$
being the initial configuration. The transitions here are of the form:
$(q,v)\xra{t}^{\downbox{\d}}(q',v')$ for some transition $t =
  (q,g,R,q')\in T$ such that $(v+\d)\sat g$ and $v'=[R](v+\d)$.

A \emph{run} of $\Aa$ is a (finite or infinite) sequence of
transitions starting from the initial configuration $(q_0,
\vali)$.  
A configuration $(q, v)$ is said to be \emph{accepting} if $q \in
\Acc$. An infinite run \emph{satisfies the B\"uchi condition} if it
visits accepting configurations infinitely often. The run is
\emph{Zeno} if time does not diverge, that is, $\sum_{i\geq 0} \d_i
\le c$ for some $c \in \Rpos$. Otherwise it is \emph{non-Zeno}. The
problem we are interested is termed the \emph{B\"uchi non-emptiness
  problem}.

% Buchi non-emptiness problem

\begin{definition}
  \label{defn:language_tba}
  The \emph{B\"uchi non-emptiness problem} for TBA is to decide if
  $\Aa$ has a non-Zeno run satisfying the B\"uchi condition.
\end{definition}

The B\"uchi non-emptiness problem is known to be
\PSPACE-complete~\cite{AD:TCS:1994}. All solutions to this problem
construct an untimed B\"uchi automaton with an equivalent
non-emptiness problem. We describe such a translation below.

The standard algorithms on timed automata consider special sets of
valuations called \emph{zones}. A zone is a set of valuations
described by a conjunction of two kinds of constraints: either $x_i
\sim c$ or $x_i - x_j \sim c$ where $x_i, x_j \in X$, $c \in
\Integers$ and $\sim \in \{ <, \le, =, >, \ge \}$. For example $x_1 >
3$ and $x_2 - x_1 \le -4$ is a zone. Zones can be efficiently
represented by Difference Bound Matrices
(DBMs)~\cite{Dill:AVMFSS:1989}.

The \emph{zone graph} $\ZG(\Aa)$ of a TBA $\Aa$ is a B\"uchi automaton
$(S, s_0, \tto, F)$, where $S$ is the set of states, $s_0$ is the
initial state and $\tto$ is the transition relation. Each state in $S$
is a pair $(q, Z)$ consisting of a state $q$ of the TBA and a zone
$Z$. The initial node $s_0$ is $(q_0, Z_0)$ where $Z_0 = \{ \vali+
\d~|~ \d \in \Rpos\}$. For every $t = (q, g, R, q') \in T$, there
exists a transition $\tto^t$ from a node $(q,Z)$ as follows:
\begin{align*}
  (q, Z) \tto^t (q', Z') \qquad \text{where } Z' =\set{ \val'~|~
    \exists \val \in Z, ~\exists \d \in \Rpos: (q,\val) \xra{t}^{\d}
    (q',\val')}
\end{align*}
The transition relation $\tto$ is the union of $\tto^t$ over all $t
\in T$. It can be shown that if $Z$ is a zone, then $Z'$ is a
zone. % A DBM representation of a zone $Z$ is
% a $|X|+1$ square matrix $(Z_{ij})_{i,j\in[0;|X|-1]}$ where each
% entry $Z_{ij}=(c_{ij},\preccurlyeq_{ij})$ represents the constraint
% $x_i-x_j\preccurlyeq_{ij} c_{ij}$ for $c_{ij}\in\Integers$ and
% $\preccurlyeq_{ij}\in\{<,\leq\}$ or $(c_{ij}, \preccurlyeq_{ij}) =
% (\infty, <)$. The special variable $x_0$ encodes the value
% $0$. Hence, in a DBM $x_i > 4$ is encoded as $x_0 - x_i < -4$. Given
% $(q, Z)$ with $Z$ represented as a DBM, the DBMs representing the
% successors $ (q,Z) \tto^t (q', Z')$ with respect to transitions $t$
% can be efficiently computed.
Although the zone graph $\ZG(\Aa)$ deals with sets of valuations
instead of valuations themselves, the number of zones is still
infinite~\cite{DT:TACAS:1998}.

For effectiveness, zones are further abstracted. An \emph{abstraction
  operator} is a function $\abstr:\Pp(\Rpos^{|X|})\to\Pp(\Rpos^{|X|})$
such that $W\incl\abstr(W)$ and $\abstr(\abstr(W))=\abstr(W)$ for
every set of valuations $W \in \Pp(\Rpos^{|X|})$.  If $\abstr$ has a
finite range then this abstraction is said to be finite.  An
abstraction operator defines an abstract symbolic semantics:
\begin{align*}
  (q,Z) \tto^t_{\abs }(q',\abs(Z')) \qquad \text{ when $\abs(Z)=Z$ and
    $(q,Z)\tto^t (q',Z')$ in $\ZG(\Aa)$}
\end{align*}
We define a transition relation $\tto_\abs$ to be the union of
$\tto^t_\abs$ over all transitions $t$.  Given a finite abstraction
operator $\abs$, the \emph{abstract zone graph} $\ZG^\abs(\Aa)$ is a
B\"uchi automaton whose nodes consist of pairs $(q, Z)$ such that $Z =
\abs(Z)$. The initial state is given by $(q_0, \abs(Z_0))$ where
$(q_0, Z_0)$ is the initial state of $\ZG(\Aa)$. Transitions are given
by the $\tto_\abs$ relation. The accepting states are nodes $(q, Z)$
with $q \in F$. 

The abstraction operator $\extraLUp$~\cite{Behrmann:STTT:2006} is a
commonly used efficient abstraction operator. For concreteness we will
adopt this operator in the paper, but our approach can be also used
with other abstraction operators as for example
$a_{LU}$~\cite{Herbreteau:LICS:2012}. Thus we will consider the Buchi automaton given
by $\ZG^{\extraLUp}(\Aa)$, which is the abstract zone graph of $\Aa$ with respect to
$\extraLUp$. The following fact guarantees soundness of this approach.

% A \emph{path} $\pi$ in $\ZG^\abs(\Aa)$ is a (finite or infinite)
% sequence of transitions
% \begin{align*}
%   (q_0, Z_0') \tto^{t_0}_\abs (q_1, Z_1') \tto^{t_1}_\abs \cdots
%   (q_i, Z_i') \tto^{t_i}_\abs \cdots
% \end{align*}

% We say that a run $\rho$: $(q_0,\vali)\xra{t_0}^{\d_0}\dots
% (q_i,\val_i)\xra{t_i}^{\d_i}\dots$ of $\Aa$ is an \emph{instance} of
% the path $\pi$ of $\ZG^\abs(\Aa)$ as described above, if $\rho$ and
% $\pi$ agree on the sequence of transitions $t_0, t_1, \dots$, and if
% for every $i\geq 0$, $(q_i,\val_i)$ and $(q_i,Z_i')$, we have
% $\val_i\in Z_i'$.

% An abstraction $\abs$ is \emph{sound} if every path of
% $\ZG^\abs(\Aa)$ can be instantiated as a run of $\Aa$. Conversely,
% $\abs$ is \emph{complete} when every run of $\Aa$ is an instance of
% some path in $\ZG^\abs(\Aa)$. It has been shown that the abstraction
% operator $\extraLUp$~\cite{Behrmann:STTT:2006} is sound and complete
% and hence can be used to decide the B\"uchi condition.

\begin{theorem}[\cite{Li:FORMATS:2009}]\label{thm:LU-for-Buchi}
  A timed B\"uchi automaton $\Aa$ has a B\"uchi run iff the finite
  B\"uchi automaton $\ZG^{\extraLUp}(\Aa)$ has one.
\end{theorem}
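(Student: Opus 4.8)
The plan is to establish the two implications separately, each being one half of the statement that $\ZG^{\extraLUp}(\Aa)$ is at once a \emph{complete} and a \emph{sound} symbolic representation of the infinite runs of $\Aa$. I will read ``B\"uchi run'' as ``infinite run visiting accepting configurations infinitely often''; non-Zenoness can be assumed away by the usual construction that adds one auxiliary clock forcing at least one time unit to elapse between infinitely many of the visited accepting configurations, which preserves the shape of the automaton and is orthogonal to everything below. The only external facts I will use about $\extraLUp$ are: (i) $Z\subseteq\extraLUp(Z)$; (ii) the range of $\extraLUp$ is finite; and (iii) $\extraLUp$ is \emph{sound for the $LU$-preorder}, i.e.\ every $v\in\extraLUp(Z)$ satisfies $v\preceq_{LU}v'$ for some $v'\in Z$, where $\preceq_{LU}$ is a time-abstract simulation of $\Aa$ by itself that is coarser than region equivalence (so that each $\extraLUp(Z)$ is a union of finitely many regions). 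Fact (iii) is what makes $\extraLUp$ more than a raw over-approximation, and I would take it from the work of Behrmann et al.\ and Herbreteau et al.

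\emph{Completeness (concrete $\Rightarrow$ abstract).} Project a B\"uchi run $\rho$ of $\Aa$ onto $\ZG(\Aa)$: its transition sequence induces an infinite path $(q_0,Z_0)\tto^{t_1}(q_1,Z_1)\tto^{t_2}\cdots$ with each valuation of $\rho$ lying in the corresponding zone, and it visits accepting nodes (those with $q_i\in\Acc$) infinitely often because $\rho$ does. Lift this path to $\ZG^{\extraLUp}(\Aa)$ by induction along its length, using $Z\subseteq\extraLUp(Z)$ and monotonicity of the post-operator; one gets an infinite path $(q_0,W_0)\tto^{t_1}(q_1,W_1)\tto^{t_2}\cdots$ with $W_i\supseteq Z_i$, still visiting accepting nodes infinitely often. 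As $\ZG^{\extraLUp}(\Aa)$ is finite, this path is eventually periodic and hence contains a reachable accepting cycle, which is a B\"uchi run of $\ZG^{\extraLUp}(\Aa)$. This direction needs neither (ii) nor (iii).

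\emph{Soundness (abstract $\Rightarrow$ concrete).} A B\"uchi run of the finite automaton $\ZG^{\extraLUp}(\Aa)$ is a lasso; unrolling it gives an eventually periodic infinite abstract path $(q_0,W_0)\tto^{t_1}(q_1,W_1)\tto^{t_2}\cdots$ visiting accepting nodes infinitely often, with all $W_i$ nonempty. I would pull it back in two stages. \emph{Stage 1 (a single simulation step):} along every finite prefix one proves, by induction on its length using (iii), that for each $v$ in the last zone there is a concrete run of $\Aa$ following $t_1\cdots t_i$ and ending in some $(q_i,u)$ with $v\preceq_{LU}u$; the inductive step factors the abstract step $W_i\to W_{i+1}$ through the concrete post-operator, picks a preimage valuation in $W_i$, invokes the induction hypothesis on it, and then uses the simulation to transport the concrete witness one transition further. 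Crucially, each such concrete run is obtained from the previous one by appending a single transition, so along a fixed ``witness lineage'' these runs are genuine prefixes of one another. \emph{Stage 2 (compactness):} organise the witnesses of Stage~1 into a tree ordered by this prefix relation, but work with one representative valuation per region of each $W_i$ rather than with all valuations; because $\preceq_{LU}$ and the post-operator respect region equivalence and each $W_i$ is a union of finitely many regions, this tree is finitely branching, and it is infinite since every $W_i$ is nonempty. K\"onig's Lemma yields an infinite branch, whose associated runs form an increasing chain of finite concrete runs; their union is an infinite concrete run following $t_1t_2\cdots$, hence visiting $\Acc$ infinitely often, i.e.\ a B\"uchi run of $\Aa$.

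I expect the soundness direction to carry essentially all the difficulty, and within it two points are delicate. First, fact (iii) itself — that $\extraLUp$ is sound for a time-abstract simulation that refines to regions — must be in hand; without it the pull-back of Stage~1 fails, and this is exactly the feature that distinguishes $\extraLUp$ from an arbitrary inclusion-based over-approximation (which, as recalled in the introduction, is unsound for B\"uchi). Second, Stage~2 is where one must resist working directly with zones or raw valuations: descending to regions is what makes the witness tree finitely branching, and it is also what lets one re-instantiate the König branch by actual valuations. Once the non-Zenoness gadget is included, an accepting cycle of $\ZG^{\extraLUp}(\Aa)$ forces the gadget clock to be reset infinitely often with a bounded-below delay in between, so the run produced in Stage~2 is automatically non-Zeno, which closes the reduction.
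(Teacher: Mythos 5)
The paper offers no proof of this statement: it is imported wholesale from Li (FORMATS~2009), with the underlying fact that $\extraLUp(Z)$ is contained in the downward closure of $Z$ under the $LU$-simulation coming from Behrmann et al.\ and Herbreteau et al. So the only meaningful comparison is with that literature, and there your argument follows essentially the standard route: completeness from $Z\subseteq\extraLUp(Z)$ and monotonicity of the post-operator; soundness by pulling the infinite abstract path back through the time-abstract simulation and then applying K\"onig's lemma at the level of regions. Your reading of ``B\"uchi run'' is also the right one --- the paper explicitly defers the non-Zenoness issue to the conclusions --- so the auxiliary-clock gadget in your opening and closing remarks is a digression; nothing in the body of your proof uses it, and the last paragraph's claim about the produced run being non-Zeno is about a different, stronger statement.

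The one place where your write-up has a real wiring problem is Stage~2. The tree you describe is not well-defined as stated: the Stage~1 witness run attached to a valuation $v\in W_{i+1}$ extends the witness run of some particular predecessor $w\in W_i$, and if you keep only one representative valuation per region of $W_i$ there is no reason that this $w$ is the chosen representative of its region, so ``ordered by the prefix relation'' does not yield a finitely branching tree of runs. The standard repair is to use Stage~1 for less: it already shows that every finite prefix $t_1\cdots t_k$ of the abstract transition sequence is concretely executable from the initial valuation. One then applies K\"onig's lemma not to witness runs but to the region graph --- level-$k$ nodes are the regions reachable by executing $t_1\cdots t_k$ from $\vali$ --- where pre-stability of regions gives both the finite branching and the ability to re-instantiate the infinite branch by an actual infinite execution. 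You should also say explicitly that abstract transitions are only taken to nonempty successor zones (you assert ``all $W_i$ nonempty'' but the paper's definition of $\ZG^{\extraLUp}(\Aa)$ does not spell this out), since that is what makes the tree infinite. With these adjustments the argument is correct and matches the proof in the cited reference.
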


% The above theorem says that an infinite run of $\ZG^{\extraLUp}(\Aa)$
% can be instantiated to an infinite run of $\Aa$. However, it does not
% ensure that the instantiated run is non-Zeno. A \emph{strongly
%   non-Zeno} automaton is a TBA in which every accepting run is
% guaranteed to be non-Zeno. By adding an extra clock, every TBA $\Aa$
% can be converted into a strongly non-Zeno TBA whose language is the
% set of non-Zeno runs of $\Aa$~\cite{Tripakis:FMSD:2005}. In this paper
% we we will only consider strongly non-Zeno automata. We will mention
% how to handle non-Zeno requirement in the conclusions. 

Theorem~\ref{thm:LU-for-Buchi} gives an algorithm for non-emptiness of
TBA: given a timed B\"uchi automaton $\Aa$, it computes the (finite) B\"uchi
automaton $\ZG^{\extraLUp}(\Aa)$ and checks for its
emptiness. The main problem with this approach is efficiency. 
When checking for reachability it enough to consider
only maximal zones with respect to inclusion of
$\ZG^{\extraLUp}(\Aa)$. This optimization gives very important
perfomance gains, but unfortunately it is not sound for verification
of B\"uchi properties~\cite{Laarman:CAV:2013}. The above theorem does
not address the non-Zenoness issue. Since this issue does not influence our
$\w$ iterability test, we defer it to the conclusions. 

% The graph $\ZG^{\extraLUp}(\Aa)$ is used in efficient algorithms for reacahbility
% checking, i.e. checking if a state from $F$ can be reached from the
% initial state. This test is efficient since it is enough  This means that one can simply delete from the
% graph a pair $(q,Z)$ if there is a pair $(q,Z')$ with $Z\subsetneq
% Z'$. On calssical examples this modification leads to gains
% exponential with repsect to the size of the model. 

%%% Local Variables: 
%%% mode: latex
%%% TeX-master: "m"
%%% End: 

% Buchi emptiness problem for TBA to emptiness problem on BA
% Algorithms for BA emptiness
 
\section{$\w$-iterability}
\label{sec:w-iteratability}
Since we cannot use zone inclusion to cut down the search space, it
can very well happen that an exploration algorithm comes over a path
that can be iterated infinitely often but it is not able to detect it
since the initial and final zones on the path are different. In this
section we show how to test, in a relatively efficient way, if a
sequence of timed transitions can be iterated infinitely often
(Theorem~\ref{thm:iteratability-test}).

Consider a sequence of transitions $\s$ of the form
$\xra{t_1}\dots\xra{t_k}$, and suppose that $(q,Z)\tto^\s (q,Z')$. If
$Z\incl Z'$ then after executing $\s$ from $(q,Z')$ we obtain
$(q,Z'')$ with $Z''$ not smaller than $Z'$. So we can execute $\s$ one
more time etc. The challenging case is when $Z\not\incl Z'$.  The
procedure we propose will not only give a yes/no answer but will
actually compute the zone representation of the set of valuations from
which the sequence can be iterated infinitely often.  We will start by
making the notion of $\w$-iterability precise. % We
% will then observe a simple but very useful compactness property of
% $\w$-iterability. Subsequently, we will take care of some simple
% cases, and remove from $\s$ the clocks that are not relevant. In
% Section~\ref{sec:transf-graphs} we will introduce transformation
% graphs that are zone-like representations of transition
% sequences. In the following section we will characterise
% $\w$-iterability of a sequence $\s$ using a notion of a pattern
% constructed from the transition graph of $\s$. This will give us a
% $\Oo(|\s| \cdot n^3)$ procedure for checking $\w$-iterability, where
% the dominant complexity is canonization of a DBM with $3n$
% variables.

An \emph{execution} of a sequence $\s$ of the form
$\xra{t_1}\dots\xra{t_k}$ is a sequence of valuations $v_0,\dots,v_k$
such that for some $\d_1,\dots\d_k\in \RR^+$ we have
\begin{equation*}
  v_0\xra{t_1}^{\d_1 }v_1\xra{t_2}^{\d_2 }\dots\xra{t_{k}}^{\d_{k} }v_k
\end{equation*}
In this case we write $v_0\xra{\s}^\d v_k$ where $\d=\d_1+\dots+\d_k$.
The sequence $\s$ is \emph{executable from $v$} if there is a sequence
of valuations $v_0, \dots, v_k$ as above with $v=v_0$. (For clarity of
presentation, we choose not to write the state component of
configurations $(q,v)$ and instead write $v$ alone.)

\begin{definition}
  The sequence $\s$ is \emph{$\w$-iterable} if there is an infinite
  sequence of valuations $v=v_0,v_1,\dots$ and an infinite sequence of
  delays $\d_1,\d_2,\dots$ such that
  \begin{equation*}
    v_0\xra{\s}^{\d_1 }v_1\xra{\s}^{\d_2 }\dots
  \end{equation*}
  We also say that the sequence $\s$ is $\w$-iterable from $v_0$.
\end{definition}

Using region abstraction one can observe the following
characterization of $\w$-iteratability in terms of finite executions.
Observe that the lemma talks about $\w$-iterability not about
iterability from a particular valuation (that would give a weaker
statement).
\begin{lemma}
  \label{lem:w-iteratability}
  A sequence of transitions is $\w$-iterable iff for every
  $n=1,2,\dots$ the $n$-fold concatenation of $\s$ has an execution.
\end{lemma}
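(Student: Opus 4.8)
The plan is to prove both directions. The forward direction is immediate: if $\s$ is $\w$-iterable, witnessed by $v_0 \xra{\s}^{\d_1} v_1 \xra{\s}^{\d_2}\dots$, then truncating after $n$ steps gives an execution of the $n$-fold concatenation $\s^n$ from $v_0$. The real content is the converse: assuming that for every $n$ the concatenation $\s^n$ has an execution, we must produce a single infinite execution. Note this does not follow from a naive compactness argument, because the starting valuations of the witnessing executions for different $n$ may all be different and unbounded (indeed the whole point of the paper is that the relevant valuations grow), so we cannot just extract a convergent subsequence of initial valuations.

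The key idea is to pass to region equivalence. Fix the region equivalence $\simeq$ induced by the maximal constant $M$ of the automaton (enlarged if necessary to bound the guards along $\s$); there are finitely many regions, say $N$ of them, and region equivalence is a time-abstract bisimulation: if $v \simeq v'$ and $v \xra{\s}^{\d} w$, then there is $\d'$ and $w' \simeq w$ with $v' \xra{\s}^{\d'} w'$. First I would argue that there exists a \emph{single} region $r$ such that $\s$ is executable from every valuation in $r$ and lands back (for a suitable delay) in a valuation whose region is again $r$ — in other words, a region that $\s$ maps into itself. To get this, take $n = N+1$ and an execution $u_0 \xra{\s}^{\cdot} u_1 \xra{\s}^{\cdot}\dots\xra{\s}^{\cdot} u_{N+1}$ of $\s^{N+1}$; by pigeonhole two of the intermediate valuations $u_i, u_j$ with $i<j$ lie in the same region $r$. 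The segment of the execution from $u_i$ to $u_j$ shows that $\s$ (some fixed number of times, namely $j-i$) can be executed starting in $r$ and returning to $r$; replacing $\s$ by $\s^{j-i}$ for a moment, we have a region closed under one application. Then, using time-abstract bisimulation, from \emph{any} valuation in $r$ we can execute $\s^{j-i}$ and return to $r$; iterating this forever gives an infinite execution of $\s^{j-i}$, hence of $\s$ itself (each block of $j-i$ copies of $\s$ is an execution of $\s$ repeated $j-i$ times, which in particular is an execution of the single $\s$ extended — more carefully, an infinite execution of $\s^{j-i}$ is literally an infinite execution of $\s$). This yields $\w$-iterability, though a priori from some valuation in $r$, not necessarily from $v_0$; since the Lemma only claims $\w$-iterability of the sequence (from \emph{some} valuation), this suffices.

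The step I expect to be the main obstacle is the bookkeeping around using region equivalence as a time-abstract bisimulation \emph{for the compound sequence $\s$} rather than for single transitions: one must check that region equivalence is preserved under a delay followed by a guarded reset, and then compose this $k$ times along $\s$, and that the resulting $\d'$ is a genuine nonnegative real. This is standard (it is exactly the classical region-bisimulation argument of Alur–Dill), but it has to be invoked at the level of $\s$, and one must be slightly careful that the constant $M$ used to define regions is large enough to dominate all constants appearing in the guards of $\s$. A secondary, minor point to get right is the passage from "an infinite execution of $\s^{m}$ exists" to "$\s$ is $\w$-iterable": this is purely definitional unfolding, since concatenating the delay-sums appropriately shows an infinite run $v_0 \xra{\s}^{\delta_1'} v_1 \xra{\s}^{\delta_2'}\dots$ is recovered by grouping the $\s$-blocks. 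Finally, I would remark — as the paper itself flags in the related-work discussion — that this argument genuinely uses that delays range over $\Rpos$ and that region equivalence (which is coarse precisely because it lumps together unbounded clock values above $M$) is finite; over the integers the analogous "region" partition would be infinite and the statement fails.
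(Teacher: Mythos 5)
Your proposal is correct and follows essentially the same route as the paper: the forward direction by truncation, and the converse by running $\s$ enough times that two intermediate valuations fall in the same region (pigeonhole), which yields a cycle in the region graph, and then invoking the time-abstract bisimulation property of regions to iterate forever from that region. The paper compresses the bisimulation step into the phrase ``by standard properties of regions,'' which you spell out in more detail, but the argument is the same.
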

\begin{proof}
  Suppose a sequence of transitions $\sigma$ is $\w$-iterable. Then,
  clearly every $n$-fold concatenation has an execution.

  Suppose every $n$ fold concatenation of $\s$ has an execution. Let
  $r$ be the number of regions with respect to the biggest constant
  appearing in the automaton. By assumption, the $r$-fold
  concatenation of $\s$ has an execution:
  \begin{align*}
    v_0 \xra{\s}^{\d_1} v_1 \xra{\s}^{\d_2} v_2 \dots v_{r-1}
    \xra{\s}^{\d_{r-1}} v_{r}
  \end{align*}
  In the above execution, there exist $i,j$ such that $v_i$ and $v_j$
  belong to the same region $R$, thus giving a cycle in the region
  graph of the form $R \xra{(j-i)\s} R$. By standard properties of
  regions, $\s$ is $\w$-iterable from every valuation in $R$.\qed
\end{proof}

The above proof shows that $\s$ is $\w$-iterable iff it is
$(r+1)$-iterable, where $r$ is the number of regions. As $r$ is
usually very big, testing $(r+1)$-iterablity directly is not a
feasible solution. Later we will show a much better bound than
$(r+1)$, and also propose a method to avoid checking $k$-iterability
by directly executing all the $\s^k$ transitions.

Our iterability test will first consider some simple cases when the
answer is immediate using which it will remove clocks that are not
relevant. This preprocessing step is explained in the following two
facts. We use $X$ for the set of clocks appearing on a sequence of
transitions $\s$, and $X_0$ for the set of clocks that are reset on
$\s$. Let $\fleq$ denote either $<$ or $\le$, and let $\ggeq$ stand
for $>$ or $\ge$.
\begin{lemma}
  \label{lem:conditions-non-iteratable}
  A sequence of transitions $\s$ is not $\w$-iterable if it satisfies
  one of the following conditions:
  \begin{itemize}
  \item for some clock $y \in X \setminus X_0$, we have guards $y
    \fleq d$ and $y \ggeq c$ in $\s$ such that either $d < c$, or $(d
    = c \text{ and } \fleq = <)$, or $(d = c \text{ and } \ggeq = >)$,
  \item there exist clocks $x \in X_0$ and $y \in X \setminus X_0$
    involved in guards of the form $y \lleq d$ and $x \ggeq c$ such
    that $c > 0$,
  \item for some clock $x \in X_0$ there is a guard $x > 0$ in $\s$
    and for some clock $y \in X \setminus X_0$, we have both the
    guards $y \le c$ and $y \ge c$ in $\s$.
  \end{itemize}
\end{lemma}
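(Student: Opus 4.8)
The plan is to prove each of the three conditions separately by contradiction: assume $\s$ is $\w$-iterable and derive an impossibility. The common mechanism is to track, along an infinite iteration $v_0 \xra{\s}^{\d_1} v_1 \xra{\s}^{\d_2} \cdots$, the value of a clock $y \in X \setminus X_0$ that is never reset. Since $y$ is never reset on $\s$, its value only grows: after one pass of $\s$ starting from $v_i$ we have $v_{i+1}(y) = v_i(y) + (\text{total delay in that pass of }\s) \ge v_i(y)$, so the sequence $v_0(y), v_1(y), v_2(y), \ldots$ is nondecreasing. This monotonicity is the key structural observation and it will be invoked in all three cases.

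For the first condition, fix the clock $y \in X \setminus X_0$ with guards $y \fleq d$ and $y \ggeq c$ on $\s$. Within a single pass of $\s$, the guard $y \ggeq c$ must be met at some instant and the guard $y \fleq d$ at some (possibly other) instant; since $y$ only increases within a pass, meeting $y \ggeq c$ at some point forces $y \ge c$ (strictly $>$ if $\ggeq$ is $>$) to hold from that point on, while $y \fleq d$ must be satisfied at some instant of the same pass, hence $y \le d$ (strictly if $\fleq$ is $<$) somewhere in the pass. Combining, every pass of $\s$ must keep $y$ inside the interval determined by $c$ and $d$; in particular the final value $v_{i+1}(y)$ is bounded above by $d$ (or by a value $< d$). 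If $d < c$ the interval is empty — contradiction already in the first pass. If $d = c$ and $\fleq$ is $<$, we need $y < c$ and $y \ge c$ in the same pass — contradiction. Similarly for $d = c$ and $\ggeq$ is $>$. So the first pass cannot even be executed, let alone iterated.

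For the second condition, take $x \in X_0$ with guard $x \ggeq c$, $c > 0$, and $y \in X \setminus X_0$ with guard $y \lleq d$. Consider two consecutive passes of $\s$. In the second pass, $x$ was reset to $0$ at the end of the first pass (since $x \in X_0$), so at the start of the second pass $x$ and $y$ differ by exactly $v_1(y) - 0 = v_1(y)$, and this difference is preserved by time elapse and is only possibly reset for $x$. To satisfy $x \ggeq c$ during the second pass we need at least $c$ units of delay to accumulate on $x$ after its last reset within that window; but those same delays are added to $y$, forcing $y \ge v_1(y) + c$ at that instant, hence (to also satisfy $y \lleq d$ somewhere in the pass — and $y$ only grows) we need $v_1(y) \le d$ and in fact the pass pushes $y$ up by at least $c$. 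Iterating, $v_i(y) \ge v_1(y) + (i-1)c \to \infty$, contradicting the upper bound $y \le d$ that must hold in every pass. The only subtlety is to argue cleanly that $x$'s guard $x \ggeq c$ is met \emph{after} its reset on $\s$: since $\s$ resets $x$, within any pass after the first, the last reset of $x$ before the point where $x \ggeq c$ is tested lies inside that same pass (or is the reset at the boundary), so all the delay realizing $x \ge c$ is also delay added to $y$ within that pass.

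For the third condition, $x \in X_0$ has a guard $x > 0$ and $y \in X \setminus X_0$ has both $y \le c$ and $y \ge c$, i.e. $y$ is forced to equal $c$ at some instant(s) of every pass. Since $y$ only increases within a pass and across passes, once $v_i(y) = c$ is forced we get $v_{i+1}(y) \ge v_i(y) = c$, and the guard $y \le c$ in pass $i+1$ then forces $v_{i+1}(y) = c$ as well with \emph{zero} delay elapsing in pass $i+1$ up to and including the moment $y \ge c$ is tested; more carefully, the constraint $y = c$ at some point of pass $i+1$ together with $y$ nondecreasing and $v_i(y) \le c$ (inductively $= c$ from some pass on) forces that no time elapses in pass $i+1$ before that point and that $v_i(y) = c$ already. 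From some pass onward the total delay in each pass must be $0$: indeed any positive delay would push $y$ strictly above $c$ and violate $y \le c$ on the next pass. But a pass with total delay $0$ cannot satisfy the guard $x > 0$, since $x$ is reset on $\s$ and then needs a strictly positive delay to exceed $0$ — contradiction. The main obstacle here, and the one place requiring care, is the bookkeeping that forces the per-pass delay to $0$ "eventually" rather than immediately; this follows from monotonicity of $v_i(y)$ plus the pigeonhole-style observation that a nondecreasing sequence bounded above by $c$ with the property that each term forces the next into $[\,\cdot\,,c]$ must stabilize, after which all delays vanish.

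In all three cases the argument is elementary once monotonicity of the unreset clock along the iteration is in hand; I expect the second bullet to be the most delicate, because it is the only one where the interaction between a reset clock and an unreset clock must be quantified (the "$+c$ per pass" drift), whereas the first and third reduce to immediate inconsistencies within one or two passes.
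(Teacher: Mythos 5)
Your overall strategy --- exploit the fact that a clock $y \in X \setminus X_0$ is never reset, so its value is nondecreasing along the whole infinite iteration --- is exactly the paper's; the paper's own proof is a three-sentence sketch of the same idea, and your treatments of the second and third bullets are correct in essence (in the second bullet your per-pass increment $v_i(y) \ge v_1(y) + (i-1)c$ is slightly too strong, since the $c$ units of delay forced by the test of $x \ggeq c$ in pass $i$ may partly fall in pass $i-1$; one only gets an amortized bound, but the conclusion $v_i(y) \to \infty$, contradicting $y \lleq d$, survives).

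There is, however, a step that fails in your argument for the first bullet. You claim that ``every pass of $\s$ must keep $y$ inside the interval determined by $c$ and $d$'' and conclude that ``the first pass cannot even be executed.'' This is false when the guard $y \fleq d$ occurs on an earlier transition of $\s$ than the guard $y \ggeq c$: the two guards are tested at different instants, so a single execution is perfectly possible --- $y$ starts at a value $\fleq d$, time elapses, and $y$ has grown to a value $\ggeq c$ by the time the later guard is tested. Monotonicity of $y$ within one pass gives no contradiction there. What is true, and what the paper asserts, is that $\s$ cannot be iterated \emph{twice}: after the first pass the value of $y$ is $\ggeq c$ forever (it is never reset), so the guard $y \fleq d$ in the second pass cannot be satisfied when $d < c$, or when $d = c$ and one of $\fleq$, $\ggeq$ is strict. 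The repair is immediate from your own monotonicity observation, but as written the first case overclaims and the claimed one-pass contradiction does not exist.
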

\begin{proof}
  If the first condition is true, it is clear that $\s$ cannot be
  iterated even twice. In the second case, at least $c > 0$ time units
  needs to be spent in each iteration. Since $y$ is not reset, the
  value of $y$ would become bigger than $d$ after a certain number of
  iterations and hence the guard $y \lleq d$ will not be satisfied. In
  the third situation, the guard $x > 0$ requires a compulsory
  non-zero time elapse. However as $d = c$ and $y$ is not reset, a
  time-elapse is not possible and thus $\s$ is not iterable.
\end{proof}

\begin{lemma}
  \label{lem:no-time-elapse-iteratable}
  Suppose $\s$ does not satisfy the conditions given in
  Lemma~\ref{lem:conditions-non-iteratable}. If, additionally, $\s$
  has no guard of the form $x \ggeq c$ with $c > 0$ for clocks $x$
  that are reset, then $\s$ is iterable.
\end{lemma}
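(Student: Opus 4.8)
The plan is to exhibit, under the stated hypotheses, an explicit infinite execution of $\s$; since this is exactly what $\w$-iterability asks for, there is no need to go through Lemma~\ref{lem:w-iteratability}. I would split on whether $\s$ contains a guard of the form $x > 0$ for some reset clock $x \in X_0$, i.e.\ whether some reset clock is forced to let time elapse. (Throughout I work with a sequence as it arises in the exploration, so $\s$ is executable at least once.)

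Suppose first that $\s$ has no such guard. Together with the additional hypothesis — no guard $x \ggeq c$ with $c > 0$ on a reset clock — this says that every lower-bound guard occurring on a reset clock is the trivial $x \ge 0$. Take $v_0$ equal to $0$ on every reset clock, and on each $y \in X \setminus X_0$ equal to some value satisfying all guards of $\s$ mentioning $y$; such a value exists because the negation of the first condition of Lemma~\ref{lem:conditions-non-iteratable} says that for every pair of guards $y \ggeq c$, $y \fleq d$ one has $c \le d$, with equality forcing both to be non-strict, so the constraints on $y$ are jointly satisfiable. Running $\s$ from $v_0$ with all delays $0$ keeps every reset clock at $0$ and every non-reset clock at its initial value throughout, so all guards hold; and since every clock of $X_0$ is reset while no time passes, the valuation after one pass of $\s$ is again $v_0$. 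Hence $\s$ repeats forever.

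The remaining case — $\s$ has a guard $x > 0$ with $x \in X_0$ — is the substantive one, because zero delays are now impossible and the work goes into choosing the delays. The fact I would isolate first is that here every non-reset clock has room to grow slightly: for each $y \in X \setminus X_0$, every upper-bound constant $d$ occurring in a guard $y \fleq d$ is \emph{strictly} greater than every lower-bound constant $c$ occurring in a guard $y \ggeq c$. Indeed, the negation of the first condition of Lemma~\ref{lem:conditions-non-iteratable} gives $d \ge c$, and $d = c$ would force both guards non-strict, i.e.\ $y \le c$ and $y \ge c$; but with the guard $x > 0$ present, the negation of the third condition forbids exactly that. Consequently each non-reset clock $y$ has a non-degenerate feasible interval, so we may pick $v_0(y)$ strictly between all its lower-bound constants and all its upper-bound constants, leaving a positive margin $\rho_y$ below its least upper bound ($\rho_y = \infty$ if there is none).

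To conclude, fix $\eta > 0$ smaller than $1$ and than every finite $\rho_y$, let $v_0$ be $0$ on reset clocks and the chosen in-interval value on the others, and in the $m$-th pass of $\s$ give each of its $k = |\s|$ transitions the delay $2^{-m}\eta/k$, so the $m$-th pass elapses $2^{-m}\eta$ and all passes together elapse only $\eta$. Then every guard holds at every step: a guard $x > 0$ is met since the value of $x$ when it is tested is at least that transition's (positive) delay; an upper-bound guard $x \fleq d$ on a reset clock is met since such an $x$ never exceeds the total elapsed time, which stays below $\eta < 1 \le d$; lower-bound guards on reset clocks are trivial; and every guard on a non-reset clock $y$ holds since $y$ stays in $[v_0(y),\,v_0(y)+\eta]$, which lies inside its feasible interval. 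The step I expect to be the real obstacle is exactly this one — reconciling the compulsory positive elapse forced by the $x > 0$ guards with the upper-bound guards — and what makes it go through is precisely the ``room'' granted to non-reset clocks by the negations of the first and third conditions (this is also where the analogue over the integers breaks, as noted before Lemma~\ref{lem:w-iteratability}).
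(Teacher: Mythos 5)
Your proof follows essentially the same route as the paper's: the same case split on whether some reset clock carries a guard $x>0$, the same initial valuation (reset clocks at $0$, non-reset clocks at an interior point of a feasible interval whose non-degeneracy is extracted from the negations of conditions 1 and 3, exactly as the paper does), and the same device of strictly positive but shrinking delays whose total stays within the slack of every upper bound. The only differences are cosmetic: you exhibit the infinite run directly instead of concluding via Lemma~\ref{lem:w-iteratability}, and your fixed schedule $2^{-m}\eta/k$ replaces the paper's adaptive ``elapse half of the current minimum slack''; both constructions share the same implicit assumption that no upper-bound guard has constant $0$ (otherwise neither leaves room for the compulsory positive delay), so this is not a gap relative to the paper.
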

\begin{proof}
  Suppose no guard of the form $x \ggeq c$ is present for clocks $x
  \in X_0$. As $\s$ does not satisfy the conditions in
  Lemma~\ref{lem:conditions-non-iteratable}, we can find for each $y
  \in X \setminus X_0$, a value $\l_y$ that satisfies all guards
  involving $y$ in $\s$. Set $v(x) = 0$ for $x \in X_0$ and $v(y) =
  \l_y$ for all $y \in X \setminus X_0$. Consider an execution without
  time elapse from $v$. Constraints $z \lleq k$ for clocks that are
  not reset are satisfied as we started with $v(z) \lleq k$, and
  $v(z)$ has not changed.  Constraints $z \lleq k$ for clocks that are
  reset also hold as these clocks take the constant value zero.  We
  get back $v$ after the execution implying that $\s$ is iterable from
  $v$.

  Suppose no guard of the form $x \ggeq c_x$ with $c_x > 0$ is present
  for the reset clocks $x \in X_0$, but for some clock $x \in X_0$ the
  guard $x>0$ appears in $\s$. This says that a compulsory time elapse
  is required.  Again, as Conditions 1 and 3 of
  Lemma~\ref{lem:conditions-non-iteratable} are not met by $\s$, we
  can pick a $\l_y$ that satisfies all guards in $\s$ that involve
  $y$. Set $v(x) = 0$ for $x \in X_0$ and $v(y) = \l_y$ for all $y \in
  X \setminus X_0$. From among all clocks $z \in X$, we find the
  minimum constant out of $k-v(z)$ where $z$ has a guard of the form
  $z \lleq k$.  In our execution we elapse time that is half of this
  minimum, after the reset.  Constraints $z>c$ for clocks that are not
  reset are satisfied.  Constraints $z<k$ for all clocks are satisfied
  by the way in which we choose our time delay.  We obtain a similar
  valuation and can repeat this procedure to execute $\s$ any number
  of times. We can conclude by Lemma~\label{lem:w-iteratability}.\qed
\end{proof}

\begin{proposition}\label{prop:pre-process}
  Let $\s$ be a sequence of transitions. Then:
  \begin{itemize}
  \item if $\s$ satisfies some condition in
    Lemma~\ref{lem:conditions-non-iteratable}, then $\s$ is not
    $\w$-iterable;
  \item if $\s$ does not satisfy
    Lemma~\ref{lem:conditions-non-iteratable} and does not contain a
    guard of the form $x \ggeq c$ with $c >0$ for a clock that is
    reset, then $\s$ is $\w$-iterable;
  \item if $\s$ does not satisfy
    Lemma~\ref{lem:conditions-non-iteratable} and contains a guard of
    the form $x \ggeq c$ with $c > 0$ for some clock $x$ that is
    reset, then $\s$ is $\w$-iterable iff $\s_{X_0}$ obtained from
    $\s$ by removing all guards on clocks not in $X_0$ is
    $\w$-iterable.
  \end{itemize}
\end{proposition}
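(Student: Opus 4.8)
The plan is to take the first two items straight from the lemmas already established and to put the real work into the third. The first item is exactly the conclusion of Lemma~\ref{lem:conditions-non-iteratable}. The second item is the conclusion of Lemma~\ref{lem:no-time-elapse-iteratable}; I would point out that the proof of that lemma in fact exhibits executions of every finite concatenation of $\s$ (and then appeals to Lemma~\ref{lem:w-iteratability}), so it really delivers $\w$-iterability and not just iterability once. Only the third item then needs a separate argument, and I would do its left-to-right implication first, since that one is easy.

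For the easy direction I would argue that $\s_{X_0}$ is obtained from $\s$ by deleting guards, and deleting guards only enables more executions, so every execution of the $n$-fold concatenation of $\s$ is also an execution of the $n$-fold concatenation of $\s_{X_0}$. Hence, if $\s$ is $\w$-iterable, the infinite witnessing execution also witnesses $\w$-iterability of $\s_{X_0}$ (equivalently, one combines the two directions of Lemma~\ref{lem:w-iteratability}).

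The converse direction I would base on a single structural observation. In the case at hand $\s$ contains a guard $x \ggeq c$ with $c>0$ for some reset clock $x \in X_0$ and $\s$ does not satisfy the second condition of Lemma~\ref{lem:conditions-non-iteratable}; together these force that $\s$ contains no upper-bound guard $y \lleq d$ on any non-reset clock $y \in X \setminus X_0$, i.e.\ the non-reset clocks occur in the guards of $\s$ only in lower-bound constraints $y \ggeq c_y$. Assuming $\s_{X_0}$ is $\w$-iterable, by Lemma~\ref{lem:w-iteratability} it suffices to produce, for each $n$, an execution of the $n$-fold concatenation of $\s$. I would fix $n$, take an execution of the $n$-fold concatenation of $\s_{X_0}$, and — since $\s_{X_0}$ imposes no constraint on the non-reset clocks — freely reset the initial value of each $y \in X \setminus X_0$ to something at least as large as every constant occurring in a lower-bound guard on $y$ in $\s$, propagating it along the same delays. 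No clock of $X \setminus X_0$ is reset along $\s$, so these values never decrease, and every lower-bound guard on a non-reset clock holds at every step. Keeping the delays of the $\s_{X_0}$-execution, the guards on the reset clocks (literally the same in $\s$ and in $\s_{X_0}$) remain satisfied, and by the structural observation these are the only guards of $\s$ not yet handled. This gives an execution of the $n$-fold concatenation of $\s$ for every $n$, and Lemma~\ref{lem:w-iteratability} then yields $\w$-iterability of $\s$.

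The one step I expect to be non-routine is the structural observation itself: that excluding the second condition of Lemma~\ref{lem:conditions-non-iteratable} while keeping a positive lower bound on a reset clock forces the non-reset clocks to carry no upper bounds in $\s$. Once that is in place, \emph{sending the non-reset clocks to infinity} and reusing the execution of $\s_{X_0}$ is pure bookkeeping. The only other thing to watch is the terminological gap between \emph{iterable} and \emph{$\w$-iterable} in the earlier lemmas, which their proofs already bridge by producing witnesses for all finite concatenations.
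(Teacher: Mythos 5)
Your proposal is correct and follows essentially the same route as the paper: the first two items are delegated to Lemmas~\ref{lem:conditions-non-iteratable} and~\ref{lem:no-time-elapse-iteratable}, and for the third item both arguments hinge on the same structural observation (a positive lower bound on a reset clock together with the negation of Condition~2 forces all guards on non-reset clocks to be lower bounds) and the same construction of initializing the non-reset clocks above all such bounds while reusing the delays of the $\s_{X_0}$-execution. The only cosmetic difference is that you pass through finite $n$-fold concatenations and Lemma~\ref{lem:w-iteratability}, whereas the paper lifts the infinite execution of $\s_{X_0}$ directly.
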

\begin{proof}
  The first two cases have been shown in
  Lemmas~\ref{lem:conditions-non-iteratable} and
  \ref{lem:no-time-elapse-iteratable}. For the third case, note that
  the transition sequence $\s_{X_0}$ is the same as $\s$ except for
  certain guards that have been removed. Hence if $\s$ is iterable,
  the same execution is also an execution of $\s_{X_0}$. We will hence
  concentrate on the reverse direction.

  Suppose there exists a guard $x \ggeq c_x$ with $c_x > 0$ for a
  clock $x \in X_0$. From Condition 2 of Lemma
  \ref{lem:conditions-non-iteratable}, all guards involving $y \in X
  \setminus X_0$ would be of the form $y \ggeq k$. In other words, the
  upper bound guards involve only the clocks that are reset. Now,
  suppose $\s_{X_0}$ is $\w$-iterable from a valuation $v$. Let $\rho$
  be this execution.  We will now extend $v$ to a valuation $v'$ over
  the clocks $X$. For all clocks $x \in X_0$, let $v'(x) = v(x)$. For
  clocks $y \in X \setminus X_0$, set $v'(y)$ to a value greater than
  maximum of $k$ from among constraints $y \ggeq k$. By elapsing the
  same amounts of time as in $\rho$, the sequence $\s$ can be executed
  from $v'$. From $\rho$, we know that all clocks in $X_0$ satisfy the
  guards. All clocks $y \in X \setminus X_0$ satisfy guards $y \ggeq
  c_y$ by construction. As there are no other type of guards for $y$,
  we can conclude that $\s$ is iterable.

\end{proof}

We can thus eliminate all clocks that are not reset while discussing
iterability.  Our $\s$ now only contains clocks that are reset at
least once, and has some compulsory minimum time elapse in each
iteration.

% \igw{Not true like this. Maybe modulo Lemma 31?}
% \begin{corollary}
%   $\s$ is iterable iff $\s_{X_0}$ is iterable. Moreover, without
%   loss of generality, we can assume that $\s_{X_0}$ contains a guard
%   of the form $x \ggeq c$ with $c > 0$.
% \end{corollary} [I would like to have a general corollary as above
% (but a true one).  I am not convinced by the structure of this
% subsection.]
%%% Local Variables: 
%%% mode: latex
%%% TeX-master: "m"
%%% End: 
 
% Transformation graphs
\subsection{Transformation graphs}
\label{sec:transf-graphs}

Recall that the goal is to check if a sequence $\s$ of transitions $\xra{t_1}
\dots \xra{t_k}$ is $\w$-iteratable. From the results of the previous
section, we can assume that every clock is involved in a guard and is
also reset in $\s$. In the normal forward analysis
procedure, one would start with some zone $Z_0$ and keep computing the
zones obtained after each transition: $Z_0 \xra{t_1} \dots \xra{t_k}
Z_k$. The zone $Z_0$ is a set of difference constraints between
clocks. After each transition, a new set of constraints is
computed. This new constraint set reflects the changes in the clock
differences that have happened during the transition. Our aim is to
investigate if certain ``patterns'' in these changes are responsible
for non-iteratability of the sequence. To this regard, we associate to
every transition sequence what we call a \emph{transformation graph},
where the changes happening during the transitions are made more
explicit. In this subsection we will introduce transformation graphs
and explain how to compose them.

We start with a preliminary definition. Fix a set of clocks
$X=\set{x_0,x_1,\dots,x_n}$. As mentioned before, the clock $x_0$ will
be special as it will represent the reference point to other
clocks. We will work with two types of valuations of clocks. The
standard ones are $v:X\to \RR^+$ and are required to assign $0$ to
$x_0$. A \emph{loose} valuation $v$ is a function $v:X\to \RR$ with
the requirement that $v(x_i)\geq v(x_0)$ for all $i$.  In particular,
by subtracting $v(x_0)$ from every value we can convert a loose
valuation to a standard one. Let $\norm(v)$ denote this standard
valuation.

\begin{definition}
  A \emph{transformation graph} is a weighted directed graph whose vertices are
  $V=\set{0,\dots,k}\times X$ for some $k$, and whose edges have
  weights of the form $(\leq,d)$ or $(<,d)$ for some
  $d\in\Nat$.  % The
  % graph is required to satisfy additional technical restriction,
  % that
  % for every $x\in X$ there is a path from vertex $(0,x)$ to $(k,x)$.
\end{definition}
We will say that a vertex $(0,x)$ is in the leftmost column of a
transformation graph and a vertex $(k,x)$ is in the rightmost
column. The graph as above has $k+1$
columns. Figure~\ref{fig:trans-graph} shows an example of a
transformation graph with three columns.
% The technical restriction in the definition says that for every
% variable there is a path from its occurrence in the leftmost column
% to its occurrence in the rightmost column.

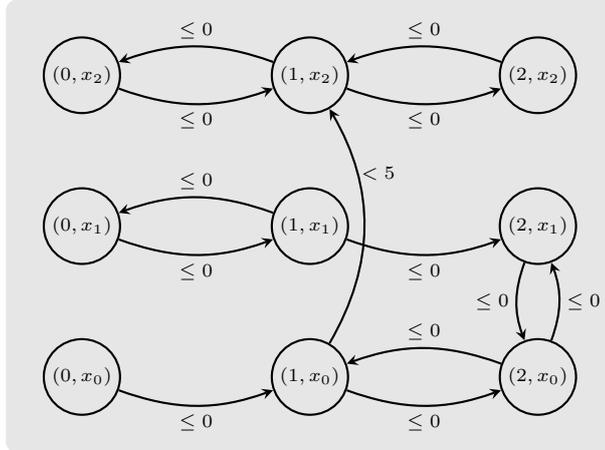
\begin{figure}[!h]
\centering
\begin{tikzpicture}

\fill[gray!20, rounded corners] (-1,-1) rectangle (7,5);

\begin{scope}[shorten >=1pt,node distance=2.5,on
      grid,auto, every node/.style = {shape=circle}, thick, inner
      sep=2pt]
\node [draw] (00) at (0,0) {\scriptsize $(0,x_0)$};
\node [draw] (01) at (0,2) {\scriptsize $(0,x_1)$};
\node [draw] (02) at (0,4) {\scriptsize $(0,x_2)$};
\node [draw] (10) at (3,0) {\scriptsize $(1,x_0)$};
\node [draw] (11) at (3,2) {\scriptsize $(1,x_1)$};
\node [draw] (12) at (3,4) {\scriptsize $(1,x_2)$};
\node [draw] (20) at (6,0) {\scriptsize $(2,x_0)$};
\node [draw] (21) at (6,2) {\scriptsize $(2,x_1)$};
\node [draw] (22) at (6,4) {\scriptsize $(2,x_2)$};

\end{scope}

\begin{scope}[->, >=stealth, thick]
\draw (02) edge [bend right = 20] (12);
\draw (12) edge [bend right = 20] (02);
\draw (01) edge [bend right = 20] (11);
\draw (11) edge [bend right = 20] (01);
\draw (00) edge [bend right = 20] (10);
\draw (12) edge [bend right = 20] (22);
\draw (22) edge [bend right = 20] (12);
\draw (11) edge [bend right = 20] (21);
\draw (10) edge [bend right = 20] (20);
\draw (20) edge [bend right = 20] (10);
\draw (10) edge [bend right] (12);
\draw (20) edge [bend right =20] (21);
\draw (21) edge [bend right =20] (20);
\end{scope}

\begin{scope}
\node at (1.5, 4.6) {\scriptsize $\le 0$};
\node at (1.5, 3.4) {\scriptsize $\le 0$};
\node at (4.5, 4.6) {\scriptsize $\le 0$};
\node at (4.5, 3.4) {\scriptsize $\le 0$};
\node at (1.5, 2.6) {\scriptsize $\le 0$};
\node at (1.5, 1.4) {\scriptsize $\le 0$};
\node at (4.5, 1.4) {\scriptsize $\le 0$};
\node at (1.5, -0.6) {\scriptsize $\le 0$};
\node at (4.5, 0.6) {\scriptsize $\le 0$};
\node at (4.5, -0.6) {\scriptsize $\le 0$};
\node at (3.9, 2.7) {\scriptsize $< 5$};
\node at (5.4, 1) {\scriptsize $\le 0$};
\node at (6.6, 1) {\scriptsize $\le 0$};
\end{scope}

\end{tikzpicture}
\caption{A transformation graph over the set of clocks $\{x_0, x_1,
  x_2\}$. }
\label{fig:trans-graph}
\end{figure}

%%% Local Variables: 
%%% mode: latex
%%% TeX-master: "../igw"
%%% End: 

Edges in a transformation graph represent difference constraints. For
instance, in the graph from Figure~\ref{fig:trans-graph}, the edge
from $(1, x_0)$ to $(1,x_2)$ with weight $< 5$ represents the
constraint $(1, x_2) - (1, x_0) < 5$. We will now formally define what
a solution to a transformation graph is.

\begin{definition}
  Let $G$ be a transformation graph with $k+1$ columns. A
  \emph{solution} for $G$ is a sequence of loose valuations
  $v_0,\dots,v_k: X\to\RR$ such that for every edge from $(i,x_j)$ to
  $(p,x_q)$ of weight $\fleq d$ in $G$ we have $v_p(x_q
  )-v_i(x_j)\fleq d$, where $\fleq$ stands for either $\le$ or $<$.
\end{definition}
A solution to the transformation graph in Figure~\ref{fig:trans-graph}
would be the sequence $v_0, v_1, v_2$ where $v_0: = \langle -1.5, 5, 1
\rangle$, $v_1: = \langle -3, 5, 1 \rangle$, $v_2 := \langle -3, -3,
3.5 \rangle$ (we write the values in the order $\langle x_0, x_1, x_2
\rangle$). Check for instance that $v_1(x_2) - v_1(x_0) < 5$ according
to the definition. 

The following lemma describes when a transformation graph has a
solution (cf. See proof of Proposition 1 in \cite{NonConvex-ArXiv}).

\begin{lemma}
  \label{lem:negative-cycles}
  A transformation graph $G$ has a solution iff it does not have a
  cycle of a negative weight.
\end{lemma}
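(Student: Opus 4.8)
The statement is the classical characterization of feasibility of difference constraints via absence of negative cycles, adapted to the "loose valuation" setting. The plan is to argue both directions.

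For the easy direction, I would show that if $G$ has a negative cycle then no solution exists. Suppose $v_0,\dots,v_k$ is a solution and consider a cycle passing through vertices $(i_1,x_{j_1}),(i_2,x_{j_2}),\dots,(i_m,x_{j_m}),(i_1,x_{j_1})$ with edge weights $\fleq_1 d_1,\dots,\fleq_m d_m$. Applying the solution inequalities along the cycle and summing, the $v$-terms telescope and cancel, leaving $0 \fleq d_1 + \dots + d_m$, where $\fleq$ is $<$ if any edge on the cycle is strict and $\le$ otherwise. A negative cycle means $d_1 + \dots + d_m < 0$ (or $= 0$ with a strict edge present), contradicting this inequality. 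So a solution cannot exist.

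For the converse, assume $G$ has no negative cycle; I would construct a solution explicitly using shortest-path distances. Add a fresh source vertex $s$ with a zero-weight edge $(\le, 0)$ to every vertex of $G$, and let $\delta(w)$ be the weight of a shortest path from $s$ to $w$ (well-defined and finite since there is no negative cycle and every vertex is reachable from $s$). Define $v_i(x_j) := \delta((i,x_j))$. The standard triangle-inequality argument for shortest paths gives, for every edge from $(i,x_j)$ to $(p,x_q)$ of weight $\fleq d$, that $\delta((p,x_q)) \le \delta((i,x_j)) + d$, i.e. $v_p(x_q) - v_i(x_j) \fleq d$; one has to be slightly careful to track strict versus non-strict bounds, which is handled by ordering weights lexicographically (first by the numeric value $d$, then by strictness with $<$ considered smaller than $\le$) so that shortest paths respect both. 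It remains to check the loose-valuation requirement $v_i(x_j) \ge v_i(x_0)$. This need not hold for the raw $\delta$ values, but since the constraints are invariant under adding a common constant to all $v_i(x_j)$ within a column — no, more carefully: shifting a whole solution $v_0,\dots,v_k$ by a single global constant preserves all edge inequalities — I would instead note that we only need \emph{some} solution, and if the constructed one violates $v_i(x_j) \ge v_i(x_0)$ we can add a sufficiently large constant $C$ to every $v_i(x_j)$ for all $i,j$; this leaves every difference $v_p(x_q) - v_i(x_j)$ unchanged and hence preserves the solution property, while making all values exceed $v_i(x_0)$ is \emph{not} achieved this way since $v_i(x_0)$ also shifts. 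The clean fix is to instead add, for each $i$, a large enough constant $C_i$ to the whole $i$-th column; but that changes cross-column edge weights.

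The genuinely correct and simplest route, and the one I would actually carry out, is: observe that "loose" merely relaxes the standard requirement $v(x_0)=0$ to $v_i(x_0) \le v_i(x_j)$, and the transformation graphs arising in this paper always contain, implicitly or by the construction of Section~\ref{sec:transf-graphs}, edges enforcing $v_i(x_0) \le v_i(x_j)$ (clocks are nonnegative relative to the reference). If such edges are present, the loose-valuation constraint is already part of $G$, so any shortest-path solution automatically satisfies it and nothing extra is needed; if one wants the lemma for arbitrary $G$ without such edges, one simply adds these $(\le,0)$ edges from $(i,x_0)$ to $(i,x_j)$, notes this creates no new negative cycle (they have weight $0$ and, combined with existing paths, a negative cycle through them would correspond to an already-present negative cycle), solves the augmented graph, and the resulting solution works for the original. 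The main obstacle is precisely this bookkeeping around the loose-valuation side condition together with the strict/non-strict weight handling; the shortest-path core of the argument is entirely standard. I would therefore present the proof as: (1) negative cycle $\Rightarrow$ no solution by telescoping; (2) no negative cycle $\Rightarrow$ shortest-path potentials give a solution, using lexicographic weights for strictness and the remark that the loose side conditions are (or may be assumed to be) encoded as edges.
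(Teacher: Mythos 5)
The paper itself gives no proof of Lemma~\ref{lem:negative-cycles}: it only points to Proposition~1 of \cite{NonConvex-ArXiv}. So there is no in-paper argument to compare against. Your core plan --- telescoping along a cycle for the ``only if'' direction, and shortest-path potentials from a fresh source, with weights ordered lexicographically to track strict versus non-strict bounds, for the ``if'' direction --- is the standard argument for difference-constraint systems and is correct as far as it goes.

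The genuine gap is your treatment of the loose-valuation side condition $v_i(x_j)\ge v_i(x_0)$, which you circle around but never close. Two concrete problems. First, a direction error: the constraint $v_i(x_0)-v_i(x_j)\le 0$ corresponds to an edge \emph{from} $(i,x_j)$ \emph{to} $(i,x_0)$ of weight $(\le,0)$; the edge from $(i,x_0)$ to $(i,x_j)$ that you propose to add asserts $v_i(x_j)\le v_i(x_0)$, the opposite of looseness. Second, and more seriously, your claim that adding the looseness edges ``creates no new negative cycle'' is false for an arbitrary transformation graph: a graph whose only edge goes from $(0,x_0)$ to $(0,x_1)$ with weight $(\le,-5)$ has no cycle at all, hence no negative cycle, yet it admits no loose solution, and adding the looseness edge creates a brand-new negative cycle. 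So the lemma, read literally for arbitrary $G$ with the paper's definition of solution, is the very thing your argument would have to repair: one must either drop ``loose'' from the definition of solution, or restrict to graphs whose canonical form already entails the looseness constraints (and then verify that the graphs $G_\s$ of Section~\ref{sec:transf-graphs} have this property after the preprocessing that guarantees every clock is reset and guarded). You mention both options but commit to neither and justify neither, so the ``if'' direction is not actually established. To be fair, the difficulty you are wrestling with reflects an imprecision in the statement itself, but a proof still has to resolve it explicitly rather than gesture at it.
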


Our aim is to construct transformation graphs that reflect the changes
happening during a transition sequence. Following is the definition of
what it means to reflect a transition sequence. 

\begin{definition}
\label{def:reflecting-a-transition}
  We say that a transformation graph $G$ reflects a transition
  sequence $\s$ if the following hold:
  \begin{itemize}
  \item For every solution $v_0,\dots,v_k$ of $G$ we have
    $\norm(v_0)\xra{\s}^\d\norm(v_k)$ where $\d=-v_k(x_0)-v_0(x_0)$.
  \item For every $v_0 \xra{\s}^\d v_k$ the pair of valuations $v_0$,
    and $(v_k-\d)$ can be extended to a solution of $G$.
  \end{itemize}
\end{definition}
Consider Figure~\ref{fig:trans-graph} again. It reflects the single
transition $t:~\xra{~x_2 < 5,~\{ x_1 \}~}$ with the guard $x_2 < 5$ and
the reset $\{x_1\}$. Let us illustrate our claim with an example. For the solution $v_0, v_1,
v_2$ given above, check that $\norm(v_0) \xra{t}^{1.5}
\norm(v_2)$. Similarly, for every pair of valuations that satisfy $v
\xra{t}^\d v'$, one can check that $v$ and $v'-\d$ can be extended to
a solution of the graph in Figure~\ref{fig:trans-graph}.

We will now give the construction of the transformation graph for an
arbitrary transition. Subsequently, we will define a composition
operator that will extend the construction to a sequence of transitions.

\begin{definition}
\label{def:trans-graph-of-a-transition}
Let $t:~\xra{~g~,R~}$ be a transition with guard $g$ and reset
$R$. The transformation graph $G_t$ for $t$ is a 3 column graph with
vertices $\{0,1,2\} \times X$. Edges are defined as below.
\paragraph{Time-elapse + guard edges:}
\begin{enumerate}
\item $(0, x_0) \xra{~\le 0~} (1, x_0)$,
\item $(0, x_i) \xra{~\le 0~} (1,x_i)~$ and $~(1, x_i) \xra{~\le 0~} (0,
  x_i)$ for all $x_i \neq x_0$,
\item $(1, x_0) \xra{~\fleq c~} (1, x_i)$ for every constraint $x_i
  \fleq c$ in the guard $g$,
\item $(1, x_i) \xra{~\fleq -c~} (1,x_0)$ for every constraint $x_i
  \ggeq c$ in the guard $g$
\end{enumerate}
\paragraph{Reset edges:}
\begin{enumerate}
  \setcounter{enumi}{4}
\item $(1, x_i) \xra{~\le 0~} (2,x_i)~$ and $~(2,x_i) \xra{~\le 0~}
  (1,x_i)$ for all $x_i \notin R$,
\item $(1, x_i) \xra{~\le 0~} (2, x_i)$ for all $x_i \in R$
\item $(2, x_0) \xra{~\le 0~} (0, x_0)~$ and $~(2,x_0) \xra{~\le 0~}
  (0, x_0)$ for all $x_i \in R$.
\end{enumerate}
\end{definition}
The edges between the first two columns of $G_t$ describe the changes
due to time elapse and constraints due to the guard. The edges between
the second and the third column represent the constraints arising due
to reset. Note that the only non-zero weights in the graph come from
the guard.

The following lemma establishes that the definition of a
transformation graph that we have given indeed reflects a transition
according to Definition~\ref{def:reflecting-a-transition}. The proof
is by direct verification. 

\begin{lemma}\label{lem:tgraph-reflects-t}
  The transformation graph $G_t$ of a transition $t$ reflects
  transition $t$.
\end{lemma}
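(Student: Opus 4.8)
The plan is to verify the two bullet points of Definition~\ref{def:reflecting-a-transition} directly for the graph $G_t$ of Definition~\ref{def:trans-graph-of-a-transition}, by unwinding what a solution $v_0, v_1, v_2$ of $G_t$ is and showing it corresponds exactly to a time-elapse-then-reset execution of $t$. The key observation is that the three columns of $G_t$ are designed so that column $1$ represents the valuation \emph{after} the time elapse but \emph{before} the reset, and column $2$ (after renormalising) represents the valuation after the reset, which is identified with a fresh leftmost column for the next transition via the edges of item~7.

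First I would read off the constraints. The edges of items~1 and~2 say $v_1(x_i) - v_0(x_i) \le 0$ for all $i$ and $v_0(x_i) - v_1(x_i) \le 0$ for $x_i \neq x_0$; hence $v_1(x_i) = v_0(x_i)$ for $i \neq 0$ and $v_1(x_0) \le v_0(x_0)$. Writing $\d := v_0(x_0) - v_1(x_0) \ge 0$, this means that, after normalising, $\norm(v_1) = \norm(v_0) + \d$: the common shift of the reference clock $x_0$ downwards by $\d$ is exactly a forward time elapse of $\d$. Items~3 and~4 translate the guard constraints $x_i \fleq c$ and $x_i \ggeq c$ into $v_1(x_i) - v_1(x_0) \fleq c$ and $v_1(x_0) - v_1(x_i) \fleq -c$, i.e. $(v_1(x_i) - v_1(x_0)) \ggeq c$; so a solution restricted to column~$1$ exists iff $\norm(v_1) = \norm(v_0)+\d \sat g$. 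Items~5 and~6 say $v_2(x_i) = v_1(x_i)$ for $x_i \notin R$ and $v_2(x_i) \le v_1(x_i)$ for $x_i \in R$; combined with item~7, which forces $v_2(x_0) = v_0(x_0)$ (two $\le 0$ edges in both directions between $(2,x_0)$ and $(0,x_0)$ — here I read the apparent typo in item~7 as the pair of opposite edges $(2,x_0)\to(0,x_0)$ and $(0,x_0)\to(2,x_0)$), one gets that $\norm(v_2)$ agrees with $\norm(v_1)$ off $R$ and is arbitrary on $R$. To also match the \emph{reset to zero} rather than ``reset to something small'', note that $v_2(x_i) \le v_1(x_i)$ together with the loose-valuation constraint $v_2(x_i) \ge v_2(x_0) = v_0(x_0)$; but this does not by itself pin the reset clock to $0$. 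This is the point that needs care.

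The way to handle it, and the step I expect to be the main obstacle, is the following. For the \emph{first} bullet (every solution yields an execution): given a solution $v_0,v_1,v_2$, one takes the execution $\norm(v_0) \xra{t}^{\d} \norm(v_2')$ where we must check $\norm(v_2')$ is forced to be $[R](\norm(v_0)+\d)$, i.e. the reset clocks are $0$. Here one uses that a \emph{solution} is required on the full three-column graph and that Definition~\ref{def:reflecting-a-transition} only asks for the weaker conclusion $\norm(v_0) \xra{\s}^\d \norm(v_k)$; in fact the definition as stated identifies $v_k$ with the valuation reached, and since the edges do not force reset clocks to equal $v_0(x_0)$, one should instead argue that the transformation graph is built so that column~$2$ valuations are only ever used up to the lower-closure implicit in ``$\preceq$'' — but cleaner is simply to observe that in the intended reading the reset edges of item~6 together with the mirrored edges $(2,x_i) \to (1,x_i)$ \emph{absent} for $x_i \in R$ mean column~$2$ is free below $v_1$, and for the statement $\norm(v_0)\xra{t}^\d\norm(v_2)$ to hold we need $\norm(v_2)$ to be reachable, which it is precisely when $\norm(v_2)(x_i) = 0$ for $x_i \in R$; so one restricts attention to, or the lemma intends, the canonical solution. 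I would present this by showing: (i) any solution has $\norm(v_2)(x_i) \in [0, \text{(value before reset)}]$ on $R$ and agrees off $R$, and the execution relation $\xra{t}^\d$ precisely relates $\norm(v_0)$ to those $\norm(v_2)$ with reset clocks $0$; (ii) conversely, given $v_0 \xra{t}^\d v_k$, set $v_1 := $ (the loose valuation with $v_1(x_0) = v_0(x_0) - \d$, $v_1(x_i) = v_0(x_i)$ for $i\neq 0$) and $v_2 := v_k - \d$ shifted so $v_2(x_0) = v_0(x_0)$; then verify each of the seven edge families is satisfied — this is the ``direct verification'' the paper promises, and it is routine once the dictionary column~$1$ $=$ post-elapse, column~$2$ $=$ post-reset is fixed.

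In short, the proof is: fix the correspondence $v_0 \leftrightarrow$ start, $v_1 \leftrightarrow$ after elapse $\d = v_0(x_0)-v_1(x_0)$, $v_2 \leftrightarrow$ after reset; check edge families 1--2 encode a nonnegative time elapse, 3--4 encode the guard $g$ on the post-elapse valuation, 5--6 encode the reset $[R]$, and 7 re-anchors $x_0$ to $0$ so the output is again a standard valuation; then read off both implications of Definition~\ref{def:reflecting-a-transition}. The only subtlety is reconciling the one-sided reset edges (item~6) with the requirement that reset clocks become exactly $0$, which is resolved by noting that $\xra{t}^\d$ as a relation between normalised valuations already selects exactly those column-$2$ valuations with reset coordinates zero, so nothing further is needed for the first bullet, and the second bullet is witnessed by the canonical choice of $v_1, v_2$ above.
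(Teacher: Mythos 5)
Your overall strategy is the right one and is exactly the ``direct verification'' the paper has in mind: fix the dictionary column $0$ $=$ before, column $1$ $=$ after the delay $\d = v_0(x_0)-v_1(x_0)$, column $2$ $=$ after the reset, and check that edge families 1--2 encode a nonnegative time elapse, 3--4 the guard on $\norm(v_1)=\norm(v_0)+\d$, and 5--7 the reset and the re-anchoring of $x_0$. The problem is your reading of item~7, and the consequence you then have to argue away. You read item~7 as a pair of opposite edges between $(2,x_0)$ and $(0,x_0)$, forcing $v_2(x_0)=v_0(x_0)$. This is not what is intended: Figure~\ref{fig:trans-graph} (the worked example, which is the authoritative disambiguation of the garbled item~7) has \emph{no} edge between columns $0$ and $2$; instead it has opposite $\le 0$ edges between $(1,x_0)$ and $(2,x_0)$, and opposite $\le 0$ edges between $(2,x_0)$ and $(2,x_i)$ for each reset clock $x_i$. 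These force $v_2(x_0)=v_1(x_0)$ and $v_2(x_i)=v_2(x_0)$ for $x_i\in R$, hence $\norm(v_2)(x_i)=0$: the reset clocks \emph{are} pinned to zero, and the difficulty you identify as ``the point that needs care'' does not arise. (Your reading also makes the recovered delay $v_0(x_0)-v_2(x_0)$ identically zero, which already breaks the first bullet whenever a positive delay is required.)

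The way you resolve the difficulty under your reading is a genuine gap, not a repair. The first bullet of Definition~\ref{def:reflecting-a-transition} quantifies over \emph{every} solution of $G_t$, so you cannot ``restrict attention to the canonical solution'': with reset clocks unpinned there are solutions $v_0,v_1,v_2$ with $\norm(v_2)(x)\ne 0$ for some $x\in R$ (e.g.\ no guard, $R=\{x_1\}$, $v_0=v_1=(0,5)$, $v_2=(0,3)$), and for these $\norm(v_0)\xra{t}^{\d}\norm(v_2)$ is simply false, since the semantics of $t$ forces the reset clocks to $0$. Saying that ``$\xra{t}^{\d}$ already selects the column-$2$ valuations with reset coordinates zero'' inverts the logic of the claim to be proved. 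With the corrected item~7 the argument closes exactly as in your last paragraph: for the first bullet every edge constraint is an equality or a guard inequality on $\norm(v_1)$, so $\norm(v_2)=[R](\norm(v_0)+\d)$ is forced; for the second bullet your explicit witnesses $v_1$ and $v_2:=v_k-\d$ (with $v_2(x_0)=-\d+$ the appropriate offset, i.e.\ $v_2(x_0)=v_1(x_0)$) satisfy all seven edge families by inspection.
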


Now that we have defined the transformation graph for a transition, we
will extend it to a transition sequence using a composition operator.

\begin{definition}
  Given two transformation graphs $G_1$ and $G_2$ we define its
  composition $G=G_1\odot G_2$. Supposing that the number of columns
  in $G_1$ and $G_2$ is $k_1$ and $k_2$, respectively; $G$ will have
  $k_1+k_2$-columns. Vertices of $G$
  are $\set{0,\dots,k_1+k_2-1}\times X$. The edge between vertices $(i,x)$ and
  $(j,y)$ exists and is the same as in $G_1$ if $i,j< k_1$; it is
  the same as between $(i-k_1,x)$ and $(j-k_1,y)$ in $G_2$ if
  $i,j\geq k_1$.  Additionally we add edges of weight $(\leq,0)$ from
  $(k_1-1,x)$ to $(k_1,x)$ and from $(k_1,x)$ to $(k_1-1,x)$.
\end{definition}

\noindent\includegraphics[scale=.28]{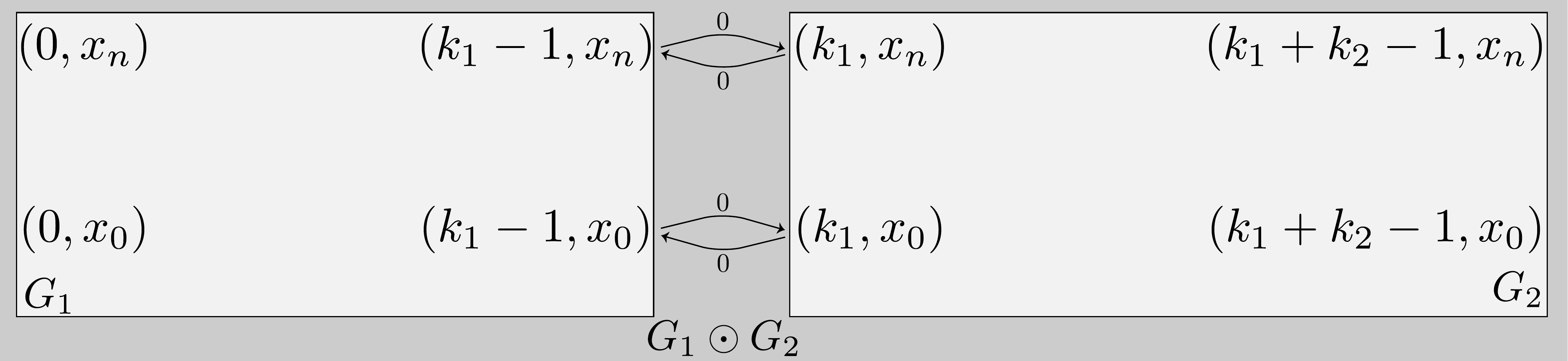}

By definition of composition, we get the following lemma. 
\begin{lemma}
  The composition is associative.
\end{lemma}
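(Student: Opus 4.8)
The plan is to prove associativity by a completely explicit structural computation: I will write down a single ``three-fold'' transformation graph $H$ and check that both $(G_1\odot G_2)\odot G_3$ and $G_1\odot(G_2\odot G_3)$ are literally equal to it. Fix transformation graphs $G_1,G_2,G_3$ with $k_1,k_2,k_3$ columns. Let $H$ be the graph on vertices $\set{0,\dots,k_1+k_2+k_3-1}\times X$ whose edges are the four families: (a) the edges of $G_1$, placed on columns $i,j<k_1$; (b) the edges of $G_2$ shifted by $k_1$, placed on columns $k_1\le i,j<k_1+k_2$; (c) the edges of $G_3$ shifted by $k_1+k_2$, placed on columns $k_1+k_2\le i,j$; and (d) bidirectional $(\le,0)$ edges between $(k_1-1,x)$ and $(k_1,x)$ and between $(k_1+k_2-1,x)$ and $(k_1+k_2,x)$, for every $x\in X$. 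Equality of vertex sets is immediate, since both associations produce $k_1+k_2+k_3$ columns, so the whole proof reduces to matching edge sets.

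First I would expand $(G_1\odot G_2)\odot G_3$. By definition $A:=G_1\odot G_2$ has $k_1+k_2$ columns; its edges on columns $<k_1$ are those of $G_1$, its edges on columns $\ge k_1$ are those of $G_2$ shifted by $k_1$, and it carries the glue edges between columns $k_1-1$ and $k_1$. Composing $A$ with $G_3$: the clause ``same as in $A$ if $i,j<k_1+k_2$'' transports \emph{all} edges of $A$ — the $G_1$-edges, the shifted $G_2$-edges, and the internal glue edges between columns $k_1-1$ and $k_1$ — unchanged; the clause ``same as between $(i-(k_1+k_2),x)$ and $(j-(k_1+k_2),y)$ in $G_3$'' places the $G_3$-edges on columns $\ge k_1+k_2$ with shift $k_1+k_2$; and the final clause adds the glue edges between columns $(k_1+k_2)-1$ and $k_1+k_2$. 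Collecting these, we obtain exactly the four families (a)--(d), i.e.\ $H$.

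Next I would expand $G_1\odot(G_2\odot G_3)$. Here $B:=G_2\odot G_3$ has $k_2+k_3$ columns, with $G_2$-edges on columns $<k_2$, $G_3$-edges shifted by $k_2$ on columns $\ge k_2$, and glue edges between columns $k_2-1$ and $k_2$. Composing $G_1$ with $B$ leaves the $G_1$-edges on columns $<k_1$ untouched and, for columns $\ge k_1$, copies the edges of $B$ with shift $k_1$: this turns $G_2$-edges into $G_2$-edges on columns $k_1\le i,j<k_1+k_2$ (net shift $k_1$), turns $G_3$-edges into $G_3$-edges on columns $\ge k_1+k_2$ (net shift $k_1+k_2$), and turns the glue edges of $B$ into bidirectional $(\le,0)$ edges between columns $k_1+k_2-1$ and $k_1+k_2$. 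The outer composition then adds the glue edges between columns $k_1-1$ and $k_1$. Again this is precisely $H$, so the two sides agree.

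The only place requiring care is the bookkeeping of column offsets, and in particular the observation that the glue seam introduced \emph{inside} a sub-composition is carried along verbatim (up to the uniform shift) by the ``same as in $A$'' / ``same as in $B$'' clause of the definition — this is exactly what makes the two seams land in the symmetric positions $k_1-1\mid k_1$ and $k_1+k_2-1\mid k_1+k_2$ no matter how the product is bracketed. There is no genuine obstacle here; the statement is essentially a consistency check on the column-splicing convention, and the proof is a routine unfolding of the definition of $\odot$ twice on each side.
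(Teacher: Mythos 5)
Your proof is correct and matches the paper's approach: the paper gives no explicit argument, stating only that the lemma holds ``by definition of composition,'' and your careful unfolding of the two bracketings into the same spliced graph $H$ is precisely the routine verification the authors leave implicit. The column-offset bookkeeping and the observation that each internal glue seam is transported verbatim by the ``same as in'' clauses are exactly the points one would check, and you get them right.
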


The following lemma is instrumental in lifting the definition of
transformation graph from a transition to a transition sequence. 

\begin{lemma}
  \label{lem:trans-graph-composition}
  If $G_1$, $G_2$ are transition graphs reflecting $\s_1$ and $\s_2$
  respectively, then $G_1 \odot G_2$ reflects their concatenation
  $\s_1\s_2$.
\end{lemma}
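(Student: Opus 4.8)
The plan is to prove Lemma~\ref{lem:trans-graph-composition} by unfolding the definition of ``reflects'' (Definition~\ref{def:reflecting-a-transition}) and checking the two bullets separately, using the correspondence between solutions of $G_1\odot G_2$ and pairs of solutions of $G_1$ and $G_2$ glued along a common column. The central observation is structural: since $G_1\odot G_2$ is obtained by placing $G_1$'s columns $0,\dots,k_1-1$, then $G_2$'s columns relabelled to $k_1,\dots,k_1+k_2-1$, and inserting the two weight-$(\le,0)$ edges between column $k_1-1$ and column $k_1$ in both directions, a sequence of loose valuations $v_0,\dots,v_{k_1+k_2-1}$ is a solution of $G_1\odot G_2$ if and only if (i) $v_0,\dots,v_{k_1-1}$ is a solution of $G_1$, (ii) $v_{k_1},\dots,v_{k_1+k_2-1}$ is a solution of $G_2$, and (iii) $v_{k_1-1}(x)=v_{k_1}(x)$ for every clock $x$ — the last point because the two inserted edges force $v_{k_1}(x)-v_{k_1-1}(x)\le 0$ and $v_{k_1-1}(x)-v_{k_1}(x)\le 0$. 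I would state this equivalence first as the single combinatorial fact on which everything rests.

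Given that equivalence, the first bullet of Definition~\ref{def:reflecting-a-transition} follows quickly. Take any solution $v_0,\dots,v_{k_1+k_2-1}$ of $G_1\odot G_2$. By the equivalence, $v_0,\dots,v_{k_1-1}$ solves $G_1$, so since $G_1$ reflects $\s_1$ we get $\norm(v_0)\xra{\s_1}^{\d_1}\norm(v_{k_1-1})$ with $\d_1=-v_{k_1-1}(x_0)-v_0(x_0)$; wait — I should be careful to reproduce the paper's sign convention exactly, namely $\d=-v_k(x_0)-v_0(x_0)$ as written, and mirror it. Similarly $\norm(v_{k_1}),\dots,\norm(v_{k_1+k_2-1})$ gives $\norm(v_{k_1})\xra{\s_2}^{\d_2}\norm(v_{k_1+k_2-1})$. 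Since $v_{k_1-1}=v_{k_1}$ as functions, $\norm(v_{k_1-1})=\norm(v_{k_1})$, so the two executions compose into $\norm(v_0)\xra{\s_1\s_2}^{\d_1+\d_2}\norm(v_{k_1+k_2-1})$, and one checks $\d_1+\d_2$ matches the required delay for the concatenated graph by adding the two sign-convention expressions and cancelling the shared term.

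For the second bullet, start from an execution $v_0\xra{\s_1\s_2}^{\d}v_{k_1+k_2-1}$. Split it at the boundary: there is an intermediate valuation $w$ and delays $\d_1,\d_2$ with $v_0\xra{\s_1}^{\d_1}w$ and $w\xra{\s_2}^{\d_2}v_{k_1+k_2-1}$, $\d=\d_1+\d_2$. Apply the reflection property of $G_1$ to $v_0\xra{\s_1}^{\d_1}w$: the pair $v_0$, $w-\d_1$ extends to a solution $u_0,\dots,u_{k_1-1}$ of $G_1$ with $u_0=v_0$ and $u_{k_1-1}=w-\d_1$. Apply the reflection property of $G_2$ to $w\xra{\s_2}^{\d_2}v_{k_1+k_2-1}$: the pair $w$, $v_{k_1+k_2-1}-\d_2$ extends to a solution $u'_{0},\dots,u'_{k_2-1}$ of $G_2$. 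The technical point here is that I need the two solutions to agree on the glue column; they currently agree only up to the additive constant $\d_1$ (since $G_2$'s partial solution starts at $w$, not $w-\d_1$). But loose-valuation solutions of a transformation graph are invariant under adding a common constant to all valuations in the sequence — all constraints are differences $v_p(x_q)-v_i(x_j)$ and the loose-validity condition $v(x_i)\ge v(x_0)$ is shift-invariant — so I may shift the $G_2$ solution down by $\d_1$, getting $u'_0=w-\d_1=u_{k_1-1}$ and $u'_{k_2-1}=v_{k_1+k_2-1}-\d$. Now concatenating $u_0,\dots,u_{k_1-1},u'_0,\dots,u'_{k_2-1}$ (after relabelling the second block to columns $k_1,\dots,k_1+k_2-1$) satisfies (i), (ii), (iii) of the equivalence, hence is a solution of $G_1\odot G_2$; its leftmost column is $v_0$ and its rightmost is $v_{k_1+k_2-1}-\d$, which is exactly the required extension. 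The main obstacle is this shift-alignment bookkeeping — matching the sign conventions for $\d$ and the additive-constant freedom of loose-valuation solutions — rather than anything deep; the combinatorial gluing equivalence is immediate from the definition of $\odot$, and the reflection of individual transitions is Lemma~\ref{lem:tgraph-reflects-t} combined with associativity and induction on $\s_1$, $\s_2$ being built from single transitions.
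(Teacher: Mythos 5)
The paper states Lemma~\ref{lem:trans-graph-composition} without any proof, so there is no argument to compare against; your write-up is correct and supplies exactly the proof the authors evidently intend. The three ingredients you isolate are the right ones: the gluing equivalence (a solution of $G_1\odot G_2$ is precisely a solution of $G_1$ followed by a solution of $G_2$ that agree on the interface column, forced by the two inserted weight-$(\le,0)$ edges), the invariance of solutions under adding a common constant to all loose valuations (which also preserves the condition $v(x_i)\ge v(x_0)$), and the resulting shift of the $G_2$ block by $-\d_1$ so that it starts at $w-\d_1$ and ends at $v_{k_1+k_2-1}-\d$; you are also right that the delay formula in Definition~\ref{def:reflecting-a-transition} must be read as $\d=-v_k(x_0)+v_0(x_0)$ for the telescoping to go through, the printed version being a sign typo. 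The only superfluous remark is the closing appeal to Lemma~\ref{lem:tgraph-reflects-t} and induction, which pertains to the later statement that $G_\s$ reflects $\s$, not to the present lemma, whose hypotheses already grant that $G_1$ and $G_2$ reflect $\s_1$ and $\s_2$.
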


The transformation graph for a sequence of transitions is therefore
defined using the composition operation on graphs.

\begin{definition}
  The transformation graph $G_\s$ for a transition sequence $\s:= \xra{t_1}
  \xra{t_2} \dots \xra{t_k}$ is given by $G_1 \odot G_2 \cdots \odot
  G_k$ where $G_i$ is the transformation graph of $t_i$.
\end{definition}

From Lemma~\ref{lem:trans-graph-composition}, we get the following
property.
\begin{lemma}
  For every sequence of transitions $\s$ the graph $G_\s$ reflects
  $\s$.
\end{lemma}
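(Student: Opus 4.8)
The plan is to prove the statement ``For every sequence of transitions $\s$ the graph $G_\s$ reflects $\s$'' by induction on the length $k$ of $\s$, using the composition machinery already set up. The base case $k=1$ is exactly Lemma~\ref{lem:tgraph-reflects-t}: the single-transition graph $G_{t_1}$ reflects $t_1$. For the inductive step, write $\s = \s' \, t_k$ where $\s' = \xra{t_1}\cdots\xra{t_{k-1}}$, so that $G_\s = G_{\s'} \odot G_{t_k}$ by associativity of $\odot$ (the preceding lemma) and the definition of $G_\s$. By the induction hypothesis $G_{\s'}$ reflects $\s'$, and by Lemma~\ref{lem:tgraph-reflects-t} the graph $G_{t_k}$ reflects $t_k$. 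Then Lemma~\ref{lem:trans-graph-composition}, applied with $\s_1 = \s'$ and $\s_2 = t_k$, immediately gives that $G_{\s'} \odot G_{t_k}$ reflects $\s' t_k = \s$.

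The only subtlety worth spelling out is that the definition of $G_\s$ as $G_1 \odot G_2 \cdots \odot G_k$ is unambiguous precisely because $\odot$ is associative, so that regrouping as $(G_1 \odot \cdots \odot G_{k-1}) \odot G_k$ is legitimate and matches $G_{\s'} \odot G_{t_k}$ with $G_{\s'} = G_1 \odot \cdots \odot G_{k-1}$; one should cite the associativity lemma at this point. After that, the argument is a one-line invocation of Lemma~\ref{lem:trans-graph-composition}. I would also remark that this statement is really a corollary — it is the intended conclusion of the entire subsection — and the work has already been done in Lemma~\ref{lem:trans-graph-composition}, so there is essentially no obstacle here. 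If anything, the ``hard part'' was proving Lemma~\ref{lem:trans-graph-composition} itself (matching up solutions of $G_1 \odot G_2$ with executions of $\s_1 \s_2$ via the glue edges of weight $(\le,0)$ that force agreement of the shared column), but that is assumed given.

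Concretely, the proof reads: \emph{By induction on $k$. If $k=1$, this is Lemma~\ref{lem:tgraph-reflects-t}. For $k>1$, let $\s'=\xra{t_1}\cdots\xra{t_{k-1}}$; by associativity of $\odot$ we have $G_\s = G_{\s'}\odot G_{t_k}$, where $G_{\s'}=G_1\odot\cdots\odot G_{k-1}$ reflects $\s'$ by the induction hypothesis and $G_{t_k}$ reflects $t_k$ by Lemma~\ref{lem:tgraph-reflects-t}. Lemma~\ref{lem:trans-graph-composition} then yields that $G_\s$ reflects $\s' t_k = \s$.} That is the entire argument; no case analysis on the edge structure or on the form of the guards is needed, since all of that is encapsulated in the cited lemmas.
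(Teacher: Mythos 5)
Your proof is correct and matches the paper's intended argument: the paper derives this lemma directly from Lemma~\ref{lem:trans-graph-composition} (with Lemma~\ref{lem:tgraph-reflects-t} as the base case), which is precisely the induction you spell out. No discrepancy to report.
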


We will make use of the transformation graph $G_\s$
to check if $\s$ is $\w$-iteratable. The following two corollaries are a first step in
this direction. They use Lemma~\ref{lem:negative-cycles} and
Lemma~\ref{lem:w-iteratability} to characterize iteratability in terms
of transformation graphs.
 
\begin{corollary}
  A sequence of transitions $\s$ is executable iff $G_\s$ does not have
  a negative cycle.
\end{corollary}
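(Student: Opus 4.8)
The plan is to combine the two characterizations we have already established: Lemma~\ref{lem:w-iteratability}, which says that a sequence $\s$ is executable (i.e. has a single execution, the $1$-fold case being enough here, or more precisely we want executability which is exactly "the $1$-fold concatenation has an execution") in terms of finite executions, and the fact that $G_\s$ reflects $\s$ together with Lemma~\ref{lem:negative-cycles}, which characterizes when a transformation graph has a solution in terms of absence of negative cycles. So the proof is essentially a chain of equivalences: $\s$ is executable $\iff$ $\s$ has an execution $\iff$ $G_\s$ has a solution $\iff$ $G_\s$ has no negative cycle.

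First I would argue the direction "$\s$ executable $\Rightarrow$ $G_\s$ has no negative cycle". If $\s$ is executable, there is some valuation $v_0$ and delays with $v_0 \xra{\s}^\d v_k$. Since $G_\s$ reflects $\s$ (by the last lemma before the corollary), the second clause of Definition~\ref{def:reflecting-a-transition} tells us that the pair $v_0$, $(v_k-\d)$ can be extended to a solution of $G_\s$. Hence $G_\s$ has a solution, and then by Lemma~\ref{lem:negative-cycles} it has no cycle of negative weight.

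Conversely, for "$G_\s$ has no negative cycle $\Rightarrow$ $\s$ executable": by Lemma~\ref{lem:negative-cycles}, $G_\s$ has a solution $v_0,\dots,v_k$. By the first clause of Definition~\ref{def:reflecting-a-transition}, any solution yields $\norm(v_0)\xra{\s}^\d\norm(v_k)$ for the appropriate $\d$; since $\norm(v_0)$ is a genuine (standard) valuation, this exhibits an execution of $\s$, so $\s$ is executable from $\norm(v_0)$.

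The only mild subtlety — and the place I would be most careful — is matching up "executable" in the statement of this corollary with "has an execution" as used in the reflection definition and in Lemma~\ref{lem:w-iteratability}. The definitions in the paper make "$\s$ is executable from $v$" mean exactly that there is an execution $v_0,\dots,v_k$ with $v = v_0$, so "$\s$ is executable" is just "$\s$ is executable from some $v$", which is precisely "$\s$ has an execution"; no appeal to the $n$-fold statement of Lemma~\ref{lem:w-iteratability} is actually needed for this corollary, only the reflection property and Lemma~\ref{lem:negative-cycles}. I expect no real obstacle here — it is a bookkeeping argument — but one should double-check that the $\norm(\cdot)$ conversion does not lose anything, which it does not, since $\norm$ of a loose valuation is always a well-defined standard valuation and the reflection definition is stated directly in those terms.
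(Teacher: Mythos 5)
Your proof is correct and follows exactly the route the paper intends: the paper states this corollary without a written proof, noting only that it follows from the reflection property of $G_\s$ together with Lemma~\ref{lem:negative-cycles}, which is precisely the chain of equivalences you give. Your observation that Lemma~\ref{lem:w-iteratability} is not actually needed for this corollary (only for the subsequent one about $\w$-iterability) is also accurate.
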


\begin{corollary}
  A sequence of transitions $\s$ is $\w$-iteratable iff for every
  $n=1,2,\dots$ the $n$-th fold composition $(G_\s\odot\cdots\odot
  G_\s)$ of $G_\s$ does not have a negative cycle.
\end{corollary}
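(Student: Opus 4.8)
The plan is to chain the characterization of $\w$-iterability via finite executions (Lemma~\ref{lem:w-iteratability}) with the correspondence between executability and the absence of negative cycles (the preceding Corollary), transported along the composition operator $\odot$.

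First I would observe that the $n$-fold composition $G_\s\odot\cdots\odot G_\s$ is exactly the transformation graph $G_{\s^n}$ of the $n$-fold concatenation $\s^n$ of $\s$. Indeed, by definition $G_\s = G_{t_1}\odot\cdots\odot G_{t_k}$, and $G_{\s^n}$ is the $\odot$-composition of $n$ consecutive copies of this list of $k$ factors; since composition is associative, regrouping these $nk$ factors into $n$ blocks of $k$ yields $G_{\s^n} = G_\s\odot\cdots\odot G_\s$. (Alternatively, one proves directly by induction on $n$, using Lemma~\ref{lem:trans-graph-composition} together with the fact that $G_\s$ reflects $\s$, that the $n$-fold composition of $G_\s$ reflects $\s^n$; this weaker statement is all that is really needed.)

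Next I would apply the preceding Corollary to the sequence $\s^n$ in place of $\s$: the concatenation $\s^n$ is executable, i.e.\ has an execution, if and only if $G_{\s^n}$ has no negative cycle, equivalently if and only if the $n$-fold composition $G_\s\odot\cdots\odot G_\s$ has no negative cycle. If one prefers to keep the Corollary stated only for graphs of the form $G_\sigma$, the same short argument applies verbatim to any transformation graph that reflects $\s^n$: absence of a negative cycle gives, via Lemma~\ref{lem:negative-cycles}, a solution $v_0,\dots,v_{kn}$, and then the reflection property gives $\norm(v_0)\xra{\s^n}{}^{\d}\norm(v_{kn})$, so $\s^n$ has an execution; conversely an execution $v_0\xra{\s^n}{}^{\d}v_{kn}$ yields, again by reflection, a solution of the graph, which by Lemma~\ref{lem:negative-cycles} forbids negative cycles.

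Finally I would combine the two: by Lemma~\ref{lem:w-iteratability}, $\s$ is $\w$-iterable iff for every $n=1,2,\dots$ the $n$-fold concatenation $\s^n$ has an execution; by the previous paragraph this holds iff for every $n$ the $n$-fold composition of $G_\s$ has no negative cycle, which is precisely the assertion. I do not expect a genuine obstacle here; the only point requiring a little care is the identification of the iterated composition with $G_{\s^n}$ — or, more weakly, the fact that it reflects $\s^n$ — which is delivered by associativity of $\odot$ and Lemma~\ref{lem:trans-graph-composition}.
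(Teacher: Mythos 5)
Your proposal is correct and follows exactly the route the paper intends: the paper states this corollary without a written proof, remarking only that it follows from Lemma~\ref{lem:negative-cycles} and Lemma~\ref{lem:w-iteratability}, and your argument fills in precisely those steps (the iterated composition reflects $\s^n$ by associativity and Lemma~\ref{lem:trans-graph-composition}, executability of $\s^n$ is equivalent to absence of negative cycles via the reflection property and Lemma~\ref{lem:negative-cycles}, and the quantification over all $n$ is handled by Lemma~\ref{lem:w-iteratability}). No gaps; your care about identifying the $n$-fold composition with $G_{\s^n}$ is the only point of substance and you handle it properly.
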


Pick a transformation graph $G$ with no negative cycles. It is said to
be in \emph{canonical form} if the shortest path from some vertex $x$
to another vertex $y$ is given by the edge $x \to y$
itself. Floyd-Warshall's all pairs shortest paths algorithm can be
used to compute the canonical form. The complexity of this algorithm
is cubic in the number of vertices in the graph. As the transformation
graphs have many vertices, and the number of vertices grows each time
we perform a composition, we would like to work with smaller
transformation graphs.

\begin{definition}
  A \emph{short transformation graph}, denoted $|G|$, is obtained from
  a transformation graph $G$ without negative cycles, by first putting
  $G$ into canonical form and then restricting it to the leftmost and
  the rightmost columns.
\end{definition}

% \paragraph{Remark:} Short transformation graph is a transformation
% graph, that is it satisfies the technical condition from the
% definition.

To be able to reason about $\s$ from $|G_\s|$, we need the following
lemma which says that $|G_\s|$ reflects $\s$ as well.
\begin{lemma}\label{lem:short-trans-graph-reflects}
  Let $G$ be a transformation graph with no negative cycles. Every
  solution to the short transformation graph $|G|$ extends to a
  solution of $G$. So if $\s$ is a sequence of transitions, $|G_\s|$
  reflects $\s$ too.
\end{lemma}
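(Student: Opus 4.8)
The plan is to prove the first sentence of Lemma~\ref{lem:short-trans-graph-reflects}, namely that every solution of $|G|$ extends to a solution of $G$; the claim about $|G_\s|$ reflecting $\s$ then follows formally from the fact that $G_\s$ reflects $\s$ (Lemma~\ref{lem:trans-graph-composition} and the lemma immediately preceding), since the notion of ``reflects'' in Definition~\ref{def:reflecting-a-transition} only constrains the leftmost and rightmost columns of a solution. So I will concentrate on the extension statement.

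First I would recall the setup: $G$ has no negative cycles, and $|G|$ is obtained by canonicalizing $G$ (running Floyd--Warshall to replace every edge $u\to v$ by the shortest-path distance $D(u,v)$) and then deleting all columns except the leftmost column $C_0$ and the rightmost column $C_k$. Suppose we are given a solution of $|G|$, i.e.\ an assignment of reals to the vertices of $C_0\cup C_k$ respecting all the (canonicalized) edge constraints among those vertices. I need to produce values for the intermediate vertices so that \emph{all} edge constraints of $G$ hold. The canonical trick is to define, for an intermediate vertex $w$, a value squeezed between the ``forced lower bound coming from the left'' and the ``forced upper bound coming from the right''. Concretely, fix one reference vertex, say $(0,x_0)$, set $t := v_0(x_0)$ for the given solution, and for each intermediate vertex $w$ define
\begin{equation*}
  v(w) \;:=\; \min_{u\in C_0\cup C_k}\bigl(v(u) + D(u,w)\bigr),
\end{equation*}
where $D(u,w)$ is the shortest-path weight in $G$ from $u$ to $w$ (with the usual convention that a ``$<$'' on any edge of a shortest path makes the bound strict; one tracks the pair $(\prec, d)$ in the $(\min,+)$ semiring as is standard for DBMs). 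One must check this is well defined: $D(u,w)$ is finite because every vertex of $G$ lies on a path from $(0,x_0)$ (this is visible from the construction of $G_t$ and of composition — every column is linked forward and, through the reset/back edges, also has a route to the reference vertex), and it is $>-\infty$ because $G$ has no negative cycle.

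The verification then splits into routine sub-cases. (i) For an edge $w\to w'$ of $G$ with both endpoints intermediate: $v(w') \le v(u) + D(u,w') \le v(u) + D(u,w) + \mathrm{weight}(w\to w')$ by the triangle inequality for shortest paths, and taking the minimizing $u$ for $v(w)$ gives $v(w') \le v(w) + \mathrm{weight}(w\to w')$, which is exactly the constraint (with strictness handled correctly since any strict edge on the path or the edge itself propagates strictness). (ii) For an edge with one or both endpoints in $C_0\cup C_k$: here one has to use that the given assignment on $C_0\cup C_k$ already satisfies the canonicalized constraints, i.e.\ $v(u') \le v(u) + D(u,u')$ for $u,u'\in C_0\cup C_k$; combined with $D(u,w) + D(w,u') \ge D(u,u')$ one checks the formula does not ``overshoot'' the already-fixed boundary values and that the min is attained consistently. (iii) One checks the loose-valuation side condition $v(x_i)\ge v(x_0)$ in each column — this follows because $G$ contains, within each column, the $\le 0$ back-and-forth edges between $(i,x_0)$ and $(i,x_i)$ coming from Definition~\ref{def:trans-graph-of-a-transition}, so $D$ encodes $v(i,x_i)\ge v(i,x_0)$.

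The main obstacle I anticipate is not the inequality bookkeeping but getting the strictness discipline exactly right: shortest paths in these graphs carry weights of the form $(\prec,d)$ with $\prec\in\{<,\le\}$, the $\min$ in the displayed definition is over such pairs with the lexicographic-style order ($d$ first, then $<$ below $\le$), and one must argue that when the minimizing path ends with a ``$<$'' edge the value $v(w)$ can still be realized as a genuine real strictly below the bound — this needs the observation that only finitely many strict constraints are in play, so a small enough slack exists, exactly as in the proof of Lemma~\ref{lem:negative-cycles} (the cited Proposition~1 of \cite{NonConvex-ArXiv}). Once that is pinned down, the rest is the standard DBM canonicalization argument, and the ``reflects'' consequence is immediate. \qed
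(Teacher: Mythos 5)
The paper states this lemma without giving any proof, so there is no ``paper proof'' to compare against; judged on its own, your strategy is the natural one and almost certainly the intended one: assign to each intermediate vertex $w$ the value $\min_{u\in C_0\cup C_k}\left(v(u)+D(u,w)\right)$, verify the edge constraints of $G$ via the triangle inequality for shortest paths together with the fact that the boundary assignment already satisfies the canonicalized constraints, and then note that ``reflects'' only speaks about the outer columns. (For the second clause of Definition~\ref{def:reflecting-a-transition} you also quietly use the easy converse, that restricting a solution of $G$ to the outer columns yields a solution of $|G|$ because every edge of $|G|$ is a path constraint implied by summing edges of $G$; worth one sentence.) The finiteness of the relevant $D(u,w)$ via the forward $\le 0$ chains is correctly observed.

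Two spots are not yet watertight. The first is the strictness step, which you rightly flag as the main obstacle but then dismiss with ``a small enough slack exists''; as stated this does not go through, because the perturbations at different intermediate vertices interact. Concretely, if $w\to w'$ is a strict edge of weight $c$, the tightest bounds on both $w$ and $w'$ are strict, and the numerical minima satisfy $m_{w'}=m_w+c$ (e.g.\ the only optimal path to $w'$ passes through $w$ via this edge), then subtracting the \emph{same} $\epsilon$ at both vertices yields $v(w')-v(w)=c$, violating the strict constraint; one needs a graded slack (e.g.\ proportional to the number of strict edges on the optimal path), or better, the vertex-by-vertex extension in which one shows the admissible open-or-closed interval for each new vertex is nonempty and picks a point inside it --- the same bookkeeping as in the satisfiability proof behind Lemma~\ref{lem:negative-cycles}. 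The second spot is your check (iii): Definition~\ref{def:trans-graph-of-a-transition} does \emph{not} put $\le 0$ edges between $(i,x_0)$ and $(i,x_j)$ within a column --- the only within-column edges come from guards, and columns $0$ and $2$ of $G_t$ have none at all --- so the loose-valuation requirement $v_i(x_j)\ge v_i(x_0)$ on the intermediate columns needs a separate argument (using the paper's standing assumption that every clock is guarded and reset on $\s$) rather than an appeal to edges that are not there. Neither issue threatens the truth of the lemma, but both must be repaired for a complete proof.
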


The purpose of defining short transformation graphs is to be able to
compute negative cycles in long concatenations of $\s$
efficiently. The following lemma is important in this regard, as it
will allow us to maintain only short transformation graphs during
each stage in the computation.

\begin{lemma}
\label{lem:composition-short}
For two transformation graphs $G_1$, $G_2$ without negative cycles, we
have: $||G_1|\odot|G_2||=|G_1\odot G_2|$
\end{lemma}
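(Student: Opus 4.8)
The plan is to prove both inclusions $||G_1|\odot|G_2|| \subseteq |G_1\odot G_2|$ and $|G_1\odot G_2| \subseteq ||G_1|\odot|G_2||$ by comparing shortest-path weights between vertices in the leftmost column of $G_1$ and the rightmost column of $G_2$. Both sides are graphs on the same vertex set (the leftmost column of $G_1$ together with the rightmost column of $G_2$), so it suffices to show that for any two such vertices $u,v$ the weight of the edge $u\to v$ agrees on both sides, i.e. the shortest-path distance from $u$ to $v$ is the same whether we compute it in $|G_1|\odot|G_2|$ or in $G_1\odot G_2$. Since taking the short transformation graph is exactly ``canonicalize, then restrict to the two outer columns'', the key observation is that canonicalization commutes with the relevant restriction in the following sense: the shortest path from an outer vertex to an outer vertex never needs to leave the outer columns once the inner structure has been summarized by shortest-path edges.

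First I would set up notation: write $d_G(u,v)$ for the shortest-path weight from $u$ to $v$ in a graph $G$ (with the usual $(\le,d)$ versus $(<,d)$ ordering on weights, as in Lemma~\ref{lem:negative-cycles}), so that the canonical form of $G$ is the complete graph with edge weights $d_G(\cdot,\cdot)$, and $|G|$ is that complete graph restricted to the leftmost and rightmost columns. The crucial structural fact about $G_1 \odot G_2$ is that any path from a vertex of the leftmost column of $G_1$ to a vertex of the rightmost column of $G_2$ that is shortest can be chosen to cross the ``seam'' (the zero-weight edges between column $k_1-1$ and column $k_1$) in a controlled way; more importantly, a shortest path restricted to the part of the graph coming from $G_1$ stays inside $G_1$'s vertices, and likewise for $G_2$. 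Using this, I would show $d_{G_1\odot G_2}(u,v)$ decomposes as a minimum, over pairs $(x,y)$ of vertices in the shared middle columns (the rightmost column of $G_1$, identified via the seam edges with the leftmost column of $G_2$), of $d_{G_1}(u,x) + 0 + d_{G_2}(y,v)$ — and the seam edges make $x$ and $y$ interchangeable. That is exactly the formula one also gets by computing $d_{|G_1|\odot|G_2|}(u,v)$, because in $|G_1|\odot|G_2|$ the only vertices are the outer columns of $G_1$ and $G_2$, the middle columns are glued by the seam edges, and the edges within $|G_1|$ already carry the values $d_{G_1}(u,x)$, similarly for $|G_2|$.

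So the argument has two halves. For $|G_1\odot G_2| \subseteq ||G_1|\odot|G_2||$ (every edge weight on the left is achievable on the right): given a shortest path $\pi$ in $G_1\odot G_2$ from outer vertex $u$ to outer vertex $v$, chop it at its last visit to $G_1$'s columns and first subsequent visit to $G_2$'s columns; the prefix is a path in $G_1$ from $u$ to some middle vertex $x$, hence has weight $\ge d_{G_1}(u,x) = $ the $|G_1|$-edge $u\to x$; similarly the suffix; the seam edge in between has weight $(\le,0)$ and is present in $|G_1|\odot|G_2|$. Concatenating gives a path in $|G_1|\odot|G_2|$ of weight $\le$ weight of $\pi$. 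For the reverse inclusion: any path in $|G_1|\odot|G_2|$ uses edges that are themselves shortest-path summaries in $G_1$ or $G_2$ plus seam edges, so it lifts to a path in $G_1\odot G_2$ of the same total weight, giving $d_{G_1\odot G_2}(u,v) \le d_{|G_1|\odot|G_2|}(u,v)$. Combining the two inequalities, the outer-to-outer distances coincide, and since both short graphs are by definition the canonical form restricted to the same outer columns, the graphs are equal. One must also check the degenerate cases: if $G_1\odot G_2$ has no negative cycle then neither $G_1$, $G_2$ nor their short versions do (a negative cycle in a sub/short graph would lift to one in $G_1\odot G_2$ by Lemma~\ref{lem:short-trans-graph-reflects}-style reasoning), so the hypothesis ``without negative cycles'' propagates correctly and all the shortest-path distances above are well-defined and finite between reachable vertices.

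The main obstacle I expect is bookkeeping around the seam: in $G_1\odot G_2$ the columns $k_1-1$ and $k_1$ are distinct columns joined by bidirectional $(\le,0)$ edges, whereas in $|G_1|$ the column $k_1-1$ is the \emph{rightmost} column and in $|G_2|$ the column $0$ (which becomes column $k_1$ after composition) is the \emph{leftmost} column, so one must be careful that ``restrict to leftmost and rightmost'' applied to $G_1$ keeps exactly the vertices that the seam edges attach to, and similarly for $G_2$, and that no shortest outer-to-outer path in $G_1\odot G_2$ secretly needs to pass through an \emph{interior} column of $G_1$ and back in a way not captured by $d_{G_1}(u,x)$ with $x$ in the rightmost column of $G_1$ — but this is handled precisely because $d_{G_1}(u,x)$ already optimizes over \emph{all} $G_1$-paths, interior detours included. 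A secondary subtlety is the strictness bit: sums and comparisons of weights $(\le,d)$ and $(<,d)$ must be done with the correct ordering (strict beats non-strict at equal $d$), but this is routine once the path-decomposition is in place.
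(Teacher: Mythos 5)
The paper itself states this lemma without proof, so there is nothing to compare your argument against line by line; judged on its own, your overall strategy --- show that the outer-to-outer shortest-path distances in $G_1\odot G_2$ and in $|G_1|\odot|G_2|$ coincide, by cutting paths at the seam in one direction and lifting summary edges back to real paths in the other --- is the right one, and the second half (lifting) and the remarks on negative-cycle propagation and on the $(<,d)$ versus $(\le,d)$ ordering are fine.

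There is, however, one step that fails as written. You claim that $d_{G_1\odot G_2}(u,v)$ decomposes as $\min_{(x,y)}\bigl(d_{G_1}(u,x)+0+d_{G_2}(y,v)\bigr)$ and, correspondingly, that one can ``chop $\pi$ at its last visit to $G_1$'s columns'' so that the prefix is a path in $G_1$. A shortest path in the composed graph may cross the seam several times (enter $G_2$'s columns, return to a \emph{different} vertex of $G_1$'s rightmost column via the zero-weight seam edges, and so on), and nothing in the hypotheses forces a single-crossing path to be optimal: the prefix up to the last visit to $G_1$'s columns is then not a path in $G_1$, and the single-minimum formula is an overclaim. The repair is standard and stays within your framework: decompose $\pi$ into \emph{maximal alternating segments}, each lying entirely in $G_1$'s columns or entirely in $G_2$'s columns; every such segment has both endpoints among the retained columns (either an outer column or a seam column), so each can be replaced by the corresponding edge of $|G_1|$ or $|G_2|$, yielding a possibly bouncing path in $|G_1|\odot|G_2|$ of weight at most that of $\pi$. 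Since $|G_1|\odot|G_2|$ also permits bouncing, this weaker segment-wise inequality, combined with your lifting argument, already gives equality of the two distance functions and hence of the two short graphs; the single-crossing claim is neither true in general nor needed.
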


% \begin{proof}
%   The only obstacle would be if $G_1\odot G_2$ had a negative cycle,
%   but not $|G_1|\odot|G_2|$. But this is not possible by our
%   technical condition on transformation graphs.
% \end{proof}

We finish this section with a convenient notation and a reformulation
of the above lemma that will make it easy to work with short
transformation graphs.

\begin{definition}\label{def:trans-to-short}
  $G_1\cdot G_2=|G_1\odot G_2|$.
\end{definition}

% \begin{lemma}
%   $G_1\cdot G_2\cdots G_k=|G_1\odot G_2\odot\cdots\odot G_k|$
% \end{lemma}
 
\begin{lemma}
  $G_1\cdot G_2=|G_1|\cdot|G_2|$. In particular, operation $\cdot$ is
  associative. 
\end{lemma}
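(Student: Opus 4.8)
The plan is to obtain the identity $G_1\cdot G_2=|G_1|\cdot|G_2|$ essentially for free from the definitions, and then bootstrap associativity out of it together with Lemma~\ref{lem:composition-short} and associativity of $\odot$. For the first claim I would simply unfold Definition~\ref{def:trans-to-short} on both sides: $G_1\cdot G_2=|G_1\odot G_2|$ while $|G_1|\cdot|G_2|=||G_1|\odot|G_2||$, and these two coincide by Lemma~\ref{lem:composition-short}. Nothing more is needed here.

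For associativity I would first isolate the auxiliary fact that \emph{short-ifying an intermediate factor is harmless}: for transformation graphs $A,B$ without negative cycles one has $|A\odot B|=||A|\odot B|=|A\odot|B||$. I would prove this from Lemma~\ref{lem:composition-short} together with idempotence of $G\mapsto|G|$, i.e.\ $||B||=|B|$. The idempotence holds because a short graph is already in canonical form (restricting a canonical graph to a set of vertices leaves it canonical, since a shortest path in the subgraph is no shorter than in the whole graph, and each surviving edge already realises the whole-graph shortest path between its endpoints) and has only its two extreme columns, so applying $|\cdot|$ again does nothing. Then Lemma~\ref{lem:composition-short} applied to the pair $(A,|B|)$ gives $||A|\odot|B||=||A|\odot||B|||=|A\odot|B||$, while applied to $(A,B)$ it gives $||A|\odot|B||=|A\odot B|$; chaining these yields $|A\odot|B||=|A\odot B|$, and the symmetric argument (using $||A||=|A|$) gives $||A|\odot B|=|A\odot B|$.

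With this in hand associativity is a short computation: expanding the definitions, stripping the inner $|\cdot|$ via the auxiliary fact, and then using associativity of $\odot$,
\[
(G_1\cdot G_2)\cdot G_3=\bigl|(G_1\cdot G_2)\odot G_3\bigr|=\bigl|\,|G_1\odot G_2|\odot G_3\,\bigr|=|(G_1\odot G_2)\odot G_3|=|G_1\odot G_2\odot G_3|,
\]
and symmetrically $G_1\cdot(G_2\cdot G_3)=\bigl|G_1\odot|G_2\odot G_3|\bigr|=|G_1\odot(G_2\odot G_3)|=|G_1\odot G_2\odot G_3|$, so the two sides agree.

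The only genuinely fiddly point I anticipate is the bookkeeping around well-definedness: $|\cdot|$ is only defined on graphs without negative cycles, so I would note (or simply fold into the hypotheses, exactly as Lemma~\ref{lem:composition-short} already does) that all the compositions appearing are negative-cycle-free — equivalently, that $G_1\odot\cdots\odot G_k$ has a negative cycle iff $|G_1|\odot\cdots\odot|G_k|$ does, which again follows because short graphs preserve all shortest-path information. Everything else is routine manipulation of the two displayed identities.
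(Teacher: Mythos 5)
Your proof is correct, and it follows the route the paper evidently intends: the paper states this lemma without proof, treating it as an immediate consequence of Definition~\ref{def:trans-to-short} and Lemma~\ref{lem:composition-short}. Your write-up supplies exactly the missing details --- in particular the idempotence $||G||=|G|$ (a short graph is already canonical and two-column) and the resulting identities $||A|\odot B|=|A\odot|B||=|A\odot B|$, from which associativity follows via associativity of $\odot$ --- together with the appropriate caveat about negative-cycle-freeness.
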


Based on Definition~\ref{def:trans-to-short}, the transformation graph
$G_\s$ for a transition sequence $\s$ is in fact a two column graph
(left column with variables of the form $(0, x)$ and the right column
with variables of the form $(1,x)$). In the next section, we will use
this two column graph $G_\s$ to reason about the transition sequence
$\s$. 

%%% Local Variables: 
%%% mode: latex
%%% TeX-master: "m"
%%% End: 

\section{A pattern making $\w$-iteration impossible}
\label{sec:pattern-makes-w}

The effect of the sequence of timed transitions is fully described by
its transformation graph. We will now define a notion of a pattern in
a transformation graph that characterises those sequences of
transitions that cannot be $\w$-iterated. This characterization gives
directly an algorithm for checking $\w$-iterability.

For this section, fix a set of clocks $X$. We denote the number of
clocks in $X$ by $n$. Let $\s$ be a sequence of transitions $\s$, and
let $G_\s$ be the corresponding transformation graph. Let
$\lleft(G_\s)$ denote the zone that is the restriction of $G_\s$ to
the leftmost variables. Similarly for $\rright(G_\s)$, but for the
rightmost variables.  As $G_\s$ reflects $\s$, the zone $\lleft(G_\s)$
describes the maximal set of valuations that can execute $\s$
once. Similarly, when we consider the $i$-fold composition,
$(G_\s)^i$, the zone $\lleft((G_\s)^i)$ describes the set of
valuations that can execute $\s$ $i$-times.

We want to find a constant $\k$ such that if $\s$ is $\w$-iterable,
the left columns of $(G_\s)^\k$ and $(G_\s)^{2\k}$ are the same. Note
that as the number of regions is finite, we will eventually reach $i$
and $j$ such that $\lleft((G_\s)^i)$ and $\lleft((G_\s)^j)$ intersect
the same set of regions. Moreover as the left column is not
increasing, we would eventually get a $\k$ that we want. However with
this na\"{i}ve approach we get a bound $\k$ which is even exponential
in the number of regions.
% In this section, we will find a (much) smaller $\k$ so that the main
% cost of the algorithm will be computing $G_\s$.

% \begin{lemma}\label{lem:left-composition}
%   If $\lleft(G_2)=\lleft(G'_2)$ then $\lleft(G_1\cdot
%   G_2)=\lleft(G_1\cdot G'_2)$.  If $\k$ is such that
%   $\lleft(|G_\s|^\k)=\lleft(|G_\s|^{2\k})$ then $\lleft(|G_\s|^\k)$
%   is the set of valuations from which $\s$ is $\w$-iterable.
% \end{lemma}

We will show that if $\s$ is $\w$-iterable then $\k\le n^2$. The trick
is to study shortest paths in the graph $(G_\s)^i$. To this regard, we
will look at paths in this composition as trees, which we call
\emph{p-trees}. We begin with an illustration of a p-tree in
Figure~\ref{fig:p-tree}. It explains the definition of a p-tree and
shows the one-to-one correspondence between paths in a composition of
$G_\s$ and $p$-trees. The weight of a path is the sum of weights in
the tree.

\begin{figure}[htbp]
  \centering
  \includegraphics[scale=.5]{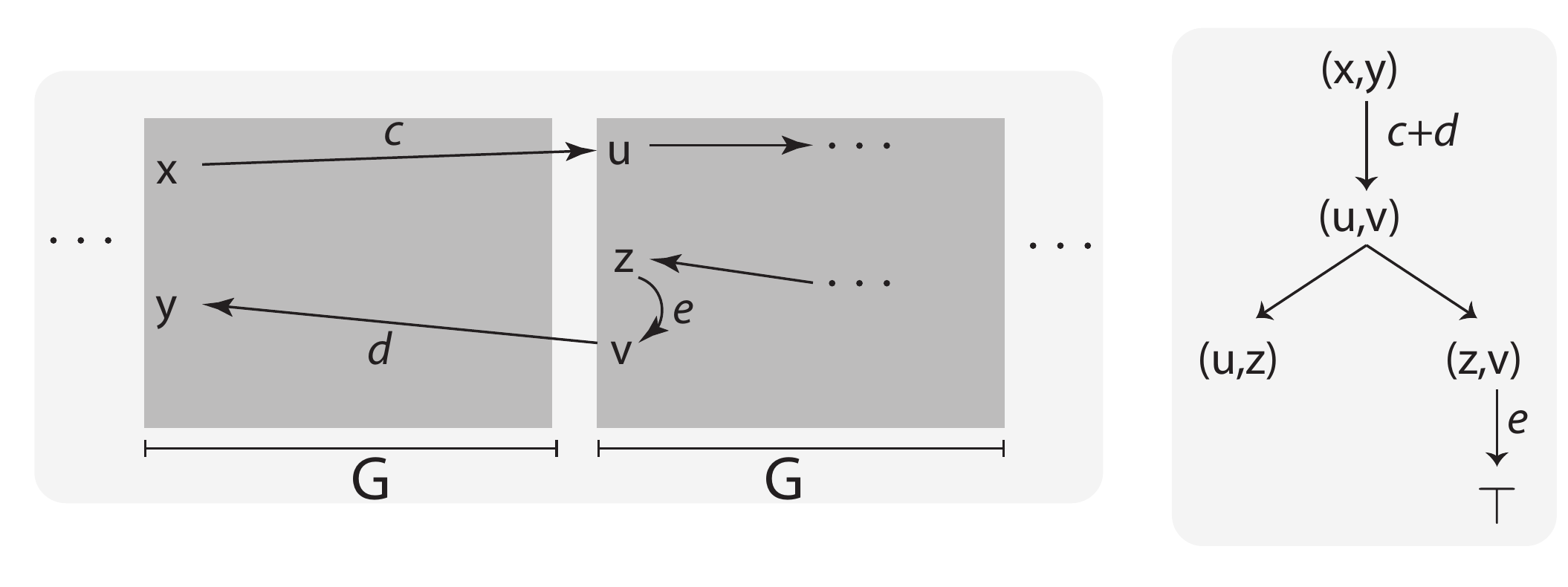}
  \caption{A path in a composition of $G$'s and the corresponding
    p-tree.}
  \label{fig:p-tree}
\end{figure}

\begin{definition}
  A \emph{p-tree} is a weighted tree whose nodes are labelled with
  $\top$ or pairs of variables from $X$. Nodes labelled by $\top$ are
  leaves. A node labelled by a pair of variables $(x,y)$ can have one
  or two children as given in
  Figure~\ref{fig:p-tree-defn}. 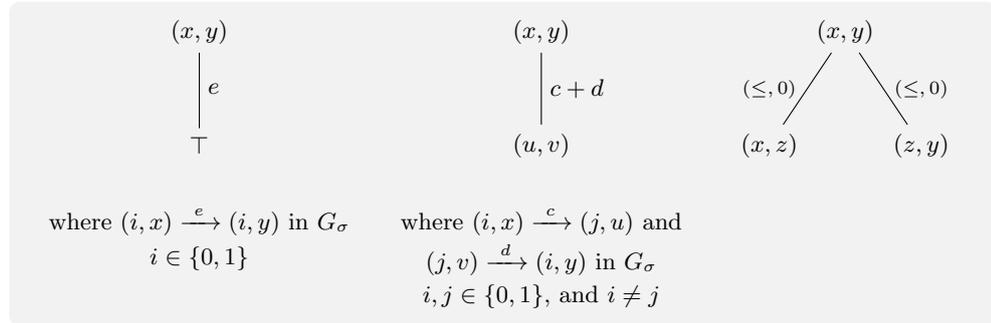
\begin{figure}[!h]
\centering
\begin{tikzpicture}

\fill [rounded corners, gray!10] (-2.5, -3.4) rectangle (10.5,0.9);
\node (s0) at (0,0.5) {$(x, y)$};
\node (s1) at (0,-1) {$\top$};
\draw (s0) -- (s1);
\node [right] at (0, -0.25) {\footnotesize $e$};
\node at (0, -2) {where $(i,x) \xra{~e~} (i,y)$ in $G_\s$};
\node at (0, -2.5) {$i\in\set{0,1}$};
\begin{scope}[xshift=4.5cm]
\node (s0) at (0,0.5) {$(x, y)$};
\node (s1) at (0,-1) {$(u, v)$};
\draw (s0) -- (s1);
\node [right] at (0, -0.25) {\footnotesize $c+d$};
\node at (0, -2) {where $(i,x) \xra{~c~} (j,u)$ and};
\node at (0, -2.5) {$(j,v) \xra{~d~} (i,y)$ in $G_\s$};
\node at (0, -3) {$i,j\in\set{0,1}$, and $i\not=j$};
\end{scope}

\begin{scope}[xshift=8.5cm]
\node (s0) at (0,0.5) {$(x, y)$};
\node (s1) at (-1,-1) {$(x, z)$};
\node (s2) at (1,-1) {$(z, y)$};
\draw (s0) -- (s1);
\draw (s0) -- (s2);
\node at (-1, -0.25) {\scriptsize $(\le,0)$};
\node at (1, -0.25) {\scriptsize $(\le,0)$};
%\node [right] at (0, -0.25) {\footnotesize $c+d$};

\end{scope}
\end{tikzpicture}
\caption{Possible children of $(x, y)$ in a p-tree}
\label{fig:p-tree-defn}
\end{figure}
%%% Local Variables: 
%%% mode: latex
%%% TeX-master: "../igw"
%%% End: 

  The \emph{weight of a p-tree} is the sum of weights that appear in
  it.  A $p$-tree is \emph{complete} if all the leaves are labelled by
  $\top$.

  An $(x,y)$-$(u,v)$ context is a p-tree with the root labelled
  $(x,y)$ and all the leaves labelled $\top$ except for one leaf that
  is labelled $(u,v)$.
\end{definition}

The definition of p-tree reflects the fact that each time the path can
either go to the left or to the right, or stay in the same column.
Every p-tree represents a path in an $i$-fold composition $(G_\s)^i$
and every path in $(G_\s)^i$ can be seen as a p-tree.

\begin{lemma}
  Take an $i\in \set{1,2,\dots}$ and consider an $i$-fold composition
  of $G_\s$.  For every pair of vertices $(j,x)$ to $(j,y)$ in the
  same column of $G_\s^{i}$: if there is a path of weight $w$ from
  $(j,x)$ to $(j,y)$ then there is a complete p-tree of weight $w$
  with the root labeled $(x,y)$.
\end{lemma}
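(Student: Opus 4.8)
The plan is to prove the claim by \emph{strong induction on the length $L$ of the path $\pi$ from $(j,x)$ to $(j,y)$}, reading off from $\pi$ which of the three rules of Figure~\ref{fig:p-tree-defn} to apply at the root of the p-tree.

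First I would fix notation for $(G_\s)^i=G_\s\odot\cdots\odot G_\s$, where $G_\s$ is the two-column graph of Definition~\ref{def:trans-to-short}. Then $(G_\s)^i$ has columns $0,\dots,2i-1$; the $p$-th copy of $G_\s$ (for $0\le p<i$) occupies columns $2p$ and $2p+1$, as its column $0$ and its column $1$ respectively, and composition adds, for every clock and every $0\le p\le i-2$, two edges of weight $(\le,0)$ between $(2p+1,\cdot)$ and $(2p+2,\cdot)$. I would isolate two facts: \textbf{(a)} every edge of $(G_\s)^i$ changes the column index by at most $1$; \textbf{(b)} an edge inside a single column $\ell$ is an edge of $G_\s$ between two vertices of its column $\ell\bmod2$; an edge between columns $\ell$ and $\ell+1$ with $\ell$ even is a $G_\s$-edge, between columns $0$ and $1$ of the enclosing copy, in the matching direction; and an edge between columns $\ell$ and $\ell+1$ with $\ell$ odd is one of the gluing edges, hence of weight $(\le,0)$ and clock-preserving.

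For the base case $L=0$ we have $x=y$ and weight $(\le,0)$; as $G_\s$ is in canonical form and has no negative cycle, it has a self-loop of weight $(\le,0)$ at the vertex for $x$ in column $j\bmod 2$, so the one-node tree rooted at $(x,x)$ with a $\top$-child, formed by the first rule, works. For $L\ge1$ I would split into two cases. If some intermediate vertex of $\pi$ lies in column $j$, say $(j,z)$, cut $\pi$ there into $\pi_1$ from $(j,x)$ to $(j,z)$ and $\pi_2$ from $(j,z)$ to $(j,y)$; both are shorter, the induction hypothesis gives complete p-trees rooted at $(x,z)$ and $(z,y)$ of weights $w(\pi_1),w(\pi_2)$, and the third rule glues them under a root $(x,y)$, the two new $(\le,0)$-edges contributing nothing, so the total weight is $w(\pi)$. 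Otherwise no intermediate vertex of $\pi$ is in column $j$. If $L=1$, then $\pi$ is a single edge inside column $j$, hence by \textbf{(b)} a $G_\s$-edge between the column-$(j\bmod2)$ vertices for $x$ and $y$ of weight $w(\pi)$, and the first rule applies. If $L\ge2$, then by \textbf{(a)} the first edge $e_1$ of $\pi$ leaves column $j$, say to column $j+1$ (the move to $j-1$ is symmetric, swapping the two columns of $G_\s$); using \textbf{(a)} and the case hypothesis, all intermediate vertices then lie in columns $\ge j+1$, the penultimate vertex is some $(j+1,x')$, and $\pi=e_1\,\pi'\,e_L$ with $\pi'$ a path from $(j+1,x_1)$ to $(j+1,x')$ of length $L-2$. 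By \textbf{(b)}, either the step $j\to j+1$ is a gluing step — then $e_1,e_L$ are $(\le,0)$ and clock-preserving, so $x_1=x$, $x'=y$, $\pi'$ already runs from $(j+1,x)$ to $(j+1,y)$ with weight $w(\pi)$, and the induction hypothesis applied to $\pi'$ gives the wanted p-tree rooted at $(x,y)$ directly — or $j$ is even, and then $e_1$ is a $G_\s$-edge from $(0,x)$ to $(1,x_1)$ of some weight $c$, $e_L$ a $G_\s$-edge from $(1,x')$ to $(0,y)$ of some weight $d$; the induction hypothesis on $\pi'$ yields a complete p-tree rooted at $(x_1,x')$ of weight $w(\pi')$, and the second rule (using the two distinct columns $0$ and $1$) hangs it below a root $(x,y)$ along an edge of weight $c+d$, for a total weight $c+d+w(\pi')=w(\pi)$. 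In each case the tree produced is complete because the trees coming from the induction hypothesis are.

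The one genuinely delicate point is the parity bookkeeping in the last case: one must decide whether a step between adjacent columns of $(G_\s)^i$ is an honest $G_\s$-edge, to be matched with the second rule, or one of the zero-weight clock-preserving gluing edges, in which case the induction hypothesis carries over verbatim; and one must check that after removing $e_1$ and $e_L$ the remainder $\pi'$ truly starts and ends in the same column, so that the induction hypothesis applies to it. The extreme columns $j=0$ and $j=2i-1$ require no separate treatment, since they merely rule out one of the directions $j\pm1$ for the first edge — precisely the direction the argument does not need.
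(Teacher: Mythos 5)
Your proof is correct and follows essentially the same route as the paper's: induction on the length of the path, with the same three-way case split (a single in-column edge gives the first rule, a crossing of column $j$ gives the third rule, and an excursion to one side gives the second rule applied to the truncated path). The only difference is presentational: you work with the literal $\odot$-composition and explicitly track the zero-weight gluing columns by parity, whereas the paper implicitly identifies the glued columns; this makes your write-up slightly more careful but does not change the argument.
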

\begin{proof}
  Consider a path $\pi$ from $(j,x)$ to $(j, y)$. We construct its
  p-tree by an induction on the length of $\pi$. If $\pi$ is just an
  edge $(j, x) \xra{~e~} (j,y)$, then the corresponding p-tree would
  have $(x,y)$ as root and a single child labeled $\top$. The weight
  of the edge would be $e$.

  Suppose $\pi$ is a sequence of edges. We make the following division
  based on its shape.
  
  \paragraph*{Case 1:} The path $\pi$ could cross the $j^{th}$ column
  at some vertex $(j,z)$ before reaching $(j, y)$: $(j,x) \xra{}
  \cdots \xra{} (j,z) \xra{} \cdots \xra{} (j,y)$. By induction, there
  are complete p-trees for the paths $(j,x) \xra{} \cdots (j, z)$ and
  $(j,z) \xra{} \dots (j,y)$. The p-tree for $\pi$ would have $(x,y)$
  as root and $(x, z)$ and $(z,y)$ as its two children. The edge
  weights to the two children would be $(\le,0)$.  The p-trees of the
  smaller paths would be rooted at $(x,z)$ and $(z,y)$.
  
  \paragraph*{Case 2:} The path $\pi$ never crosses the $j^{th}$
  column before reaching $(j,y)$. This means the path is entirely to
  either the left or right of the $j^{th}$ column. Let us assume it is
  to the right. The case for left is similar. So $\pi$ looks like:
  $(j,x) \xra{c} (j+1, u) \cdots (j+1, v) \xra{d} (j,y)$. Note that
  $u$ cannot be equal to $v$ since the path of the minimal weight must
  be simple (in the case when the path has exactly three nodes $(j,x),
  (j+1,v), (j,y)$, there is actually a direct edge from $(j,x)$ to
  $(j,y)$, as $|G_\s|$ is in the canonical form). By induction, the
  smaller path segment from $(j+1, u)$ to $(j+1, v)$ has a p-tree. The
  p-tree for $\pi$ would have $(x,y)$ as root and a single child
  $(u,v)$. The weight of this edge would be $c+d$. The p-tree for the
  smaller path would be rooted at $(u,v)$.
\end{proof}

\begin{lemma}\label{lem:p-tree-to-path}
  If there is a complete p-tree with the root $(x,y)$, weight $w$ and
  height $k$ then in $G_\s^{2k}$ the weight of the shortest path
  from $(k,x)$ to $(k,y)$ is at most $w$.
\end{lemma}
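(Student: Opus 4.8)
The plan is to induct on the height $k$ of the complete p-tree, reading the tree as a recipe that assembles a walk in the composition while keeping track of both its weight and the set of columns it visits. Recall that the glue edges introduced by $\odot$ have weight $(\le,0)$ in both directions for every clock, so for the purpose of shortest paths the two glued columns of $G_\s\odot G_\s$ coincide, and $G_\s^m$ may be regarded as a graph with columns $0,1,\dots,m$ in which each consecutive pair is joined by one copy of the body of $G_\s$; a vertex $(p,x)$ with $0<p<m$ then doubles as column $1$ of the copy to its left and column $0$ of the copy to its right.

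I would prove the following strengthened statement by induction on $h$: if there is a complete p-tree with root $(a,b)$, weight $w'$ and height at most $h$, then for every $m$ and every column $p$ with $h\le p\le m-h$ the graph $G_\s^m$ contains a walk from $(p,a)$ to $(p,b)$ of weight at most $w'$ whose vertices all lie in columns $p-h,\dots,p+h$; the lemma is the case $h=k$, $m=2k$, $p=k$. The base case $h=1$ is forced to use the single-$\top$-child rule, so the root sits above a single edge $(i,a)\xra{e}(i,b)$ of $G_\s$ with $i\in\{0,1\}$ and $w'=e$; realising this edge of $G_\s$ in the copy to the right of column $p$ (when $i=0$) or to the left (when $i=1$) gives a one-edge walk from $(p,a)$ to $(p,b)$ of weight $e$, and $1\le p\le m-1$ guarantees that the relevant copy exists.

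For the inductive step I would go through the three shapes of Figure~\ref{fig:p-tree-defn}. For the two-children shape, the root $(a,b)$ has children $(a,z)$ and $(z,b)$ joined by weight-$(\le,0)$ edges, each heading a complete p-tree of height at most $k-1$; applying the induction hypothesis at the same column $p$ gives walks from $(p,a)$ to $(p,z)$ and from $(p,z)$ to $(p,b)$ inside columns $p-(k-1),\dots,p+(k-1)$, and their concatenation is the required walk, of weight at most $w'$ because the two splitting edges contribute nothing. For the single-child shape, the root sits above edges $(i,a)\xra{c}(j,a')$ and $(j,b')\xra{d}(i,b)$ of $G_\s$ with $i\ne j$; I would use these to step once into the neighbouring column $p+1$ (if $i=0$) or $p-1$ (if $i=1$), apply the induction hypothesis to the subtree rooted at $(a',b')$ there, and splice, obtaining a walk of weight at most $c+d+(w'-c-d)=w'$ that still lies within columns $p-k,\dots,p+k$; the condition $k\le p\le m-k$ is exactly what is needed so that the induction hypothesis applies at the shifted column.

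The main obstacle is purely bookkeeping: one must be careful about which copy of $G_\s$ realises each edge that the p-tree refers to, about the resulting column indices, and about checking in every case that the walk never leaves columns $0,\dots,2k$ --- this is precisely what the strengthened hypothesis, with its explicit range $p-h,\dots,p+h$, is designed to make mechanical. A final, trivial remark is that the weight-$(\le,0)$ splitting edges and the weight-$0$ glue edges never increase the numeric weight, so a walk of weight $\le w$ witnesses that the shortest path has weight at most $w$ (and if $G_\s^{2k}$ has a negative cycle the bound holds trivially).
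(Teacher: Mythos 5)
Your proof is correct and takes essentially the same route as the paper's: unfold the p-tree into a walk in the composition, observing that descending one level in the tree moves the walk at most one column left or right, so a tree of height $k$ yields a walk confined to the $2k+1$ columns around column $k$, of weight equal to (or at most) the tree's weight. The paper states this in two sentences without your explicit strengthened induction hypothesis on the column range; your version merely makes that bookkeeping precise.
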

\begin{proof}
  For every level moved down in the p-tree, the corresponding path
  either moves one column left or one column right. As the height of
  the p-tree is bounded by $k$, there is a path that spans less than
  $k$ columns. This path goes from $(k, x)$ to $(k,y)$ in the
  composition $G_\s^{2k}$. Its weight is the weight of the
  tree. Therefore the smallest weight of a path from $(k,x)$ to
  $(k,y)$ is at most $w$.
\end{proof}

Now that we have established the correspondence between paths and
p-trees, we are in a position to define a \emph{pattern} in the paths
that causes non-iterability.

\begin{definition}
  A \emph{pattern} is an $(x,y)$-$(x,y)$ context for some variables
  $x$ and $y$, whose weight is negative
  (c.f. Figure~\ref{fig:pattern}).  We say that $\s$ \emph{admits} a
  pattern if there is a pattern (notice that the definition of p-tree
  depends on $\s$).
\end{definition}

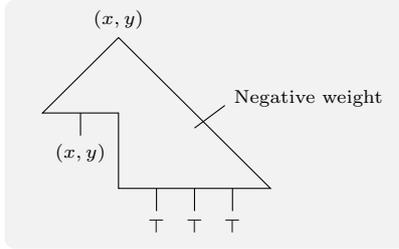
\begin{figure}
\centering

\begin{tikzpicture}

%\draw [help lines] (-2,0) grid (2,-2);

\fill [rounded corners, gray!10] (-1.5, -2.8) rectangle (3.8, 0.5);
\draw (0,0) -- (-1,-1) -- (0,-1) -- (0, -2) -- (2, -2) -- (0,0);
%\draw (-0.8, -1) -- (-0.8, -1.3);
\draw (-0.5, -1) -- (-0.5, -1.3);
\draw (0.5, -2) -- (0.5, -2.3);
\draw (1, -2) -- (1, -2.3);
\draw (1.5, -2) -- (1.5, -2.3);

\node [above] at (0,0) {\scriptsize $(x,y)$};
\node [below] at (-0.5, -1.3) {\scriptsize $(x,y)$};
\node [below] at (0.5, -2.3) {\scriptsize $\top$};
\node [below] at (1, -2.3) {\scriptsize $\top$};
\node [below] at (1.5, -2.3) {\scriptsize $\top$};

\node at (2.5, -0.8) {\scriptsize Negative weight};
\draw [thin] (1, -1.2) -- (1.4, -.9);

\end{tikzpicture}
\caption{Pattern: root is $(x,y)$ and the only leaf that is not $\top$
  is $(x,y)$ again. Moreover, the weight has to be negative.}
\label{fig:pattern}
\end{figure}

%%% Local Variables: 
%%% mode: latex
%%% TeX-master: "../igw"
%%% End: 

The next proposition says that patterns characterize
$\w$-iterability. Morever, it ensures that we can conclude after $n^2$
iterations.  For transformation graphs $G_1, G_2$, we will write
$G_1\bumpeq G_2$ if $\lleft(G_1)=\lleft(G_2)$ and
$\rright(G_1)=\rright(G_2)$.

\begin{proposition}\label{prop:pattern-characterization}
  If $\s$ admits a pattern then there is no valuation from which $\s$
  is $\w$-iterable.  If $\s$ does not admit a pattern, then for every
  $i=1,2\dots$ we have $(G_\s)^{n^2}\bumpeq(G_\s)^{n^2+i}$.
  % $\lleft(|G_\s|^{2n^2})=\lleft(|G_\s|^{2n^2+i})$ and
  % $\rright(|G_\s|^{2n^2})=\rright(|G_\s|^{2n^2+i})$.
\end{proposition}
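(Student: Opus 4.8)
The plan is to prove the two halves separately, and in each case to translate statements about negative cycles in $(G_\s)^i$ into statements about p-trees using the correspondence established in the previous lemmas.

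\textbf{First half (pattern $\Rightarrow$ not $\w$-iterable).} Suppose $\s$ admits a pattern, i.e.\ an $(x,y)$-$(x,y)$ context $C$ of negative weight and some height $h$. The idea is that a pattern can be ``pumped'': by substituting a copy of $C$ into its own distinguished $(x,y)$-leaf we obtain an $(x,y)$-$(x,y)$ context of weight $2\cdot\mathrm{wt}(C)$, and iterating $m$ times a context of weight $m\cdot\mathrm{wt}(C)$. Plugging the base case of the pattern (the single edge giving an $(x,y)$-$\top$ leaf that must exist to close off the context into a complete p-tree) we get, for every $m$, a \emph{complete} p-tree with root $(x,y)$ and weight $m\cdot\mathrm{wt}(C)+O(1)$, which is negative and unbounded below as $m\to\infty$. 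Its height is linear in $m$, say $\le km$. By Lemma~\ref{lem:p-tree-to-path}, in $(G_\s)^{2km}$ there is a path from $(km,x)$ to $(km,y)$ of weight at most $m\cdot\mathrm{wt}(C)+O(1)$. But $(G_\s)^{2km}$ also contains the reverse direct edge from $(km,y)$ to $(km,x)$ (its weight is some fixed finite value $w_0$, independent of $m$, since that edge already appears in $G_\s$ placed at that column and shortest paths only shrink it). Composing, we get a cycle of weight $m\cdot\mathrm{wt}(C)+O(1)+w_0<0$ for $m$ large enough. By the corollary following Lemma~\ref{lem:negative-cycles}, $(G_\s)^{2km}$ has a negative cycle, so $\s^{2km}$ is not executable, hence by Lemma~\ref{lem:w-iteratability} $\s$ is not $\w$-iterable — and this holds regardless of the starting valuation.

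\textbf{Second half (no pattern $\Rightarrow$ stabilisation by $n^2$).} Here I would argue by monotonicity plus a bound on how much the left column can shrink. Since each composition adds constraints, $\lleft((G_\s)^{i+1})\subseteq\lleft((G_\s)^i)$, so it suffices to bound the number of times the canonical-form edge-weights in the left column can strictly decrease. Fix a pair $(x,y)$; the weight of the $(0,x)\to(0,y)$ edge in $(G_\s)^i$ is the minimum weight of a complete p-tree with root $(x,y)$ and height $\le i$ (by the two lemmas relating p-trees and shortest paths, together with $|G_\s|$ being in canonical form). The key claim is: if $\s$ admits no pattern, then the minimum over \emph{all} complete p-trees with root $(x,y)$ is already attained by one of height $\le n^2$ — equivalently, a minimum-weight complete p-tree can be chosen of bounded height. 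To see this, take a minimum-weight complete p-tree $T$ and look at any root-to-leaf branch. The internal nodes of the branch carry labels from $X\times X$, so if the branch has more than $n^2$ internal nodes, two of them carry the same label $(x',y')$; the subtree between them is an $(x',y')$-$(x',y')$ context, whose weight is $\ge 0$ since $\s$ admits no pattern. Excising this context (reconnecting the lower subtree directly) yields a complete p-tree with root $(x,y)$, weight $\le\mathrm{wt}(T)$, hence also minimal, and with a strictly shorter branch; repeating, we obtain a minimum-weight complete p-tree of height $\le n^2$. Therefore the left-column weights of $(G_\s)^{n^2}$ already equal the infimum over all $i$, so $\lleft((G_\s)^{n^2})=\lleft((G_\s)^{n^2+i})$ for all $i$; the same argument applied to $\rright$ (using contexts rooted at the rightmost column, symmetrically) gives $\rright((G_\s)^{n^2})=\rright((G_\s)^{n^2+i})$. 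Hence $(G_\s)^{n^2}\bumpeq(G_\s)^{n^2+i}$.

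\textbf{Main obstacle.} The delicate point is the second half: making the ``excise a nonnegative context'' surgery precise and checking that (i) the result is still a valid p-tree according to Figure~\ref{fig:p-tree-defn} (the local branching shapes are preserved because we only cut between two nodes with identical labels and reattach along the unique distinguished leaf), and (ii) that bounding \emph{every} branch by $n^2$ internal nodes really bounds the \emph{height}, so that Lemma~\ref{lem:p-tree-to-path} applies with $k\le n^2$. One must also be careful that the infimum over complete p-trees is actually attained (it is, since weights are integers bounded below once there is no negative pattern, which is exactly what rules out unbounded pumping). The first half is comparatively routine once the pumping of patterns is set up correctly; the only care needed there is to keep the ``closing'' edge and the reverse edge $w_0$ independent of the pumping parameter.
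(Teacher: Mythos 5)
Your second half is essentially the paper's own argument: the paper proves it as Lemmas~\ref{lem:no-pattern-bounded-height} and~\ref{lem:above-2n2-same}, by exactly the surgery you describe (excise the context between two repeated labels on a branch; absence of a pattern makes the excised part non-negative; a repetition-free p-tree has height at most $n^2$). Your added care about attainment of the infimum and about the left/right columns only reaching $n^2$ columns inward is consistent with the paper.

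The first half, however, has a genuine gap, and it sits exactly where the paper does most of its work in Lemma~\ref{lem:pattern-no-iteration}. Pumping the context $m$ times gives you two increasingly negative path fragments, from $(j,x)$ to $(j^{(m)},x)$ and from $(j^{(m)},y)$ back to $(j,y)$. To turn this into a negative \emph{cycle} (or into a complete p-tree plus a reverse edge, as you phrase it) you must close it off with two constraints: an upper bound on $y-x$ at one end and an upper bound on $x-y$ at the other. You assert that the closing edge ``already appears in $G_\s$ placed at that column'' and that a $\top$-leaf base case ``must exist,'' but neither is automatic: a transformation graph need not contain an intra-column edge from $x$ to $y$ at all (its canonical weight can be $+\infty$), and without both closing edges the pumped context is consistent --- for instance $y-x$ could simply grow without bound across iterations. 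The paper closes this hole in two steps: first, by Proposition~\ref{prop:pre-process} every clock is reset in $\s$, so the order of last resets forces an invariant $x\le y$ (or $y\le x$), giving one closing edge of weight $(\le,0)$ for free; second, for the remaining direction it proves (the Figure~\ref{fig:d-edge} argument) that the pumped edge from $(j,y)$ to $(i,y)$ of weight $d$ induces a bound of $d$ on the time elapsed per iteration via the $x_0$ row, hence a finite bound on the value of $y$ and on $y-x$ at the end of each iteration --- which is precisely the missing finite closing constraint. Your plan calls the first half ``comparatively routine,'' but without an argument of this kind the claimed negative cycle does not follow; you would need to add the all-clocks-reset invariant and the bounded-time-elapse step (or an equivalent) to make it go through.
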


Proof of the above proposition follows from
Lemmas~\ref{lem:pattern-no-iteration}, \ref{lem:no-pattern-bounded-height},
\ref{lem:above-2n2-same}.  

\begin{lemma}\label{lem:pattern-no-iteration}
  If $\s$ admits a pattern then there is no valuation from which $\s$
  is $\w$-iteratable.
\end{lemma}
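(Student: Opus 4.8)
The plan is to show that admitting a pattern blocks $\w$-iterability by connecting a pattern to the appearance of a negative cycle in a sufficiently high composition of $G_\s$, which by the corollaries of Section~\ref{sec:transf-graphs} (namely the characterization of $\w$-iterability via negative cycles in $(G_\s)^n$) contradicts $\w$-iterability. Concretely, suppose $\s$ admits a pattern: an $(x,y)$-$(x,y)$ context $C$ of negative weight $w<0$. The key observation is that a pattern can be \emph{iterated}: plugging $C$ into its own distinguished $(x,y)$-leaf gives an $(x,y)$-$(x,y)$ context of weight $2w$, and in general an $m$-fold self-composition $C^{(m)}$ is an $(x,y)$-$(x,y)$ context of weight $mw$. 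Capping off the final $(x,y)$-leaf with the trivial p-tree consisting of a single $\top$-child along the edge $(x,y)\to\top$ corresponding to the canonical edge $(0,x)\xra{e}(0,y)$ in $G_\s$ (which exists and has some finite weight $e$, since $G_\s$ is in canonical form and has no negative cycle if $\s$ is executable at all) yields a \emph{complete} p-tree with root $(x,y)$ and weight $mw+e$.

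Next I would control the height. If $C$ has height $h$, then $C^{(m)}$ has height at most $mh$ (each self-composition stacks one more copy below the deepest leaf), so the complete p-tree just constructed has height at most $mh+1 \le k$ for a suitable $k = O(mh)$. By Lemma~\ref{lem:p-tree-to-path}, in $G_\s^{2k}$ the shortest path from $(k,x)$ to $(k,y)$ has weight at most $mw+e$. But $(k,x)$ and $(k,y)$ lie in the \emph{same} column of $G_\s^{2k}$, so this is the weight of a path from $(k,x)$ back to $(k,x)$ of weight at most $mw + e + e'$, where $e'$ is the weight of the (finite, canonical) edge $(0,y)\xra{}(0,x)$ in $G_\s$. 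Since $w<0$ is fixed and $e,e'$ are fixed, choosing $m$ large enough makes $mw+e+e' < 0$: we obtain a negative cycle in $G_\s^{2k}$. By the corollary characterizing $\w$-iterability ("$\s$ is $\w$-iterable iff for every $n$ the $n$-fold composition of $G_\s$ has no negative cycle"), this shows $\s$ is not $\w$-iterable from any valuation.

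The main obstacle I anticipate is the bookkeeping around the "capping" edges $e$ and $e'$: one needs those canonical edges $(0,x)\to(0,y)$ and $(0,y)\to(0,x)$ in $G_\s$ to actually exist and be finite in order to close the pattern into a genuine cycle. This is fine when $\s$ is executable (so $G_\s$ has no negative cycle and its canonical form has all edges finite — in fact after the preprocessing of Proposition~\ref{prop:pre-process} every clock is reset and involved in a guard, so all relevant entries are finite), and if $\s$ is not even executable then it is trivially not $\w$-iterable and there is nothing to prove. A secondary technical point is verifying that self-plugging a context into its leaf really does produce a valid p-tree — this is immediate from the definition of p-tree, since the leaf $(x,y)$ of the outer context is exactly the type of node that the root of the inner context provides, and weights simply add. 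The remaining steps (height estimate, choice of $m$) are routine arithmetic.
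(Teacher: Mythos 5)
Your overall strategy --- iterate the pattern $m$ times to accumulate weight $mw$, close the resulting context into a cycle, and invoke the negative-cycle characterization --- is natural, and it is even the route announced in the first sentence of the paper's own proof. The self-plugging of the context, the height bound, and the appeal to Lemma~\ref{lem:p-tree-to-path} are all fine. The gap is in the closing step: you need the edge $(i,x)\to(i,y)$ (to cap the innermost leaf) \emph{and} the edge $(k,y)\to(k,x)$ (to close the cycle at the root) to have finite weight, and you justify this by asserting that after preprocessing ``the canonical form has all edges finite.'' That assertion is false: canonical DBMs routinely contain $\infty$ entries, and the preprocessing only guarantees that every clock is reset and carries some guard, not that every difference $y-x$ is bounded. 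Concretely, if $x$ is reset after $y$ in $\s$, then $x\le y$ holds at the end of every iteration, so the edge $y\to x$ has weight $(\le,0)$; but the opposite entry, bounding $y-x$, equals the time elapsed between the two resets, which is unbounded whenever no upper-bound guard constrains that interval (e.g.\ for $\s$ consisting of $y\le 1,\{y\}$ followed by $x\ge 2,\{x\}$, the edge $(1,x)\to(1,y)$ is infinite in $|G_\s|$, while a pattern does exist since $\s$ is not $\w$-iterable). So exactly one of your two closing edges is guaranteed finite, and the other may be $\infty$; the negative cycle is not obtained as written.

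This is precisely the difficulty the paper's proof is organized around. It fixes the orientation using the order of last resets, which yields the one guaranteed $(\le,0)$ edge between $x$ and $y$ at every end-of-iteration column, and then splits into two cases according to the directions of the pattern's two constituent paths. In one case the iterated pattern drives the constraint between $x$ and $y$ in a fixed column to $-\infty$, which no actual (finite) valuation can satisfy. In the other case it forces the difference between $y$ and $x$ at the end of successive iterations to grow without bound, and the contradiction comes not from a canonical edge of $G_\s$ but from the pattern itself: its leftward edge from $(j,y)$ to $(i,y)$ of finite weight $d$, together with the fact that $y$ is reset in every iteration, bounds the value of $y$ at the end of each iteration by $d$ (the argument of Figure~\ref{fig:d-edge}). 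To repair your proof you would need to extract such a finiteness certificate from the pattern rather than from the canonical form, which essentially amounts to redoing the paper's case analysis.
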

\begin{proof}
  We will show that there is $k$ such that there is a negative cycle
  in the $k$-th fold composition $G_\s\odot\cdots \odot G_\s$.

  Suppose $\s$ admits a pattern.  Suppose we have clocks $x,y$ such that
  $x\leq y$ is implied by $\s$. This means that the last reset of $y$
  happens before the last reset of $x$ in $\s$. In consequence $x\leq
  y$ is an invariant at the end of every iteration of $\s$.

  Consider a sufficiently long composition $G_\s\odot\cdots \odot
  G_\s$. Since there is a pattern, for some $i<j$ we have one of the two cases
  \begin{enumerate}
  \item there is a path $(i,x)$ to $(j,x)$ and from $(j,y)$ to $(i,y)$
    whose sum of weights is negative.
  \item there is a path $(j,x)$ to $(i,x)$ and from $(i,y)$ to $(j,y)$
    whose sum of weights is negative.
  \end{enumerate}
  Observe that since $x\leq y$, we have an edge of weight at most
  $(\leq,0)$ from $y$ to $x$ in every iteration.

  The case 1 makes the edge $y\to x$ more and more negative when going
  to the right. The case 2 makes the leftmost edge $x\leq y$ more and
  more negative depending on the number of iterations.

  The second case clearly implies that an infinite iteration is
  impossible: in every valuation permitting $\w$-iteration the
  distance from $x$ to $y$ would need to be infinity.

  The first case tells that in subsequent iterations the difference
  between $x$ and $y$ should grow. Since every variable is reset in
  every iteration, this implies that the amount of time elapsed in
  each iteration should grow too. But this is impossible since, as we
  will show in the next paragraph, the presence of the edge $(j,y)$ to
  $(i,y)$ of weight $d$ implies that the value of $y$ after each
  execution of $\s$ is bounded by $d$. In consequence the difference
  between $x$ and $y$ is bounded too. A contradiction.

  \begin{figure}[tbh]
    \centering
    \includegraphics[scale=.3]{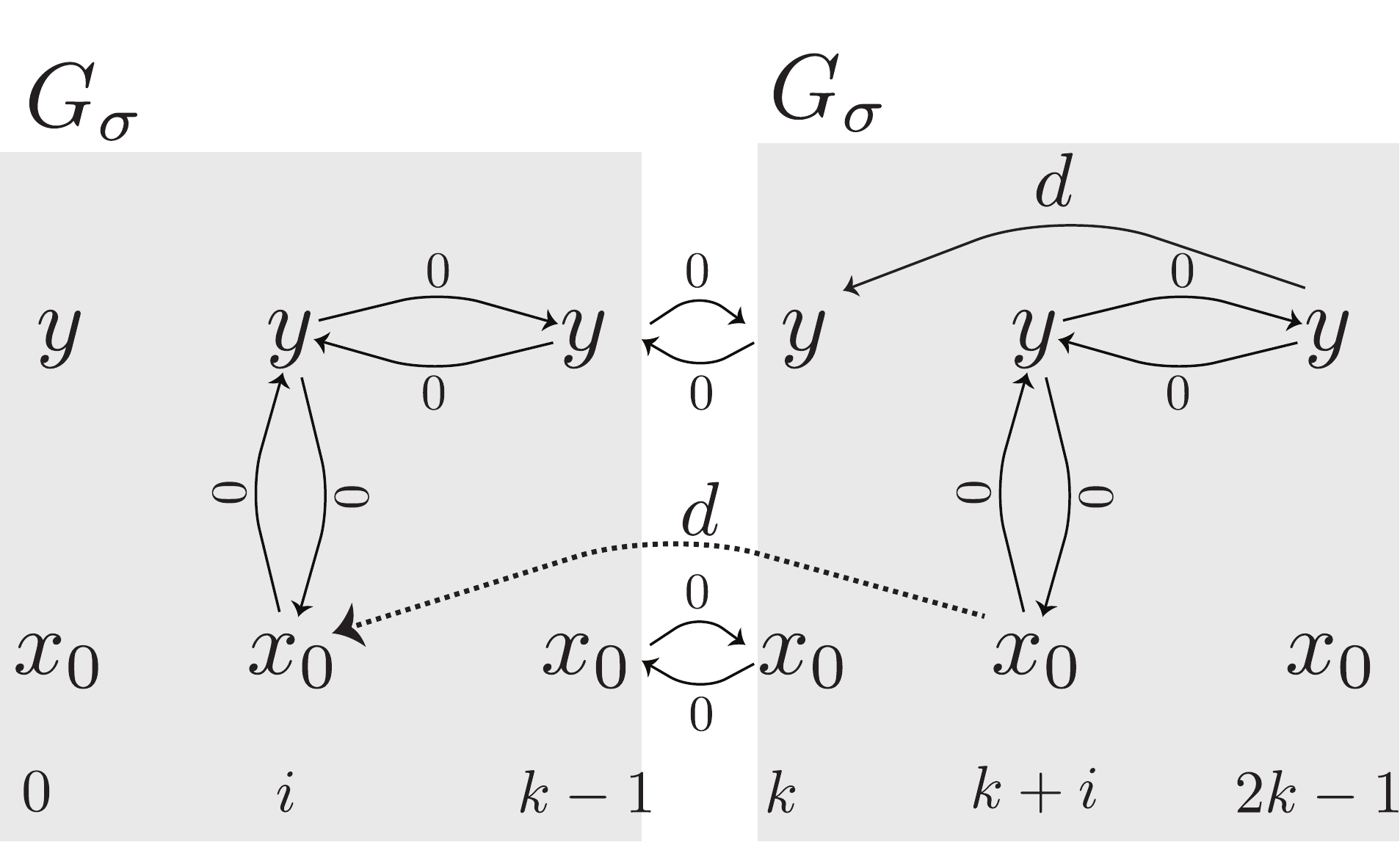}
    \caption{An edge from $(2,y)$ to $(1,y)$ gives a bound on time elapse}
    \label{fig:d-edge}
  \end{figure}
  It remains to see why the edge $(j,y)$ to $(i,y)$ of weight $d$
  implies that the value of $y$ at the end of each iteration of $\s$
  is bounded. For simplicity of notation take $j=1$ and $i=2$, the
  argument is the same for other values. We will show that the last
  two resets of $y$ in the consecutive executions of $\s$ need to
  happen in not more than $d$ time units apart. This implies that the
  value of $y$ is bounded by $d$.

  In Figure~\ref{fig:d-edge} we have pictured the composition of two
  copies of the transition graph $G_\s$. The graph has $k$ columns:
  numbered from $0$ to $k-1$. The last reset of $y$ is in the column
  $i$, so in the second copy it is in the column $k+i$. The black
  edges with weight $0$ come from the definition of the transition
  graph. For example the horizontal weight $0$ edges between $y$'s
  are due to the fact that $y$ is not reset between columns $i$ and
  $k-1$. As we can see from the picture, the edge of weight $d$ from
  $(k,y)$ and $(2k+1,y)$ induces an edge of weight $d$ from
  $(k+i,x_0)$ to $(i,x_0)$. Take a valuation $v$ satisfying the
  pictured composition of the two graphs $G_\s$. The induced edge
  gives us $v(i,x_0)-v(k+i,x_0)\leq d$. Recall that the value
  $-v(i,x_0)$ represents the time instance when the column $i$
  ``happens''. Rewriting the last inequality as
  $(-v(k+i,x_0))-(-v(i,x_0))\leq d$, we can see that between columns
  $i$ and $k+i$ at most $d$ units of time have passed. So the value
  $v(k-1,y)-v(k-1,x_0)$, that is the value of $y$ at the end of the
  first iteration of $\s$ is bounded by $d$.\qed
\end{proof}

The above lemma says that if there is a pattern, then $\s$ cannot be
iterated. We will now show that if there is no pattern, then the
p-trees representing shortest paths are bounded by $n^2$ and hence
$n^2$ iterations will be sufficient to conclude if the sequence is
$\w$-iterable. 

\begin{lemma}\label{lem:no-pattern-bounded-height}
  If $\s$ does not admit a pattern then for every $(x,y)$ there is a
  complete p-tree of minimal weight and height bounded by $n^2$ whose
  root is $(x,y)$.
\end{lemma}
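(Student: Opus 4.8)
The goal is to bound the height of a minimal-weight complete p-tree rooted at $(x,y)$ by $n^2$, under the hypothesis that $\s$ admits no pattern. The plan is to start with \emph{some} complete p-tree of minimal weight rooted at $(x,y)$ (one exists by the previous lemma, which gives complete p-trees realizing shortest-path weights, together with the fact that shortest-path weights are bounded below once $\s$ is executable — or else a negative cycle already refutes $\w$-iterability in a uniform way). Then I would argue that if such a minimal-weight tree has height exceeding $n^2$, it can be replaced by a strictly shorter one of weight no larger, contradicting minimality of height among minimal-weight trees (so one should take a p-tree of minimal weight and, among those, minimal height).

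First I would make the combinatorial observation: on a single root-to-leaf branch of a p-tree, each non-leaf node is labelled by a pair of variables from $X$, and there are only $n^2$ such pairs. Hence if a branch has length greater than $n^2$, some pair $(u,v)$ repeats along it — say at an ancestor node $N_1$ and a proper descendant $N_2$, both labelled $(u,v)$. The subtree rooted at $N_1$ with the subtree at $N_2$ removed and replaced by a fresh leaf is precisely an $(u,v)$-$(u,v)$ context in the sense of the definition; call its weight $c$. Since $\s$ admits no pattern, $c \ge 0$: every $(x',y')$-$(x',y')$ context has nonnegative weight. Now splice out this context: replace the subtree at $N_1$ by the subtree at $N_2$ (which is again a complete p-tree rooted at $(u,v)$). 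The resulting tree is still a complete p-tree rooted at $(x,y)$ — the splicing is well-typed because both the removed context's "hole" and its root carry the label $(u,v)$ — its weight has decreased by $c \ge 0$, hence has not increased, and its total size has strictly decreased.

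Then I would iterate this surgery: as long as the tree has a branch longer than $n^2$, perform the splice. Each step strictly decreases the number of nodes, so the process terminates, yielding a complete p-tree rooted at $(x,y)$, of weight no larger than the original minimal weight (hence equal to it, by minimality), with every branch of length at most $n^2$ — i.e.\ height at most $n^2$.

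The main obstacle I anticipate is being careful about what "minimal weight" means and that it is attained: one must first argue that the infimum of weights of complete p-trees rooted at $(x,y)$ is finite and attained. Finiteness follows because such trees correspond to paths between same-column vertices in compositions $(G_\s)^i$ (previous lemma), and if $\s$ is executable then no $(G_\s)^i$ has a negative cycle (Corollary above), so the shortest-path weights are bounded below by $0$ along any fixed pair of columns; attainment then follows since after the height-reduction argument only finitely many p-trees of height $\le n^2$ need be considered. (If $\s$ is not executable the statement is vacuous or handled trivially, since then the relevant graph already has a negative cycle.) A secondary point requiring care is the precise definition of the context's weight and the bookkeeping that splicing changes the total weight by exactly $-c$; this is routine from the definition of "weight of a p-tree" as the sum of edge weights appearing in it.
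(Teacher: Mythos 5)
Your proposal is correct and follows essentially the same route as the paper: locate a repeated label along a branch, observe that the excised portion is an $(x',y')$-$(x',y')$ context of nonnegative weight by the no-pattern hypothesis, splice it out without increasing the weight, and iterate until every branch has length at most $n^2$. The paper's own proof is a terser version of exactly this argument (your added care about repetitions lying on an ancestor--descendant branch and about attainment of the minimal weight only makes explicit what the paper leaves implicit).
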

\begin{proof}
  In order to get a p-tree with minimal weight, observe that if we
  have a repetition of a label in the tree then we can cut out the
  part of the tree between the two repetitions. Moreover, we know that
  this part of the tree would sum up to a non-negative weight as by
  assumption $\s$ does not have a pattern. Therefore, cutting out the
  part between repetitions gives another tree that has smaller height
  and does not have a bigger weight. The height of a p-tree with no
  repetitions is bounded by $n^2$.
\end{proof}

\begin{lemma}\label{lem:above-2n2-same}
  If $\s$ does not admit a pattern, then for every $i=1,2\dots$ we have
  $G_\s^{n^2+i}\bumpeq G_\s^{n^2}$.
\end{lemma}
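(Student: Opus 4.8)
The plan is to establish the two zone equalities packed into $G_\s^{n^2+i}\bumpeq G_\s^{n^2}$ --- namely $\lleft(G_\s^{n^2})=\lleft(G_\s^{n^2+i})$ and $\rright(G_\s^{n^2})=\rright(G_\s^{n^2+i})$ --- each by a double inclusion. One inclusion costs nothing and does not even use the pattern hypothesis: $\lleft(G_\s^m)$ is exactly the set of valuations from which $\s$ can be executed $m$ times, so $\lleft(G_\s^{m+1})\subseteq\lleft(G_\s^m)$, and, these being zones, the inclusion holds entrywise on the canonical forms; likewise $\rright(G_\s^m)$ is the set of valuations reachable after $m$ executions, and it too is non-increasing in $m$. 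Thus every leftmost-column (resp.\ rightmost-column) shortest-path weight of $G_\s^{n^2+i}$ is bounded above by the corresponding weight of $G_\s^{n^2}$. It is worth recording at the outset that under the no-pattern hypothesis no $(G_\s)^m$ has a negative cycle --- a negative cycle would, read through the path-to-p-tree correspondence, produce a negative complete p-tree out of which one can carve a negative $(x,y)$-$(x,y)$ context, i.e.\ a pattern --- so all shortest-path weights below are finite and the zones nonempty.

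For the reverse inclusion on $\lleft$, I would fix a pair of clocks $(x,y)$, take a shortest path $\pi$ from $(0,x)$ to $(0,y)$ in $G_\s^{n^2+i}$ of weight $w$ (so $w$ is the $(x,y)$-entry of $\lleft(G_\s^{n^2+i})$), and convert it to a complete p-tree rooted at $(x,y)$ of weight $w$ using the path-to-p-tree correspondence. Because $\pi$ both starts and ends in the leftmost column, this p-tree ``grows only to the right'': when it is realized back as a path, every node sits in a column of nonnegative index. Next I would run the cutting argument of Lemma~\ref{lem:no-pattern-bounded-height}: since $\s$ admits no pattern, any repetition of a label along a branch encloses a $(u,v)$-$(u,v)$ context of nonnegative weight that can be excised without increasing the total weight, and iterating this produces a complete p-tree rooted at $(x,y)$ with weight at most $w$ and height at most $n^2$. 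A right-growing p-tree of height at most $n^2$ rooted in column $0$ visits only columns $0,\dots,n^2$, all of which are available in $G_\s^{n^2}$ (it has $2n^2$ columns), so realizing this p-tree there (in the manner of Lemma~\ref{lem:p-tree-to-path}) yields a path from $(0,x)$ to $(0,y)$ in $G_\s^{n^2}$ of weight at most $w$. Hence the $(x,y)$-entry of $\lleft(G_\s^{n^2})$ is at most $w$; together with the easy inclusion this forces equality of the entries, and as $(x,y)$ was arbitrary, $\lleft(G_\s^{n^2})=\lleft(G_\s^{n^2+i})$. The equality for $\rright$ is obtained by the mirror-image argument, starting from a shortest path between rightmost-column vertices and using p-trees that grow to the left.

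The step I expect to be the real work is this final re-embedding: turning the bounded-height p-tree back into a path that lives inside exactly $n^2$ copies of $G_\s$ and still connects the two leftmost-column vertices. One has to check simultaneously that the repeated-label excision can always be carried out so that the surviving tree continues to grow away from the boundary column --- otherwise the re-rooting of the excised-through subtree could shift it past the left edge of the smaller composition --- and that bounding the p-tree's height by $n^2$ also bounds its horizontal spread by $n^2$, so that the $2n^2$ available columns suffice. Concretely this means choosing, among the repeated labels forced by a too-tall branch, one whose two occurrences are nested in the ``safe'' direction, so that what is cut out is a genuine pattern-shaped context (nonnegative by hypothesis) and the boundary stays on the correct side; the $\rright$ case needs this same bookkeeping mirrored. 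A minor simplification that shortens the write-up: since $\lleft(G_\s^{m})$ and $\rright(G_\s^{m})$ are each determined by their counterparts for $m-1$ (one more predecessor, resp.\ successor, step of $\s$), it is enough to handle $i=1$ and then propagate.
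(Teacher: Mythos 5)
Your proposal follows the paper's own proof: Lemma~\ref{lem:no-pattern-bounded-height} bounds the height of minimal p-trees by $n^2$, hence the horizontal extent of shortest paths between two leftmost-column (resp.\ rightmost-column) vertices, and since both $G_\s^{n^2}$ and $G_\s^{n^2+i}$ offer at least $n^2$ columns on the relevant side the two zones agree --- the paper states this in three lines where you supply the double inclusion and the boundary bookkeeping explicitly. The one worry you flag (that excising a repeated-label context might push the re-realized tree past the boundary column) dissolves on inspection: each one-child p-tree node already records, via which of $i,j$ equals $0$ in its defining edges, whether its realization steps left or right, so a tree obtained from a path confined to columns on one side of its endpoints keeps that orientation after any context is cut out, and a right-growing tree of height at most $n^2$ rooted in column $0$ needs only columns $0,\dots,n^2$.
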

\begin{proof}
  Due to Lemma~\ref{lem:no-pattern-bounded-height}, the
  shortest paths between any two variables in the
  leftmost column does not cross $n^2$ columns. Similarly the shortest
  path between any two variables in the rightmost column cannot go
  more than $n^2$ columns to the left. As this value is the same for
  both $G_\s^{n^2}$ and $G_\s^{n^2+i}$, the lemma follows. \qed
\end{proof}

Based on Proposition~\ref{prop:pre-process}, we get a procedure for
checking if $\s$ is $\w$-iterable.

\begin{theorem}\label{thm:iteratability-test}
  Let $\s$ be a sequence of transitions and let $n$ be the number of
  clocks. Following is a procedure for checking if $\s$ is
  $\w$-iterable.

  \begin{enumerate}
  \item If $\s$ satisfies Lemma~\ref{lem:conditions-non-iteratable},
    report $\s$ is not $\w$-iterable. Otherwise, continue.
  \item If there is no clock that is both reset and is checked for
    $\ggeq c$ with $c > 0$ in some guard, report $\s$ is iterable.  If
    there is such a clock, remove from $\s$ all guards containing
    clocks that are not reset and continue.
  \item Compute $G^1=G_\s$; stop if it defines the empty relation.
  \item Iteratively compute $G^{2^{k+1}}=G^{2^k}\cdot G^{2^k}$; stop
    if the result defines the empty relation, or $2^k>n^2$, or
    $G^{2^{k+1}}\bumpeq G^{2^k}$.
  \end{enumerate}
  If a result is not defined or $2^k>n^2$ then $\s$ is not
  $\w$-iterable.  If $G^{2^{k+1}}=G^{2^k}$ then $\lleft(G^{2^k})$ is
  the zone consisting precisely of all the valuations from which $\s$
  is $\w$-iterable.
\end{theorem}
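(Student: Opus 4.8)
The plan is to walk through the procedure and certify each of its exits using the results already established. Steps~1 and~2 are exactly Proposition~\ref{prop:pre-process}: its first bullet justifies the negative answer reported in step~1, its second bullet justifies the positive answer reported in step~2 when no reset clock is guarded by $x\ggeq c$ with $c>0$, and its third bullet justifies replacing $\s$ by $\s_{X_0}$ (deleting the guards on non-reset clocks) while preserving $\w$-iterability. So after step~2 I may assume that every clock of the reduced sequence $\s$ is reset, that $n$ counts exactly these clocks, and that it remains to decide $\w$-iterability of this $\s$ from its transformation graph.

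Next I would identify what the loop computes. Writing $(G_\s)^{\odot m}$ for the $m$-fold $\odot$-composition of $G_\s$, associativity of $\cdot$ together with Lemma~\ref{lem:composition-short} gives $G^{2^k}=|(G_\s)^{\odot 2^k}|$; hence $G^{2^k}$ reflects $\s^{2^k}$ (Lemmas~\ref{lem:trans-graph-composition} and~\ref{lem:short-trans-graph-reflects}), and by Definition~\ref{def:reflecting-a-transition} the zone $\lleft(G^{2^k})$ is precisely the set of valuations from which $\s$ can be executed at least $2^k$ times. By Lemma~\ref{lem:negative-cycles} and the corollary characterising executability through absence of a negative cycle, ``$G^{2^k}$ defines the empty relation'' means ``$\s^{2^k}$ has no execution''; since $\w$-iterability of $\s$ requires every power of $\s$ to be executable (Lemma~\ref{lem:w-iteratability}), this settles every ``result not defined'' exit, the empty $G^1$ included.

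The two remaining, non-empty exits are where the argument has substance. For the exit triggered by $2^k>n^2$, I would argue the contrapositive: were $\s$ $\w$-iterable, then by Lemma~\ref{lem:pattern-no-iteration} it would admit no pattern, and then Lemma~\ref{lem:above-2n2-same} (that is, Proposition~\ref{prop:pattern-characterization}) yields $(G_\s)^{\odot m}\bumpeq(G_\s)^{\odot 2^k}$ for all $m\ge 2^k$, in particular $G^{2^{k+1}}\bumpeq G^{2^k}$, so the loop would have stopped earlier on the $\bumpeq$ test; hence this exit soundly reports ``not $\w$-iterable''. For the exit with a non-empty fixed point $G^{2^{k+1}}=G^{2^k}$, I would observe that $G^{2^k}$ is then idempotent for $\cdot$, so $(G^{2^k})^{\cdot j}=G^{2^k}$ and therefore $\lleft((G_\s)^{\odot j2^k})=\lleft(G^{2^k})$ for every $j\ge 1$; since $m\mapsto\lleft((G_\s)^{\odot m})$ is non-increasing, the chain is eventually constant and $\bigcap_{m\ge1}\lleft((G_\s)^{\odot m})=\lleft(G^{2^k})$. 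A valuation $v$ lies in this intersection iff $\s$ is executable from $v$ arbitrarily many times, which, by re-running the region argument of Lemma~\ref{lem:w-iteratability} \emph{starting at $v$} (a region must repeat along a sufficiently long execution, after which $\s$ loops forever), is the same as $\s$ being $\w$-iterable from $v$. As the fixed point is non-empty, $\s$ is $\w$-iterable and $\lleft(G^{2^k})$ is exactly the announced zone. The loop terminates because $2^k$ strictly increases and eventually exceeds $n^2$.

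I expect the last exit to be the main obstacle: turning the syntactic identity $G^{2^{k+1}}=G^{2^k}$ into the statement that $\lleft(G^{2^k})$ is \emph{exactly} the set of valuations from which $\s$ is $\w$-iterable uses the monotonicity of left columns under composition, the idempotence-to-fixed-point step, and the per-valuation form of Lemma~\ref{lem:w-iteratability}; and it must be reconciled with the $2^k>n^2$ cutoff, where Proposition~\ref{prop:pattern-characterization} is indispensable since it guarantees that a $\w$-iterable sequence stabilises before the cutoff. A point to check carefully along the way is that, on the doubling chain, $\bumpeq$-stabilisation (the loop's actual test) in fact coincides with equality of the short transformation graphs (the hypothesis of the output rule), i.e.\ that the cross-column edges stabilise together with the two zones.
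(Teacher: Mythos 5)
Your proof follows the same route as the paper's: Proposition~\ref{prop:pre-process} for steps 1--2, absence of negative cycles (executability of $\s^{2^k}$) for the empty-relation exits, Lemmas~\ref{lem:pattern-no-iteration} and~\ref{lem:above-2n2-same} for the $2^k>n^2$ cutoff, and stabilisation of the left column for the fixed-point exit. You supply considerably more detail than the paper's brief sketch --- in particular the idempotence/monotone-intersection argument and the per-valuation rerun of the region argument for the last exit, and you rightly flag the $\bumpeq$-versus-$=$ mismatch between the loop's stopping test and the output rule, which the paper's own proof passes over in silence.
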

\begin{proof}
  The first two steps are justified by
  Proposition~\ref{prop:pre-process}. We discuss the third and fourth
  steps.
  
 If at some moment the result of an operation is not defined then
  there is a negative cycle in $G^{2^k}_\s$, so $\s$ can be
  executed not more than $2^k$ times. Hence $\s$ is not
  $\w$-iteratable. 

   If $2^k>n^2$ then there is a pattern and $\s$ is not
  $\w$-iteratable (c.f.\ Lemma~\ref{lem:above-2n2-same}).

  If $G^{2^{k+1}} = G^{2^k}$, then it means that the left column will
  not change on further compositions. As $G_\s$ is the graph that
  reflects $\s$,  we have that
  $\lleft(G^{2^k})$ is the set of valuations from which $\s$ is
  $\w$-iteratable. 
  \qed

\end{proof}

Complexity of this procedure is $\Oo((|\s| + \log n) \cdot n^3)$. The
graph $G_\s$ is build by incremental composition of the transitions in
$\s$. Each transition in $\s$ can be encoded as a transformation graph
over $2n$ variables. The sequential composition of two transformation
graphs over $2n$ variables uses a DBM over $3n$ variables (with $n$
variables shared by the two graphs). The canonicalization of this DBM
is achieved in time $(3n)^3$ and yields a transformation graph over
$2n$ variables corresponding to the composition of the
relations. Hence, $G_\s$ is obtained in time $\Oo(2 |\s| \cdot
(3n)^3)$ assuming that each step in $\s$ corresponds to a transition
followed by a delay. Once $G_\s$ has been computed, $(G_\s)^{n^2}$ is
obtained in time $\Oo(2 \log n \cdot (3n^3))$ using the fact that
$(G_s)^{2k} = (G_s)^k \cdot (G_s)^k$ for $k \ge 1$.

%%% Local Variables: 
%%% mode: latex
%%% TeX-master: "m"
%%% End: 

% LocalWords: iterable

\section{Experiments}
\label{sec:algorithm}
In this section we give some indications about the usefulness of the
$\w$-iterability check. The example from page~\pageref{fig:example}
shows that the gains from our $\w$-iteration procedure can be
arbitrarily big. Still, it is more interesting to see the performance
of $\w$-iterability check on standard benchmark models. For this we
have taken the collection of the same standard models as in many other
papers on the verification of timed systems and in particular in~\cite{Laarman:CAV:2013}.

Our testing scheme is as follows: we will verify properties given by
B\"uchi automata on the standard benchmarks. To do this, we take the
product of the model and property automata and check for B\"uchi
non-emptiness on this product automaton. Algorithms for the B\"uchi
non-emptiness problem can broadly be classified into two kinds: nested
DFS-based~\cite{Holzmann:DIMACS:1997} and Tarjan-based decomposition
into SCCs~\cite{Couvreur:FM:1999}. Both these algorithms are
essentially extensions of the simple DFS algorithm with extra
procedures that help identify cycles containing accepting
states. Currently, no algorithm is known to outperform the rest in all
cases \cite{GS:MEMICS:2009}.

\paragraph*{Restricting to weak B\"uchi properties.} In order to focus
on the influence of iterability checking we consider weak B\"uchi
properties (all states in a cycle are either accepting or
non-accepting). In this case, the simple modification of DFS is the
undisputed best algorithm: to find an accepting loop, it is enough to
look for it on the active path of the DFS~\cite{Cerna:MFCS:2003}. In
other words, it is nested DFS where the secondary DFS search is never started.
%  Moreover, weak
% B\"uchi properties are common: \v{C}ern\'a and Pel{\'a}nek claim that
% 54\% of the ``most often verified'' properties are weak
% \cite{Cerna:MFCS:2003}, less than 8\% of testcases verified by Schwoon
% and Esparza in \cite{GS:MEMICS:2009} turn out to be non-weak.

The algorithm, that we will call DFS with subsumption (DFSS), performs
a classical depth-first search on the finite abstracted zone graph
$\ZG^{\extraLUp}(\Aa)$. It uses the additional information provided by
the zones to limit the seach from the current node $(q',Z')$ in two
cases: (1)~if $q'$ is accepting and on the stack there is a path
$\sigma$ from a node $(q',Z'')$ to the current node with $Z'' \incl
Z'$ then report existence of an accepting path; and (2)~if there is a
fully explored a node $(q',Z'')$, i.e a node not on the stack, with
$Z' \incl Z''$ then ignore the current node and return from the DFS
call. This is the algorithm one obtains after specialisation of the
algorithm of Laarman \textit{et al.}~\cite{Laarman:CAV:2013} to weak
Buchi properties. It is presented in full in
Appendix~\ref{sec:appendix:algo}.

Now, it is quite clear how to add $\w$-iterability check to this
algorithm. We extend case~(1) above using $\w$-iterability check. When
$q'$ is accepting but $Z'' \not \incl Z'$ we use $\w$-iterability
check on $\sigma$. If $\sigma$ is iterable from $(q',Z')$, we have
detected an accepting path and we stop the search. We refer to this
algorithm as iDFSS: DFSS with $\w$-iteration testing. Notice that when
iDFSS stops thanks to $\w$-iteration check, DFSS would just continue
its search. The algorithm is listed in 
Appendix~\ref{sec:appendix:algo}.

Our aim is to test the gains of iDFSS over DFSS. One can immediately
see that iDFSS is not better than DFSS if there is no
accepting path. Therefore in our examples we consider only the
cases when there is an accepting path, to see if iDFSS has an effect
now.

We tried variants of properties from \cite{Laarman:CAV:2013} on the
standard models (TrainGate, Fischer, FDDI and CSMA/CD). The results
were not very encouraging: $\w$-iterability check was hardly ever
used, and when used the gains were negligible. Based on a closer look
at these timed models, we argue that while these models are
representative for reachability checking, their structure is not well
adapted to evaluate algorithms for B\"uchi properties. We explain this in
more detail below.

\paragraph{A note on standard benchmarks.}
In three out of four models (Fischer, TrainGate and CSMA/CD), on every
loop there is a true zone (which is the zone representing the set of
all possible valuations). Moreover, in Fischer and TrainGate, a big
majority of configurations have true zones, and even more strikingly,
the longest sequence of configurations with a zone other than true
does not exceed the number of components in the system: for example,
in Fisher-$3$ there are at most $3$ consecutive nontrivial zones. As
we explain in the next paragraph, in the case of CSMA/CD all the loops
turn out to be almost trivial. Thus in all the three models one can as
well ignore timing information for loop detection: it is enough to
look at configurations with the true zone. This analysis explains the
conclusion from~\cite{Laarman:CAV:2013} where it is reported that
checking for counterexamples is almost instantaneous. Indeed, checking
for simple untimed weak B\"uchi properties on these models will be
very fast since every repetition of a state $q$ of the automaton will
give a loop that is $\w$-iteratable in $\ZG^{\extraLUp}(\Aa)$. This
loop will be successfully detected by the inclusion test~(1) in DFSS
algorithm since both zones will be true.  The fourth model from
standard benchmarks - FDDI - is also very peculiar. In this paper we
are using $\extraLUp$ abstraction for easy of comparison. Yet more
powerful abstractions allow to eliminate time component completely
from the model~\cite{Herbreteau:CAV:2013}. This indicates that
dependencies between clocks in the model are quite weak.

\begin{figure}[t]
  \centering
  \mbox{%
    \begin{minipage}{.49\textwidth}
      \begin{tikzpicture}

  \path[every node/.style = {shape=circle,draw,minimum size=1.1cm},
  thick, inner sep=2pt,font=\tiny]
  node (wait)   at (0,0) {$WAIT$}
  node (start)  at (4,0)
  {$\begin{array}{c}
      START\\
      (x_i \le L)\\
    \end{array}$}
  node (retry)  at (2,-1.8)
  {$\begin{array}{c}
      RETRY\\
      (x_i < 2S)\\
      \end{array}$};
  
  \path[->,line width=1pt,font=\tiny]
  ([xshift=-0.25cm]wait.west) edge (wait)
  (wait)  edge[in=80,out=100,loop]  node[above] {$cd_i, \{x_i\}$}
  (wait)
  (wait)  edge              node[above] {$begin_i, \{x_i\}$} (start)
  (wait)  edge              node[right]
  {$\begin{array}{c}
      busy_i\\
      \{x_i\}\\
      \end{array}$} (retry)
  (wait)  edge[bend right=15]  node[left]  {$cd_i, \{x_i\}$} (retry)
  (retry) edge[in=180,out=200,loop] node[left]
  {$cd_i, \{x_i\}$} (retry)
  (retry) edge[in=0,out=340,loop,dashed]  node[right]
  {$\mathbf{busy_i, \{x_i\}}$} (retry)
  (retry) edge[bend right=15]  node[right]
  {$begin_i, \{x_i\}$} (start)
  (start) edge                node[left,pos=0.3]
  {$\begin{array}{c}
      (x_i<S)\\
      cd_i\\
      \{x_i\}
      \end{array}$} (retry)
  (start) edge[bend right=30]  node[above]
  {$(x_i=L), end_i, \{x_i\}$} (wait);
  
\end{tikzpicture}

%%% Local Variables: 
%%% mode: latex
%%% TeX-master: "../m"
%%% End: 
    \end{minipage}
    \hfill
    \begin{minipage}{.49\textwidth}
      \begin{tikzpicture}

  \path[every node/.style = {shape=circle,draw,minimum size=1.1cm},
  thick, inner sep=2pt,font=\tiny]
  node (idle)       at (0,0)    {$IDLE$}
  node (busy)       at (4,0)    {$BUSY$}
  node (collision)  at (2,-1.8)
  {$\begin{array}{c}
      COLL.\\
      (y < S)\\
    \end{array}$};
  
  \path[->,line width=1pt,font=\tiny]
  ([xshift=-0.25cm]idle.west) edge (idle)
  (idle) edge[bend right=15]
  node[below] {$begin_i, \{y\}$} (busy)
  (busy) edge[in=80,out=100,loop]
  node[above] {$(y \ge S), busy_i$} (busy)
  (busy) edge[bend right=15]
  node[above] {$end_i, \{y\}$} (idle)
  (busy) edge[bend left=15]
  node[right] {$\begin{array}{c}
      (y < S)\\ begin_i\\ \{y\}\end{array}$} (collision)
  (collision) edge[bend left=15]
  node[left] {$\begin{array}{c}
      (y < S)\\ \{cd_1,\dots,cd_N\}\\ \{y\}
    \end{array}$} (idle);
  
\end{tikzpicture}

%%% Local Variables: 
%%% mode: latex
%%% TeX-master: "../m"
%%% End: 
    \end{minipage}
  }
  \caption{Model of the CSMA/CD protocol: station (left) and bus
    (right).}
  \label{fig:CSMA_CD}
\end{figure}
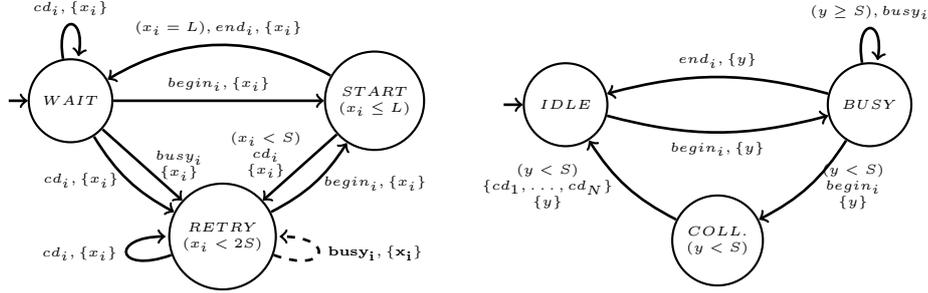

The case of CSMA/CD gives an interesting motivation for testing
B\"uchi properties. CSMA/CD is a protocol used to communicate over a
bus where multiple stations may emit at the same time. As a result,
message collisions may occur and have to be
detected. Figure~\ref{fig:CSMA_CD} shows a model for this protocol. We
assume that the stations and the bus synchronize on actions. In
particular, the transition from state $COLLISION$ to state $IDLE$ in
the bus automaton synchronizes on action $cd_i$ in all $N$
stations. The model is parametrized by the propagation delay $S$ on
the bus and the time $L$ needed to transmit a message which are
usually set to $L=808$ and
$S=26$~\cite{Tripakis:FMSD:2001,UPPAAL-CSMACD}.

It turns out that the $busy_i$ loop on state $RETRY$ in the station
automaton is missing in the widely used
model~\cite{Tripakis:FMSD:2001,UPPAAL-CSMACD}. In consequence, in this
model there is no execution with infinitely many collisions and
completed emissions. Even more, once some process enters in a
collision, no process can send a message afeterwards. 
% Indeed, since $2S
% < L$, all the stations involved in the collision (i.e. in state
% $RETRY$) start to emit (action $begin_i$) before $2S$ time units,
% leading to another collision, and so on\fh{Too much detailed?}.
This
example confirms once more that timed models are compact descriptions
of complicated behaviours due to both parallelism and interaction
between clocks. B\"uchi properties can be extremely useful in making
sure that a model works as intended: the missing behaviors can be
detected by checking if there is a run where every collision is
followed by an emission. Adding the $busy_i$ loop on state $RETRY$
enables interesting behaviours where the stations have collisions and
then they restart sending messages. The resulting model has
significantly more reachable states and behaviours (see
Appendix~\ref{sec:models}).

The other issue with CSMA/CD is that it has Zeno behaviours, and they
appear precisely in the interesting part concerning collision. A
solution we propose is to enforce $(y \ge 1)$ on all transitions from
state $BUSY$ in the bus automaton. By chance, the modified model does
not have new reachable states. But of course now all loops are nonZeno.

To sum up the above discussion, due to the form of the benchmark models, we are lead to
consider B\"uchi properties that refer to time. This gives us a
product automaton having zones with non-trivial interplay of clocks,
and consequently making the B\"uchi non-emptiness problem all the more
challenging. Figure~\ref{fig:benchmarks} (right) presents the property
we have checked on the modified CSMA/CD model. It checks if station
$1$ can try to transmit fast enough, and if it arrives to send a message, the delay is
not too long. The properties we have considered for other models are
listed in Appendix~\ref{sec:models}. 
An important point is that since the
exploration ends as soon as it finds a loop, the order of exploration
influences the results. For fairnesss of the comparison we have run
numerous times the two algorithms using random exploration order, both
algorithms following the same order each time.  The results of the
experiments are presented in Figure~\ref{fig:benchmarks} (left). All
the examples have an accepting run.

\paragraph{Bottomline.} The table shows that there are quite natural
properties of standard timed models where the use of $\w$-iteration
makes a big difference. Of course, there are other examples
where $\w$-iteration contributes nothing. Yet, when it gives nothing,
$\w$-iteration will also not cost much because it will not be called
often. 

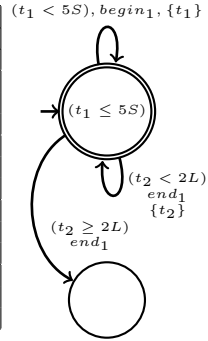
\begin{figure}[t]
  \centering
  \mbox{%
    \begin{minipage}{.74\textwidth}
      \scriptsize
      \begin{tabular}{|l|r|r|r|r|r|r|r|r|r|}
  \hline
  \multirow{3}{*}{Model} & \multicolumn{3}{|c|}{DFSS algorithm} &
  \multicolumn{6}{|c|}{iDFSS algorithm}\\
  \cline{2-10}
  & \multicolumn{3}{|c|}{Visited nodes} &
  \multicolumn{3}{|c|}{Visited nodes} &
  \multicolumn{3}{|c|}{Iterability checks}\\
  \cline{2-10}
  & Mean & Min & Max & Mean & Min & Max & Mean & Min & Max\\
  \hline
  CSMA/CD 4 & 9956 & 9699 & 10862 & 137 & 8 & 1646 & 1 & 1 & 1\\
  CSMA/CD 5 & 12108 & 11315 & 14433 & 219 & 9 & 1331 & 1 & 1 & 1\\
  CSMA/CD 6 & 16284 & 12931 & 25861 & 561 & 10 & 3523 & 1 & 1 & 1\\
  \hline
  Fischer 3 & 1067 & 275 & 9638 & 294 & 7 & 1625 & 3 & 1 & 28\\
  Fischer 4 & 6516 & 365 & 129211 & 1284 & 7 & 21806 & 12 & 1 & 1296\\
  Fischer 5 & 91124 & 455 & 1171755 & 17402 & 7 & 352682 & 30 & 1 &
  1131\\
  \hline
  FDDI 8 & 4807 & 1489 & 11778 & 1652 & 1105 & 4022 & 23 & 1 & 71\\
  FDDI 9 & 6326 & 2544 & 19046 & 1979 & 1247 & 11514 & 24 & 1 & 73\\
  FDDI 10 & 7156 & 2840 & 25705 & 2101 & 1390 & 11587 & 22 & 1 & 65\\
  \hline
  Train Gate 3 & 1017 & 212 & 3649 & 109 & 4 & 528 & 1 & 1 & 1\\
  Train Gate 4 & 14928 & 281 & 65358 & 1874 & 4 & 11844 & 3 & 1 & 90\\
  Train Gate 5 & 448662 & 350 & 1243873 & 63000 & 4 & 235957 & 29 &
  1 & 397\\
  \hline
\end{tabular}

%%% Local Variables: 
%%% mode: latex
%%% TeX-master: "../m"
%%% End: 

    \end{minipage}
    \hfill
    \begin{minipage}{.25\textwidth}
      \begin{tikzpicture}

  \path[every node/.style = {shape=circle,draw,minimum size=1cm},
  thick, inner sep=2pt,font=\tiny]
  node[double] (q0)   at (0,0)    {$(t_1 \le 5S)$}
  node         (q1)   at (0,-2.5) {};
  
  \path[->,line width=1pt,font=\tiny]
  ([xshift=-0.25cm]q0.west) edge (q0)
  (q0) edge[in=80,out=100,loop]
  node[above] {$(t_1 < 5S), begin_1, \{t_1\}$}
  (q0)
  (q0) edge[in=265,out=285,loop]
  node[right] {$\begin{array}{c}
      (t_2 < 2L)\\ end_1\\ \{t_2\}\end{array}$}
  (q0)
  (q0) edge[bend right=60]
  node[right,pos=0.65] {$\begin{array}{c}
      (t_2 \ge 2L)\\ end_1 \end{array}$}
  (q1);
\end{tikzpicture}

%%% Local Variables: 
%%% mode: latex
%%% TeX-master: "../m"
%%% End: 
    \end{minipage}
  }
  \caption{Benchmarks (left) and property checked on CSMA/CD (right).}
  \label{fig:benchmarks}
\end{figure}

\section{Conclusions}

We have presented a method for testing $\w$-iterability of a cycle in a
timed automaton. Concentrating on iterability, as opposed to loop
decomposition~\cite{ComJur99}, has allowed us to obtain a
relatively efficient procedure, with a complexity comparable to zone
canonicalisation. We have performed experiments on the usefulness of this
procedure for testing weak B\"uchi properties of standard benchmark
models. This has led us to examine in depth the structure of these
models. This structure explains why testing untimed B\"uchi
properties turned out to be immediate. When we have tested for timed
B\"uchi properties the situation changed completely, and we could
observe substantial gains on some examples. 

There is no particular difficulty in integrating our $\w$-iterability
test in a tool for checking all B\"uchi properties.  Due to the lack
of space we have chosen not to report on the prototype here. Let us
just mention that we prefer methods based on Tarjan's strongly
connected components since they adapt to multi B\"uchi properties, and
allow to handle Zeno issues in a much more effective way than strongly
non-Zeno construction~\cite{Herbreteau:ATVA:2010}.

Although efficient in some cases, $\w$-iterability test does not solve
all the problems of testing B\"uchi properties of timed automata.  In
particular when there is no accepting loop in the automaton, the test
brings nothing, and one can often observe a quick state blowup. New
ideas are very much needed here.

%%% Local Variables: 
%%% mode: latex
%%% TeX-master: "m"
%%% End: 

\bibliographystyle{plain} \bibliography{m.bib}

\newpage
\appendix

\section{DFSS and iDFSS algorithms}
\label{sec:appendix:algo}

\begin{lstlisting}[
caption={iDFSS algorithm.},
label={lst:iDFSS},
float=tbp]
procedure iDFSS($\Aa = (Q, q_0, X, T, F)$)
  Cyan:=Blue:=$\es$;
  explore($(q_0, Z_0)$);
  report no cycle;

procedure explore($(q,Z)$)
  Cyan:=Cyan$\, \cup \, \set{(q,Z)}$;
  for all $(q',Z')\in \mathsf{POST}((q,Z))$ do
    if $q' \not \in F$ and $(q',Z') \in Cyan$ then £\label{lst:iDFSS:cyan-membership}£
      skip //ignore $(q',Z')$
    if $q'\in F$ then
      if $\exists (q',Z'')\in Cyan.\ Z''\incl Z'$ then  report cycle; £\label{lst:iDFSS:cyan-inclusion}£
      if $\exists (q',Z'')\in Cyan$ then £\label{lst:iDFSS:iterability-check}£
         let $\s$ be the path $(q',Z'')\to (q',Z')$ on the stack;
         if $\w$-iterable($\s$) then report cycle;
    if $\exists (q',Z'')\in Blue.\ Z'\incl Z''$ then £\label{lst:iDFSS:blue-inclusion}£
      skip //ignore $(q',Z')$
    else
      explore($(q',Z')$);
  Blue:=Blue$\, \cup \, \set{(q,Z)}$;
  Cyan:=Cyan$\, \setminus \, \set{(q,Z)}$;
\end{lstlisting}

%%% Local Variables: 
%%% mode: latex
%%% TeX-master: "../m"
%%% End: 

Algorithm iDFSS is presented in Listing~\ref{lst:iDFSS}. This
algorithm runs a depth-first search over the finite abstracted zone
graph $\ZG^{\extraLUp}(\Aa)$ of a timed Büchi automaton $\Aa$ to check
whether $\Aa$ has an accepting run. To that purpose, the algorithm
maintains two sets of states. $Blue$ contains fully visited states and
$Cyan$ consists in partially visited states that form the current
search path.

Algorithm iDFSS uses the information provided by the zones to limit
the seach from the current node $(q',Z')$ in three cases:
\begin{enumerate}
\item if $q'$ is accepting and on the stack there is a path $\sigma$
  from a node $(q',Z'')$ to the current node with $Z'' \incl Z'$ then
  report existence of an accepting path
  (line~\ref{lst:iDFSS:cyan-inclusion}).
\item when $q'$ is accepting but $Z'' \not \incl Z'$ we use
  $\w$-iterability check on $\sigma$. If $\sigma$ is iterable from
  $(q',Z')$, we have detected an accepting path and we stop the search
  (line~\ref{lst:iDFSS:iterability-check}).
\item if there is a fully explored a node $(q',Z'')$, i.e a node not
  on the stack, with $Z' \incl Z''$ then ignore the current node and
  return from the DFS call (line~\ref{lst:iDFSS:blue-inclusion}).
\end{enumerate}

In this paper, we compare algorithm iDFSS to algorithm DFSS that does
not use iterability check (i.e. with
line~\ref{lst:iDFSS:iterability-check} and the next two lines
removed). This is the algorithm one obtains after specialisation of
the algorithm of Laarman \textit{et al.}~\cite{Laarman:CAV:2013} to
weak Buchi properties.

%%% Local Variables: 
%%% mode: latex
%%% TeX-master: "m"
%%% End: 

\section{Models and properties}\label{sec:models}

In our experiments we have used the following models:
\begin{itemize}
\item Train gate controller \cite{Yi94,TrainGateUPPAAL}
\item Fischer's Mutex protocol
  \cite{Tripakis01,FischerUPPAAL}
\item CSMA/CD \cite{Tripakis01,CsmaCdUPPAAL}
\item FDDI \cite{Daws96,FddiUPPAAL}
\end{itemize}

The above models have been used as benchmarks to evaluate
algorithms for safety and
liveness~\cite{Herbreteau13,Laarman13}.  
As we have explained in the main text the standard benchmark CSMA/CD
model has a transition missing. In Figure~\ref{fig:csma} we
graphically represent the model with 3 stations, the new behaviours
due to adding the missing transition are marked in red. The intention
of this figure is just to give a visual impression of what is the
influence of the added transition. After adding the transition there
are no new states, but there are new behaviours.

\begin{figure}[tbhp]
  \centering
  \includegraphics[width=18cm,angle=90,origin=c]{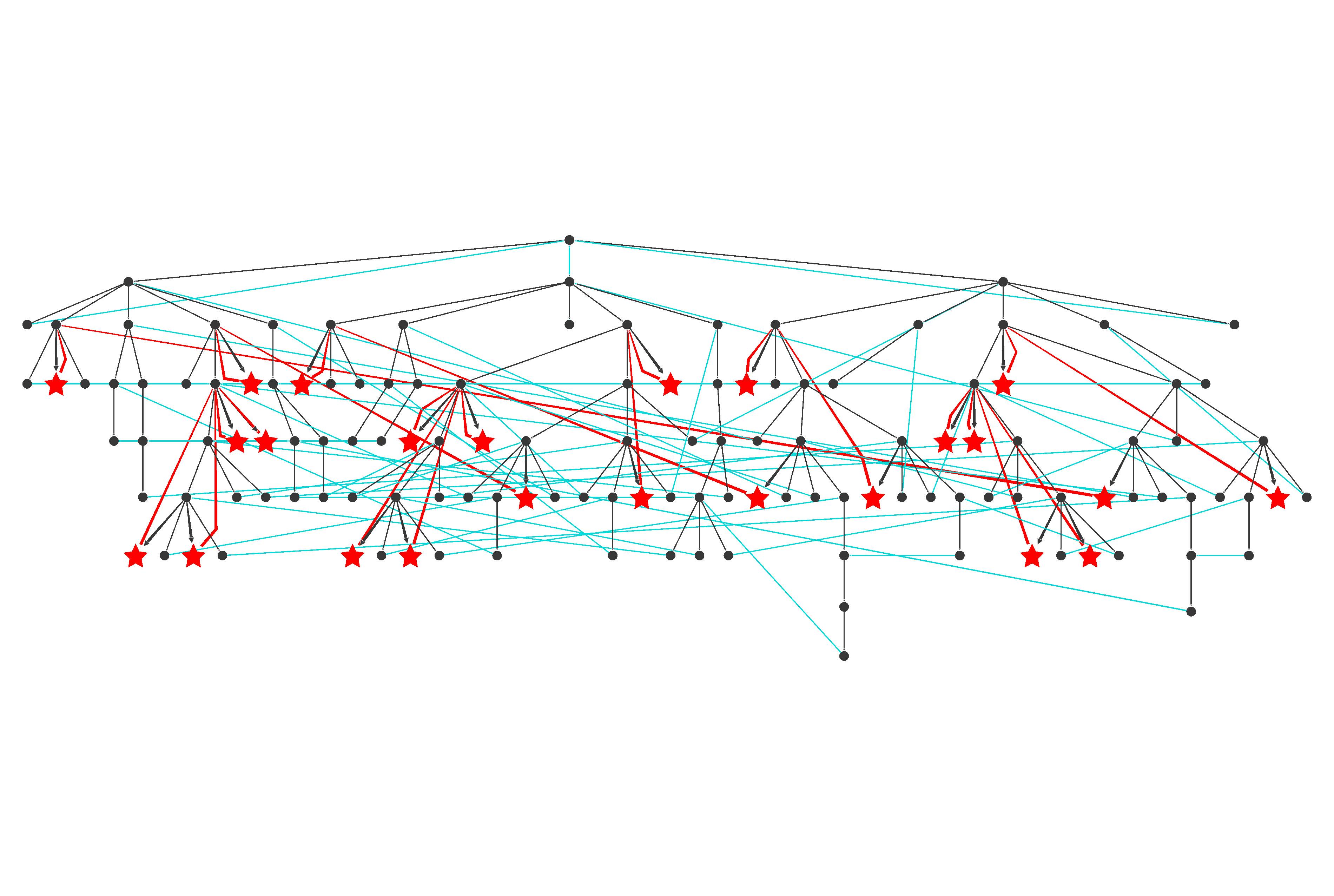}
  \caption{CSMA/CD model: after correcting the model we obtain new
    cycles marked in red.}
  \label{fig:csma}
\end{figure}

\begin{figure}
  \centering
  \begin{tikzpicture}[>=stealth',shorten >=1pt,auto,node distance=2.5cm]
    \tikzstyle{every node}=[font=\tiny]
    \node[initial,state,accepting] (q0)               {$o_1\leq 15T$};
    \node[state]                   (q1) [right of=q0] {};

  \path[->] (q0) edge [loop above,looseness=5] node[align=center] 
                      {$10T\leq o_2<15T$,$enter_1$,$\{o_2\}$}        (q0)
            (q0) edge [loop below,looseness=5] node[align=center] 
                      {$o_1\leq T$,$req_1$,$\{o_1\}$}                (q0)
                 edge node[auto,text width=2cm, align=center] 
                      {$o_2\geq 15T$\\$enter_1$}                     (q1);
\end{tikzpicture}
\caption{Property for model Fischer with $N$ processes and
  $T=K*N$. The automaton checks if the first process can request fast,
  and can only enter the critical section after a certain amount of
  time and before timeout.}
\end{figure}
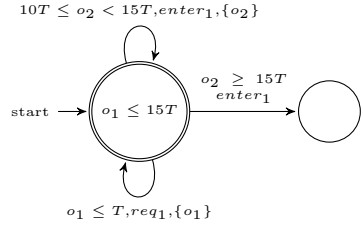

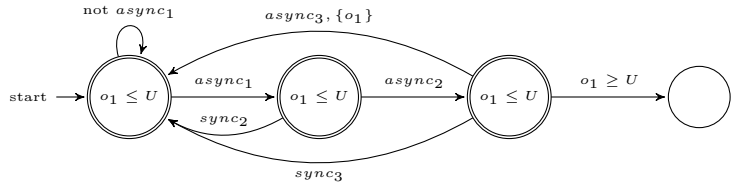
\begin{figure}
  \centering
  \begin{tikzpicture}[>=stealth',shorten >=1pt,auto,node distance=2.5cm]
    \tikzstyle{every node}=[font=\tiny]
    \node[initial,state,accepting] (q0)                 {$o_1 \leq U$};
    \node[state,accepting]              (q1)   [right of=q0] {$o_1\leq U$};
    \node[state,accepting]              (q2)   [right of=q1] {$o_1\leq U$};
    \node[state]                        (q3) [right of=q2] {};

  \path[->] (q0) edge [loop above,looseness=5] node{not $async_1$}        (q0)
            (q0) edge node{$async_1$}                         (q1)
            (q1) edge node{$async_2$}                         (q2)
            (q2) edge [bend right] node[above]{$async_3, \{o_1\}$} (q0)

            (q1) edge [bend left] node[above]{$sync_2$} (q0)
            (q2) edge [bend left] node{$sync_3$}        (q0)

            (q2) edge node{$o_1 \geq U$}                (q3)
;

\end{tikzpicture}
\caption{Property for model FDDI with $N=3$ stations and $U =
  150*SA*N$. The automaton checks if all stations can send
  asynchronous messages in the same round infinitely often and in a
  bounded amount of time.}
\end{figure}

\begin{figure}
  \centering
  \begin{tikzpicture}[>=stealth',shorten >=1pt,auto,node distance=3cm, align=center]
    \tikzstyle{every node}=[font=\tiny]
    \node[initial,state,accepting] 
                            (q0)                {$o_2\leq30N$};
    \node[state,accepting, text width=1cm]
                            (q1) [right of= q0] {};
    \node[state,accepting]  (q2) [below of=q1]  {$o_1\leq300N$};
    \node[state]            (q3) [right of=q1]  {};

  \path[->] (q0) edge              node[text width=3cm] 
                                   {$o_2\leq30N$\\$approach1$\\$\{o_2\}$} (q1)
            (q1) edge [bend left]  node {$leave_1$}                       (q0)
                 edge              node[text width=3cm, align =left] 
                                   {$o_1\geq 300N$\\$stop_1$\\$\{o_1\}$}   (q2)
                 edge              node[auto, text width=2cm] 
                                   {$o1<300N$\\$stop_1$}                  (q3)
            (q2) edge              node {$leave_1$}                       (q0);
\end{tikzpicture}
\caption{Property for model Train Gate, with $N$ trains. The automaton
  checks if Train $1$ can approach frequently fast but hardly waits
  for other trains.}
\end{figure}
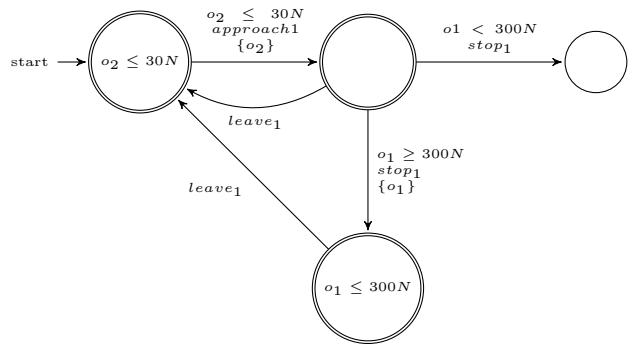

%%% Local Variables: 
%%% mode: latex
%%% TeX-master: "m"
%%% End: 

\end{document}